\newcommand{\eps}{\epsilon}
\newcommand{\al}{\alpha}
\newcommand{\f}{\frac}
\newcommand{\tr}{\text{trace}}
\renewcommand{\th}{\theta}
\newcommand{\ol}{\overline}
\newcommand{\rar}{\rightarrow}
\newcommand{\N}{\mathbb{N}}
\newcommand{\R}{\mathbb{R}}
\newcommand{\bi}{\mathbbm{1}}
\definecolor{purple}{RGB}{128,0,128}
\def\qed{\vbox{\hrule\hbox{\vrule\kern3pt\vbox{\kern6pt}\kern3pt\vrule}\hrule}}
\def\eps{\epsilon}
\newcommand{\SDPOPT}{\text{OPT}_{\text{SDP}}}
\newcommand{\TSPOPT}{\text{OPT}_{\text{TSP}}}
\newtheorem{thm}{Theorem}[section]
\newtheorem{lm}[thm]{Lemma}
\newtheorem{cor}[thm]{Corollary}
\newtheorem{prop}[thm]{Proposition}
\newtheorem{cm}[thm]{Claim}
\newtheorem*{thm*}{Theorem}
\newtheorem*{lm*}{Lemma}
\newtheorem*{prop*}{Proposition}
\newtheorem{ques}[thm]{Question}
\theoremstyle{definition}
\newtheorem{exam}[thm]{Example}
\newcommand{\footremember}[2]{%
    \footnote{#2}
    \newcounter{#1}
    \setcounter{#1}{\value{footnote}}%
}
\newcommand{\footrecall}[1]{%
    \footnotemark[\value{#1}]%
    }
\newtheoremstyle%
 {Aside}%
 {}{}%
 {\color{purple}\itshape}
 {}%
 {\color{purple}\bfseries}%
 {\color{purple}.}%
 { }{}
\theoremstyle{Aside}
\newcommand{\clonelabel}[2]{\@bsphack
  \expandafter\ifx\csname r@#2\endcsname\relax
  \else\protected@write\@auxout{}{\string\newlabel{#1}%
    {\csname r@#2\endcsname}}%
  \fi
  \expandafter\ifx\csname r@#2@cref\endcsname\relax
  \else\protected@write\@auxout{}{\string\newlabel{#1@cref}%
    {\csname r@#2@cref\endcsname}}%
  \fi
  \@esphack}
\title{Semidefinite Programming Relaxations of the Traveling Salesman Problem and Their Integrality Gaps}
\author{Samuel C. Gutekunst\footremember{Cornell}{Operations Research and Information Engineering, Cornell University} and David P. Williamson\footrecall{Cornell} }
\date{}
\begin{document}
\maketitle

\begin{abstract}
The traveling salesman problem (TSP) is a fundamental problem in combinatorial optimization.  Several semidefinite programming relaxations have been proposed recently that exploit a variety of mathematical structures including, e.g., algebraic connectivity, permutation matrices, and association schemes.  The main results of this paper are twofold.  First,  de Klerk and Sotirov \cite{Klerk12b} present an SDP based on permutation matrices and symmetry reduction; they show that it is incomparable to the subtour elimination linear program, but generally dominates it on small instances.  We provide a family of \emph{simplicial TSP instances} that shows that the integrality gap of this SDP is   unbounded.  Second, we show that these simplicial TSP instances imply  the unbounded integrality gap of every SDP relaxation of the TSP mentioned in the survey on SDP relaxations of the TSP in Section 2 of Sotirov \cite{Sot12}.  In contrast, the subtour LP performs perfectly on simplicial instances.  The simplicial instances thus form a natural litmus test for future SDP relaxations of the TSP.
\end{abstract}

\section{Introduction} 

In this paper, we consider a relaxation of the traveling salesman problem (TSP) based on semidefinite programs.  The TSP is a fundamental problem in combinatorial optimization, combinatorics, and theoretical computer science.  An input consists of a set $[n]:=\{1, 2, 3, ..., n\}$ of $n$ cities and, for each pair of cities $i, j \in [n]$, an associated cost or distance $d_{ij}\geq 0$ reflecting the cost or distance of traveling from city $i$ to city $j$.  Throughout this paper, we assume that the edge costs $d_{ij}$ are \emph{symmetric} (so that $d_{ij}=d_{ji}$ for all $i, j\in [n]$) and \emph{metric} (so that $d_{ij}\leq d_{ik}+d_{kj}$ for all $i, j, k\in [n]$).  The TSP is then to find a minimum-cost tour visiting each city exactly once.  Treating the cities as vertices of the complete, undirected graph $K_n,$ and treating an edge $\{i, j\}$ of $K_n$ as having cost $d_{ij}$, the TSP is equivalently to find a minimum-cost Hamiltonian cycle on $K_n.$

The TSP (with the implicit assumptions that the edge costs are metric and symmetric) is a canonical NP-hard problem; finding a polynomial-time approximation algorithm with as strong a performance guarantee as possible remains a major open question.  Currently it is known to be NP-hard to approximate TSP solutions in polynomial time to within any constant factor  $\alpha<\f{123}{122}$ (see Karpinski,  Lampis,  and Schmied \cite{Karp15}).  In contrast, the strongest positive performance guarantee dates back more than 40 years: the Christofides-Serdyukov  algorithm \cite{Chr76,Serd78} finds a Hamiltonian cycle in polynomial time that is at most  a factor of $\f{3}{2}$ away from the optimal TSP solution.

One powerful technique for analyzing TSP approximation algorithms is to \emph{relax} the discrete set of Hamiltonian cycles.  The  prototypical example is the subtour elimination linear program (also referred to as the Dantzig-Fulkerson-Johnson relaxation \cite{Dan54} and the Held-Karp bound \cite{Held70}, and which we will refer to as the subtour LP).  The subtour LP is a \emph{relaxation} of the TSP because 1) every Hamiltonian cycle has a corresponding feasible solution to the subtour LP, and 2) the value of the subtour LP for such a feasible solution equals the cost of the corresponding Hamiltonian cycle.  As a result, the optimal value of the subtour LP is a lower bound on the optimal solution to the TSP.  Wolsey \cite{Wol80},   Cunningham \cite{Cun86}, and Shmoys and Williamson \cite{Shm90}
 show that the Christofides-Serdyukov algorithm produces a (not-necessarily optimal) Hamiltonian cycle that is within a factor of $\f{3}{2}$ of the optimal value of the subtour LP.  Combining these two observations shows that the Christofides-Serdyukov algorithm satisfies the following chain of inequalities.
\begin{align*}
\text{Optimal TSP solution } &\leq \text{ Cost of cycle produced by Christofides-Serdyukov algorithm}\\
&\leq \f{3}{2} \text{ Optimal value of subtour LP} \\
&\leq \f{3}{2} \text{ Optimal TSP solution.}
\end{align*}
Hence, the Christofides-Serdyukov algorithm is a $\f{3}{2}$-approximation algorithm for the TSP.  Moreover, the \emph{integrality gap} of the subtour LP, which measures the worst-case performance of a relaxation relative to the TSP, is at most $\f{3}{2}$: for any instance, the ratio of the optimal TSP solution to the optimal value of the subtour LP cannot be more than $\f{3}{2}$.  Note that, if the subtour LP did not have a constant-factor integrality gap, it would not be possible to use the LP as above to show that a TSP algorithm was a constant-factor approximation algorithm.  Goemans \cite{Goe95b} conjectured that the integrality gap of the subtour LP is $\f{4}{3},$ though the $\f{3}{2}$ bound of  Wolsey \cite{Wol80},   Cunningham \cite{Cun86}, and Shmoys and Williamson \cite{Shm90} remains state-of-the-art.

More recently, several TSP relaxations based on semidefinite programs (SDPs) have been proposed; see Section 2 of Sotirov \cite{Sot12} for a short  survey.  Cvetkovi{\'c}, {\v{C}}angalovi{\'c}, and Kova{\v{c}}evi{\'c}-Vuj{\v{c}}i{\'c} \cite{Cvet99} gave a relaxation based on adjacency matrices and algebraic connectivity.  De Klerk, Pasechnik, and Sotirov \cite{Klerk08} introduced a relaxation based on the theory of association schemes  (see also de Klerk, de Oliveira Filho, and Pasechnik \cite{Klerk12}). Zhao,  Karisch, Rendl, 
and Wolkowicz \cite{Zhao98} introduce a relaxation to the more general Quadratic Assignment Problem (QAP), a special case of which is the TSP.  Their relaxation is based on properties of permutation matrices; de Klerk et al.\ \cite{Klerk08} show the optimal value of their SDP coincides with the optimal value of the SDP introduced by Zhao et al.\ \cite{Zhao98} when specialized to the TSP.   Sotirov \cite{Sot12} summarizes two equivalent interpretations of this latter SDP relaxation of the QAP: First, it is equivalent to a similar SDP relaxation of the QAP also based on permutation matrices from Povh and Rendl \cite{Povh09} (with equivalence shown in  Povh and Rendl \cite{Povh09}).  Second, it is equivalent to applying the $N^+$ lift-and-project operator of Lov{\'a}sz and Schrijver \cite{Lov91} to a QAP polytope; this equivalence is shown in Burer and Vandenbussche \cite{Bur06} and Povh and Rendl \cite{Povh09}.   
Anstreicher \cite{Ans00} gives  another SDP relaxation of the QAP.  When specialized to the TSP, it is equivalent to the projected eigenvalue bound of Hadley,  Rendl, and Wolkowicz \cite{Had92}. 

Most recently, de Klerk and Sotirov \cite{Klerk12b} apply symmetry reduction to strengthen the QAP relaxation of Povh and Rendl \cite{Povh09} in certain cases.  This strengthened QAP relaxation can be applied to the TSP and  de Klerk and Sotirov \cite{Klerk12b} evaluate the strengthened QAP relaxation on the 24 classes of facet defining inequalities for the TSP on 8 vertices.  While solving the SDP is computationally demanding, their results are promising: the strengthened QAP performs at least as well as the subtour LP on all but one of the 24 instances and generally outperforms the subtour LP.

Although computationally involved, these SDPs are based on a broad variety of rich combinatorial structures which has led to several theoretical results. Goemans and Rendl \cite{Goe00} show that the SDP relaxation of Cvetkovi{\'c} et al.\ \cite{Cvet99} is weaker than the subtour LP in the following sense: Any solution to the subtour LP implies an equivalent feasible solution for the SDP of Cvetkovi{\'c} et al.\ of the same cost.  Both optimization problems are minimization problems and the SDP is optimizing over a broader search set, so the optimal  value for the SDP of Cvetkovi{\'c} et al.\ cannot be closer than the optimal value  of the subtour LP to the optimal TSP cost.    However, de Klerk et al.\ \cite{Klerk08} show the exciting result that their SDP is incomparable with the subtour LP: there are instances where the optimal value of their SDP is closer to the optimal TSP cost than the optimal value of the subtour LP, and vice versa.  Moreover, de Klerk et al.\ \cite{Klerk08} show that their SDP is stronger than the earlier SDP of Cvetkovi{\'c} et al.\  \cite{Cvet99}: any feasible solution for the SDP of  de Klerk et al.\ \cite{Klerk08} implies a feasible solution for the SDP of Cvetkovi{\'c} et al.\  \cite{Cvet99} of the same cost.

Gutekunst and Williamson \cite{Gut17} show that the SDP relaxations of both  Cvetkovi{\'c} et al.\ \cite{Cvet99} and de Klerk et al.\ \cite{Klerk08}, however, have unbounded integrality gaps.  Moreover, they have a counterintuitive non-monotonicity property: in certain instances it is possible to artificially add vertices (in a way that preserves metric and symmetric edge costs) and arbitrarily lower the cost of the optimal solution to the SDP.  Such a property contrasts with both the TSP and subtour LP, which are known to be monotonic (see Section \ref{UBGap}).

The main results of this paper are to complete the characterization of integrality gaps of every SDP relaxation of the TSP mentioned in  Sotirov \cite{Sot12} and to introduce a family of instances that implies every such SDP has an unbounded integrality gap and is non-monotonic.  To do so, we show that the SDP of de Klerk and Sotirov \cite{Klerk12b}  has an unbounded integrality gap (and in turn, has the same non-monotonicity property of Cvetkovi{\'c} et al.\ \cite{Cvet99} and de Klerk et al.\ \cite{Klerk08}).  Doing so further implies that no SDP relaxation of the TSP surveyed in Sotirov \cite{Sot12}  can be used in proving approximation guarantees on TSP algorithms in the same way as the subtour LP.   The family of instances we use generalizes those from
Gutekunst and Williamson \cite{Gut17} to a new family of TSP instances which we call \emph{simplicial TSP instances}, as they can be viewed as placing groups of vertices at the extreme points of a simplex.  This family forms an intriguing set of test instances for SDP relaxations of the TSP: the vertices of the TSP instance can be embedded into $\R^d$ (for a $d$ that grows as the integrality gap increases), the integrality gap of the subtour LP on these instances is 1 (i.e. the optimal value of the subtour LP on any instance in this family matches the cost of the TSP solution), but these instances imply an unbounded integrality gap for at least the following SDPs:
\begin{itemize}
\item The SDP TSP relaxation of Cvetkovi{\'c}, {\v{C}}angalovi{\'c}, and Kova{\v{c}}evi{\'c}-Vuj{\v{c}}i{\'c} \cite{Cvet99} (based on algebraic connectivity). 
\item The SDP TSP relaxation of de Klerk, Pasechnik, and Sotirov \cite{Klerk08} based on the theory of association schemes  (see also de Klerk, de Oliveira Filho, and Pasechnik \cite{Klerk12}). 
\item The SDP QAP relaxation of Zhao,  Karisch, Rendl, and Wolkowicz \cite{Zhao98}, when specialized to the TSP (based on permutation matrices, and shown by de Klerk et al.\ \cite{Klerk08} to have an optimal value coinciding with the SDP of de Klerk et al.\ \cite{Klerk08}).
\item The SDP QAP relaxation of Povh and Rendl \cite{Povh09}, when specialized to the TSP (based on permutation matrices, and shown by Povh and Rendl \cite{Povh09} to be equivalent to the SDP of  Zhao et al.\ \cite{Zhao98}).
\item The SDP QAP relaxation of  de Klerk and Sotirov \cite{Klerk12b}, when specialized to the TSP (obtained by performing symmetry reduction on the SDP of  de Klerk et al.\ \cite{Klerk08}).
\item The SDP QAP relaxation of Anstreicher \cite{Ans00}, when specialized to the TSP (equivalent to the projected eigenvalue bound of Hadley,  Rendl, and Wolkowicz \cite{Had92}).
\end{itemize}

In Section \ref{SDPTSP}, we introduce the notation we will use and provide background on the SDP of de Klerk and Sotirov \cite{Klerk12b}.  In Section \ref{Idea}, we show how the instances of Gutekunst and Williamson \cite{Gut17} directly imply that the integrality gap of the SDP of  Povh and Rendl \cite{Povh09} is unbounded, but only that the integrality gap of  the SDP of de Klerk and Sotirov \cite{Klerk12b} is at least 2.  This result motivates the generalized simplicial instances we formalize in Section \ref{UBGap}. In Section \ref{UBGap}, we also prove our main result.  We specifically show that for $z\in\N,$ the simplicial instances in $\R^{2z-1}$ imply an integrality gap for the SDP  of de Klerk and Sotirov \cite{Klerk12b} of at least $z$.  We do so by finding a family of instances where the SDP cost can be bounded by $2+\eps$ for any $\eps>0$ (with sufficiently large $n$), while the TSP cost grows arbitrarily.   As a corollary, we show that the SDP of de Klerk and Sotirov \cite{Klerk12b} is again non-monotonic.  We conclude in Section \ref{sec:conc} by discussing two open questions about SDP-based relaxations of the TSP.

\section{SDP Relaxations of the TSP}\label{SDPTSP}
\subsection{Notation and Preliminaries}

Throughout this paper, we use $J_m$ and $I_m$ to respectively denote the all-ones and identity matrix in $\R^{m\times m}$.   We let $e^{(m)}_i$ denote the $i$th standard basis vector in $\R^m$ and let $e^{(m)}:=e_1^{(m)}+\cdots + e_m^{(m)}$ denote the all-ones vector in $\R^m.$  We let $E_{ij}^{(m)}:=e_i^{(m)}\left(e_j^{(m)}\right)^T$ denote the $m\times m$ matrix with a one in the $i, j$th position and zeros elsewhere.

We let $\mathbb{S}^{m\times m}$ denote the set of real, symmetric matrices in $\R^{m\times m}$ and let $\Pi_m$ be the set of $m\times m$ permutation matrices. $Y\succeq 0$ denotes that $Y$ is a positive semidefinite matrix; for $Y\in \mathbb{S}^{m\times m},$ $Y\succeq 0$ means that all eigenvalues of $Y$ are nonnegative.  $Y\geq 0$ denotes that $Y$ is a nonnegative matrix entrywise. 

We will use several matrix operations from linear algebra.  For a matrix $M\in \R^{m\times m}$ and  $S_1, S_2 \subset[m],$ let $M[S_1, S_2]$ denote the submatrix of $M$ with rows in $S_1$ and columns in $S_2.$  When $S_1=S_2,$ we simplify notation and write $M[S_1]:=M[S_1, S_1].$  For a vector $x\in \R^m$, let $Diag(x)$ be the $m\times m$ diagonal matrix whose $i, i$-th entry is $x_i.$  For a matrix $Y$, let $\tr(Y)$ denote the trace of $Y$, i.e., the sum of its diagonal entries.  For $A, B \in  \mathbb{S}^{m\times m}$, note that $$\tr(AB)=\sum_{i=1}^m \sum_{j=1}^m A_{ij}B_{ij}=\langle A, B\rangle,$$ the matrix inner product.  For an $m\times m$ matrix $Y$, let $vec(Y)$ be the vector in $\R^{m^2}$ that stacks the columns of $Y$.  Finally, for matrices $A, B$ of arbitrary dimension, $A\otimes B$ denotes the Kronecker product of $A$ and $B$.  The Kronecker product has particularly nice spectral properties. If $A\in \R^{a\times a}$ and $B\in \R^{b \times b}$  have respective eigenvalues $\lambda_i(A)$ and $\lambda_j(B)$ for $i=1, ..., a$ and $j=1, ..., b$, the $ab$ eigenvalues of $A\otimes B$ are the $ab$ products $\lambda_i(A)\lambda_j(B).$ See, e.g., Theorem 4.2.12 in Chapter 4 of Horn and Johnson  \cite{Hor91}.

We will regularly work with \emph{circulant} matrices.  A circulant matrix in $\R^{m\times m}$  has the form $$\begin{pmatrix} c_0 & c_1 & c_2 & c_3 & \cdots & c_{m-1} \\ c_{m-1} & c_0 & c_1 & c_2 & \cdots & c_{m-2} \\ c_{m-2} & c_{m-1} & c_0 & c_1 & \ddots & c_{m-3} \\ \vdots & \vdots & \vdots & \vdots & \ddots & \vdots \\ c_1 & c_2 & c_3 & c_4 & \cdots & c_0 \end{pmatrix} = \left(c_{(t-s) \text{ mod } m}\right)_{s, t=1}^m.$$ Such a matrix is symmetric if $c_i=c_{m-i}$ for $i=1, ..., m-1$.   We use a standard basis of symmetric circulant matrices in $\R^{m\times m}$ consisting of matrices $C_0^{(m)}, C_1^{(m)}, ..., C_d^{(m)}$ where, for $i=1, ..., d-1$, $C_i^{(m)}$ is the symmetric circulant $m\times m$ matrix with $c_i=c_{m-i}=1$ and $c_j=0$ otherwise.  We set $C_0^{m}=2I$ and, when $m$ is even, set $C_{m/2}^{(m)}$ to be the matrix where $c_{m/2}=2$ and $c_j=0$ otherwise.  Note that, following these definitions, each $C_i^{(m)}$ has all rows sum to 2.  When clear from context, we will suppress the dependence on the dimension and use, e.g., $C_i$ rather than $C_i^{(m)}.$ We use $\mathcal{A}(G)$ to denote the adjacency matrix of a graph $G$ and $\mathcal{C}_m$ to denote the cycle graph on $m$ vertices in lexicographic order.  Note that $\mathcal{A}(\mathcal{C}_m)=C_1^{(m)}.$

Throughout the remainder of this paper we will take $n$ to be the number of cities/vertices of a TSP instance.  We will assume that $n$ is even and let $d= \f{n}{2}.$  We reserve $D$ as the matrix of edge costs or distances (so that for $1\leq i\leq n$ and $1\leq j\leq n$, $D_{ii}=0$ and $D_{ij}=d_{ij}$ is the cost of traveling between cities $i$ and $j$).  We implicitly assume that the edge costs $d_{ij}$ defining $D$ are symmetric and metric. 

Let $\SDPOPT(D)$ and $\TSPOPT(D)$ respectively denote the optimal value to an SDP relaxation and the cost of an optimal TSP solution for a given matrix of costs $D.$   If $\mathcal{D}$ is the set of all cost matrices corresponding to metric and symmetric TSP instances, the \emph{integrality gap} of the SDP is $$\sup_{D\in\mathcal{D}} \f{\TSPOPT(D)}{\SDPOPT(D)}.$$ This ratio is bounded below by 1 for any SDP that is a relaxation of the TSP (as the optimal TSP solution has a corresponding feasible SDP solution of cost $\TSPOPT(D)$).  The ratio $\f{\TSPOPT(D)}{\SDPOPT(D)}$ for any TSP cost matrix $D\in\mathcal{D}$ provides a lower bound on the integrality gap.

\subsection{SDP Relaxations}
 The QAP was introduced in Koopmans and Beckmann \cite{Koop57}.  Let matrices $A,B\in \mathbb{S}^{n\times n}$ respectively encode the pairwise distances between a set of $n$ locations and the pairwise flows between $n$ different facilities.  Let $C=(c_{ij})$ be a matrix of placement costs where $c_{ij}$ denotes the cost of placing facility $i$ at location $j$.   The QAP is to assign each facility to a distinct location so as to minimize total cost, where the cost depends quadratically on flows and distances and linearly on placement costs: 
 $$\min\{\tr\left((AXB+C)X^T\right): X\in \Pi_n\},$$ where $A, B\in \mathbb{S}^{n\times n}$ and $C\in \R^{n\times n}.$  The TSP for $n$ cities is obtained in the special case where $B=D, A=\f{1}{2}\mathcal{A}(\mathcal{C}_n)=\f{1}{2}C_1^{(n)},$ and $C=0$ (the all zeros matrix).  In this case, using the cyclic and linear properties of trace, the objective function becomes $$\tr\left(\f{1}{2}C_1^{(n)}XDX^T\right) = \f{1}{2}\langle X^TC_1^{(n)}X, D\rangle,$$ so that the permutation matrix $X$ can interpreted as finding the optimal tour and relabeling the vertices according to the order of that tour; $ X^TC_1^{(n)}X$ is then the adjacency matrix of the relabeled tour.

The SDP QAP relaxation of   Povh and Rendl \cite{Povh09}, when specialized to the TSP, is:
 \begin{equation}\label{eq:QAPToRed}
\begin{array}{l l l}
\min & \f{1}{2} \tr\left(\left(D\otimes C_1^{(n)} \right)Y\right) & \\
\text{subject to} & \tr((I_{n}\otimes E_{jj}^{(n)})Y)=1 & j=1, ..., n\\
 & \tr((E_{jj}^{(n)}\otimes I_{n})Y)=1 & j=1, ..., n\\
& \tr((I_{n}\otimes(J_{n}-I_{n})+(J_{n}- I_{n})\otimes I_{n})Y)=0 & \\
& \tr(J_{n^2}Y)=n^2 & \\
& Y \geq 0, Y\succeq 0, Y \in \mathbb{S}^{n^2\times n^2}.& 
\end{array} \end{equation}
That this is a valid relaxation can be seen by setting $Y=vec(X)vec(X)^T$ for any permutation matrix $X\in \Pi_n.$ Then letting $X_{:i}=X[[n], \{i\}]$ denote the $i$th column of $X$, 
$$vec(X)=\begin{pmatrix} X_{:1} \\ X_{:2} \\ \vdots \\ X_{:n}\end{pmatrix}$$  so that $Y$ has the block structure
$$Y=\begin{pmatrix} Y^{(11)} & Y^{(12)} & \cdots & Y^{(1n)} \\Y^{(21)} & Y^{(22)} & \cdots & Y^{(2n)} \\
\vdots & \vdots & \ddots & \vdots \\ Y^{(n1)} & Y^{(n2)} & \cdots & Y^{(nn)}  \end{pmatrix}$$ where $Y^{(ij)}=X_{:i}X_{:j}^T\in \R^{n\times n}.$
If $X$ is a permutation matrix, each $Y^{(ij)}=E_{st}^{(n)}$ for some $s, t$.  Specifically, $Y^{(ij)}=E_{st}^{(n)}$ for the $s, t$ such that $Xe_i=e_s$ and $Xe_j=e_t.$  That the constraints hold then readily follows: Each $Y^{(ii)}=E_{ss}^{(n)}$ for some $s$ (so that  $\tr((E_{ii}^{(n)}\otimes I_{n})Y)=1$) and because $X$ is a permutation matrix, $Y^{(ii)}\neq Y^{(kk)}$ for $i\neq k$ (so that $ \tr((I_{n}\otimes E_{jj}^{(n)})Y)=1$).  Similarly, each $Y^{(ii)}$ is diagonal while each $Y^{(ij)}$ with $i\neq j$ has zero diagonal (so $ \tr((I_{n}\otimes(J_{n}-I_{n})+(J_{n}- I_{n})\otimes I_{n})Y)=0$) and since each of the $n^2$ blocks $Y^{(ij)}$ consists of a single 1 and zeros elsewhere, the sum of all entries in $Y$ is $n^2$, i.e.\ $ \tr(J_{n^2}Y)=n^2.$  The factored form $Y=vec(X)vec(X)^T$ implies that $Y$ is a rank-1 positive semidefinite matrix and, since $Y$ is 0-1, $Y\geq 0.$  Finally,  $(Y^{(ij)})^T=(X_{:i}X_{:j}^T)^T=X_{:j}X_{:i}^T=Y^{(ji)}$ so that $Y$ is symmetric.   As we will show explicitly in Section \ref{Idea}, results from Gutekunst and Williamson \cite{Gut17} and  de Klerk et al.\ \cite{Klerk08} imply that SDP  (\ref{eq:QAPToRed})  has an unbounded integrality gap.

In de Klerk and Sotirov \cite{Klerk12b}, symmetry reduction is applied to  SDP  (\ref{eq:QAPToRed}) to obtain the following SDP relaxation of the TSP:
 \begin{equation}\label{eq:RedQAP}  \clonelabel{eq:RedQAP1up}{eq:RedQAP}
\hspace{-5mm} \begin{array}{l l l}
\min & \tr\left((D[\beta]\otimes \f{1}{2}C_1^{(n)}[\alpha]+Diag(\ol{c}))Y\right) & \\
\text{subject to} & \tr\left(\left(I_{n-1}\otimes E_{jj}^{(n-1)}\right)Y\right)=1 & j=1, ..., n-1\\
 & \tr\left(\left(E_{jj}^{(n-1)}\otimes I_{n-1}\right)Y\right)=1 & j=1, ..., n-1\\
& \tr\left(\left(I_{n-1}\otimes\left(J_{n-1}-I_{n-1}\right)+\left(J_{n-1}- I_{n-1}\right)\otimes I_{n-1}\right)Y\right)=0 & \\
& \tr((J_{n-1}\otimes J_{n-1})Y)=(n-1)^2 & \\
& Y \geq 0, Y\succeq 0, Y \in \mathbb{S}^{(n-1)^2\times (n-1)^2}.& 
\end{array} \end{equation}
where  $s, r\in [n],$ $\alpha=[n]\backslash r$ and $\beta=[n]\backslash s,$  and   $\ol{c}=vec(C_1[\alpha, \{r\}]D[\{s\}, \beta]).$ All that matters for the TSP is  the order in which the vertices are visited in the optimal tour; there are  $\f{(n-1)!}{2}$ distinct tours, but $n!$ permutation matrices.  One way to interpret the symmetry reduction intuitively is that, without loss of generality, one may assume an optimal solution $X\in \Pi_n$ is such that $X_{r, s}=1$ (i.e., that the $s$th vertex visited is vertex $r$): an optimal tour includes vertex $r$ and can be reindexed (without changing the cost of the tour) so that vertex $r$ is the $s$th vertex visited.  Making this assumption leaves the $n-1$ vertices $\alpha$ to be visited at the $n-1$ positions $\beta$, so one can effectively write a QAP for $X[\alpha, \beta]\in\Pi_{n-1}$ (the submatrix of $X$ for which entries are not fixed by $X_{r, s}=1$).  Following through this process obtains a QAP on $(n-1)$ vertices; appropriately adjusting the objective function and writing the SDP relaxation of the QAP on $(n-1)$ vertices yields the SDP relaxation  (\ref{eq:RedQAP}).  See de Klerk and Sotirov \cite{Klerk12b} for full details.

We will analyze the integrality gap of   SDP  (\ref{eq:RedQAP})  in Section \ref{Idea} (showing it is at least 2) and Section \ref{UBGap} (showing it is unbounded).  In both cases, we will find a set of instances on $n$ vertices and an associated feasible $Y\in \mathbb{S}^{n^2\times n^2}$ that together imply an unbounded integrality gap for  SDP  (\ref{eq:QAPToRed}).  We note that, up to dimension, the constraints of SDPs    (\ref{eq:QAPToRed}) and   (\ref{eq:RedQAP}) are exactly the same.  Any feasible $Y$ for an instance on $n$ vertices of SDP (\ref{eq:QAPToRed}) thus gives a feasible solution to SDP  (\ref{eq:RedQAP}), but to instances on $n+1$ vertices.  After finding an instance of $n$ vertices and feasible $Y$ for the SDP  (\ref{eq:QAPToRed}), our approach will be to add a single vertex and then use the same $Y$ to bound the integrality gap of SDP   (\ref{eq:RedQAP}) (accounting for the adjusted objective function).  It will thus be convenient to view  SDP  (\ref{eq:RedQAP}) as an SDP for $n+1$ vertex instances (with $n$ still even).  The SDP then becomes
 \begin{equation*}
\begin{array}{l l l}
\min & \tr\left((D[\beta]\otimes \f{1}{2}C_1^{(n+1)}[\alpha]+Diag(\ol{c}))Y\right) & \\
\text{subject to} & \tr((I_{n}\otimes E_{jj}^{(n)})Y)=1 & j=1, ..., n\\
 & \tr\left(\left(E_{jj}^{(n)}\otimes I_{n}\right)Y\right)=1 & j=1, ..., n\\
& \tr\left(\left(I_{n}\otimes\left(J_{n}-I_{n}\right)+\left(J_{n}- I_{n}\right)\otimes I_{n}\right)Y\right)=0 & \\
& \tr((J_{n}\otimes J_{n})Y)=n^2 & \\
& Y \geq 0, Y\succeq 0,  Y \in \mathbb{S}^{n^2\times n^2}.& 
\end{array} \end{equation*}
where  $s, r\in [n+1]$ and $\alpha=[n+1]\backslash r$ and $\beta=[n+1]\backslash s,$  and where  $\ol{c}=vec(C_1[\alpha, \{r\}]D[\{s\}, \beta]).$  We will also refer to this form of the SDP on $n+1$ vertices as SDP  (\ref{eq:RedQAP}).

\section{An Integrality Gap of At Least Two}\label{Idea}
We first show how results from Gutekunst and Williamson \cite{Gut17} and  de Klerk et al.\ \cite{Klerk08}  imply that the integrality gap of SDP (\ref{eq:QAPToRed}) is unbounded while the integrality gap of SDP  (\ref{eq:RedQAP}) is at least 2.   Theorem 3 of de Klerk et al.\ \cite{Klerk08} shows that the optimal value of SDP  (\ref{eq:QAPToRed}) coincides with the optimal value of an SDP relaxation of the TSP based on association schemes; Gutekunst and Williamson \cite{Gut17} give a family of instances that show this latter SDP has an unbounded integrality gap. By combining the same family of instances as Gutekunst and Williamson \cite{Gut17} and the relationship between the SDPs from Theorem 3 of  de Klerk et al.\ \cite{Klerk08}, we obtain that the integrality gap of SDP (\ref{eq:QAPToRed}) is unbounded while the integrality gap of SDP  (\ref{eq:RedQAP}) is at least 2.

\begin{thm}\label{thm:feas} Define
$$a_i = \f{2}{n-2} \left(\cos\left(\f{\pi i}{d}\right)+1\right), \hspace{5mm} i=1, ..., d,$$ and $$ b_i  = \begin{cases} \f{2}{n}\left(1-\cos\left(\f{\pi i}{d}\right)\right), & \text { if }i=1, ..., d-1 \\ \f{2}{n}, & \text{ if } i=d.\end{cases}$$  
Let $A=\sum_{i=1}^d a_iC_i$ and $B=\sum_{i=1}^d b_i C_i.$  Then $$Y=\f{1}{2n}\left(\left(I_2\otimes J_d - I_n\right) \otimes A + (J_2-I_2)\otimes J_d\otimes B + 2I_n\otimes I_n\right)$$ is feasible for SDP (\ref{eq:QAPToRed}).
\end{thm}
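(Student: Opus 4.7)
My plan is to verify the five conditions that define feasibility for SDP~(\ref{eq:QAPToRed}): symmetry, entrywise nonnegativity, positive semidefiniteness, and the four linear equality constraints. Symmetry is immediate, since every $C_i$ is symmetric by construction, so $A$ and $B$ are symmetric, and Kronecker products preserve symmetry. Entrywise nonnegativity is also straightforward: the identities $1\pm\cos(\pi i/d)\geq 0$ give $a_i,b_i\geq 0$ and hence $A,B\geq 0$, while $I_2\otimes J_d - I_n$ has entries only in $\{0,1\}$ (its $-1$ diagonal cancels against that of $I_2\otimes J_d$), so each of the three tensor terms in $Y$ is entrywise nonnegative.

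For the linear constraints, I would compute the diagonal blocks of $Y$, viewing $Y\in\R^{n^2\times n^2}$ as an $n\times n$ array of $n\times n$ blocks $Y^{(ij)}$. Both $I_2\otimes J_d - I_n$ and $(J_2-I_2)\otimes J_d$ have vanishing diagonal, so only $2I_n\otimes I_n$ contributes to the diagonal blocks, giving $Y^{(ii)}=\tfrac{1}{n}I_n$ for every $i$. The three trace constraints involving $E_{jj}^{(n)}$ and $J_n-I_n$ then fall out immediately from the shape of the diagonal blocks. For $\tr(J_{n^2}Y)=n^2$, I would use the identity $\mathbf{1}^T(M\otimes N)\mathbf{1}=(\mathbf{1}^TM\mathbf{1})(\mathbf{1}^TN\mathbf{1})$ together with the closed-form identities $\sum_{i=1}^d a_i=\sum_{i=1}^d b_i=1$, each of which reduces to the standard cosine sum $\sum_{i=1}^{d-1}\cos(\pi i/d)=0$.

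The main obstacle is showing $Y\succeq 0$, and here my plan is to exploit a simultaneous diagonalization. Writing $\R^{n^2}\cong\R^2\otimes\R^d\otimes\R^n$, consider the orthogonal change of basis $P_2\otimes H\otimes F$, where $P_2$ is the normalized $2\times 2$ Hadamard matrix (diagonalizing $I_2$, $J_2$, and $J_2-I_2$), $H$ is any orthogonal matrix whose first column is proportional to $e^{(d)}$ (simultaneously diagonalizing $I_d$, $J_d$, and $J_d-I_d$), and $F$ is the Fourier matrix on $\R^n$ (simultaneously diagonalizing every symmetric circulant, hence both $A$ and $B$). In this basis $Y$ becomes diagonal, with eigenvalues indexed by $(\epsilon,j,k)\in\{\pm 1\}\times[d]\times[n]$. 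They fall into three families: for $j\geq 2$ (either sign of $\epsilon$) the eigenvalue equals $\tfrac{1}{2n}(2-\lambda^A_k)$, while for $j=1$ the two choices of $\epsilon$ give $\tfrac{1}{2n}\bigl((d-1)\lambda^A_k\pm d\lambda^B_k+2\bigr)$, where $\lambda^A_k=2\sum_{i=1}^d a_i\cos(\pi ik/d)$ and $\lambda^B_k$ is analogous.

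The remaining work is a trigonometric computation. Using product-to-sum identities, I expect $\lambda^A_k$ and $\lambda^B_k$ to collapse into combinations of $S(m):=\sum_{i=1}^d\cos(\pi im/d)$, which takes only the values $d$, $0$, and $-1$ depending on the parity and divisibility of $m$. The coefficients $a_i,b_i$ are tuned so that $\lambda^A_0=\lambda^B_0=2$, making several of the eigenvalues hit zero at $k=0$ (and fixing the all-ones mode at eigenvalue $1$, as required by the constraint $\tr(J_{n^2}Y)=n^2$); for $k\geq 1$ the same identities should confirm $\lambda^A_k\leq 2$ and that the two $j=1$ families are bounded below by zero, yielding $Y\succeq 0$ and completing the proof of feasibility.
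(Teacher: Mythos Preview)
Your proposal is correct and follows essentially the same route as the paper: the linear constraints and nonnegativity are handled via the block structure and the signs of $a_i,b_i$, and positive semidefiniteness is established by simultaneously diagonalizing the three tensor factors so that the eigenvalues of $2nY$ become $(d-1)\lambda_k^A\pm d\lambda_k^B+2$ and $2-\lambda_k^A$, which are then checked nonnegative using the closed forms of $a^{(k)}$ and the relation between $a^{(k)}$ and $b^{(k)}$. The only cosmetic difference is that you package the diagonalization as a single orthogonal change of basis $P_2\otimes H\otimes F$, whereas the paper writes down the joint eigenvectors $u_i\otimes v_j$ explicitly; the resulting eigenvalue list and the trigonometric verification (via $S(m)$, i.e., Lagrange's identity and product-to-sum) are identical.
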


Note that $Y$ is an $n^2 \times n^2$ symmetric matrix that can be partitioned into blocks of size $n\times n$.  The $n$ blocks on the diagonal are scaled copies of the identity matrix.  The other blocks are all scaled copies of $A$ or $B$. For example, when $n=6$ we have $$Y=\f{1}{2n} \begin{pmatrix} 2I & A & A & B & B & B \\A& 2I & A & B & B & B \\A& A & 2I & B & B & B \\ B & B &B & 2I & A &A  \\ B & B &B & A & 2I &A  \\ B & B &B & A & A & 2I  \end{pmatrix}.$$

To Prove Theorem \ref{thm:feas}, we will make use of the following facts from  Gutekunst and Williamson \cite{Gut17}.  For completion, we sketch their proofs in the Appendix.  For $k=0,..., n-1,$ define $$a^{(k)} =\sum_{i=1}^d    \cos\left(\f{2\pi ik}{n}\right)  a_i, \hspace{5mm} b^{(k)} =\sum_{i=1}^d    \cos\left(\f{2\pi ik}{n}\right)  b_i.$$  Note that $$ \cos\left(\f{2\pi i(n-k)}{n}\right)= \cos\left(2\pi i - \f{2\pi ik}{n}\right)= \cos\left(\f{2\pi ik}{n}\right),$$ so that $a^{(k)}=a^{(n-k)}$ and $b^{(k)}=b^{(n-k)}.$
  
\begin{prop}\label{prop:recall}
\ \begin{enumerate}
\item $\sum_{i=1}^d a_i  = \sum_{i=1}^d b_i = 1.$  Equivalently, $a^{(0)}=b^{(0)}=1.$
\item $b^{(k)} =-\left(1-\f{2}{n}\right)a^{(k)} - \f{2}{n}.$
\item For $k=1, ..., d,$ $$a^{(k)} = \begin{cases} \f{d-2}{n-2}, & \text{ if } k=1 \\ -\f{2}{n-2}, & \text{ otherwise}. \end{cases}$$ 
\item $b_1 \leq \f{4\pi^2}{n^3}.$
\end{enumerate}
\end{prop}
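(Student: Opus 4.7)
The plan is to reduce all four parts to evaluations of the single family of trigonometric sums
$$S(m) := \sum_{i=1}^{d} \cos\!\left(\frac{\pi i m}{d}\right), \qquad m \in \Z,$$
together with an elementary Taylor bound on $1-\cos$. A standard Dirichlet-kernel (or geometric-series) computation gives $S(0)=d$, $S(m)=-1$ when $m$ is odd with $1\le m\le 2d-1$, and $S(m)=0$ when $m$ is even with $1\le m\le 2d-1$. I would derive this once up front; everything else is bookkeeping, using $n=2d$ and $\cos(2\pi ik/n)=\cos(\pi ik/d)$.

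For item 1, direct substitution gives $\sum_{i=1}^d a_i=\frac{2}{n-2}(d+S(1))=\frac{2(d-1)}{n-2}=1$. Splitting off the $i=d$ term (whose $b$-value differs from the pattern $\frac{2}{n}(1-\cos(\pi i/d))$ by $-\frac{2}{n}$) similarly yields $\sum_{i=1}^d b_i=\frac{2d}{n}=1$. This is equivalent to $a^{(0)}=b^{(0)}=1$.

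For item 2, which is meaningful only for $k\ge 1$ (since $k=0$ is handled by item 1), the key observation is the pointwise identity $b_i+\frac{n-2}{n}a_i=\frac{4}{n}$ valid for $i=1,\ldots,d-1$, with a correction of $-\frac{2}{n}$ at $i=d$ from the custom definition of $b_d$. Multiplying by $\cos(\pi ik/d)$ and summing from $i=1$ to $d$, the left-hand side becomes $b^{(k)}+(1-\frac{2}{n})a^{(k)}$; the right-hand side simplifies via $S(k)\in\{0,-1\}$ and $\cos(\pi k)=(-1)^k$, and both the even-$k$ and odd-$k$ cases collapse to $-\frac{2}{n}$, which is exactly item 2.

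For item 3, I would expand
$$a^{(k)}=\frac{2}{n-2}\sum_{i=1}^{d}\cos\!\left(\frac{\pi ik}{d}\right)\!\left(\cos\!\left(\frac{\pi i}{d}\right)+1\right)$$
and apply $\cos A\cos B=\tfrac{1}{2}(\cos(A-B)+\cos(A+B))$ to rewrite it as $\frac{1}{n-2}(S(k-1)+S(k+1))+\frac{2}{n-2}S(k)$. For $k=1$ this gives $\frac{1}{n-2}(d+0)+\frac{2}{n-2}(-1)=\frac{d-2}{n-2}$, while for $2\le k\le d$ the triple $(k-1,k,k+1)$ is either (odd, even, odd) or (even, odd, even), so $(S(k-1),S(k),S(k+1))$ equals $(-1,0,-1)$ or $(0,-1,0)$ and both reduce to $-\frac{2}{n-2}$. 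Item 4 is immediate from $1-\cos x\le x^2/2$: $b_1=\frac{2}{n}(1-\cos(\pi/d))\le \frac{2}{n}\cdot\frac{\pi^2}{2d^2}=\frac{4\pi^2}{n^3}$ since $d=n/2$. The only obstacle worth pausing on is the boundary correction from the nonstandard definitions of $C_d^{(n)}$ and $b_d$, which must be carried carefully through item 2; otherwise the calculation is routine.
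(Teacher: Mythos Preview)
Your proposal is correct and follows essentially the same approach as the paper: both reduce everything to the single trigonometric-sum identity you call $S(m)$ (the paper's Lemma~\ref{lem:trig}, written there as $\sum_{j=1}^{d}\cos(2\pi jk/n)=\tfrac{-1+(-1)^k}{2}$), apply the product-to-sum formula for item~3, and use the quadratic bound $1-\cos x\le x^2/2$ for item~4. Your handling of the $i=d$ boundary correction in items~1 and~2 is exactly the delicate point the paper also tracks, and your casework on the parity of $(k-1,k,k+1)$ is just a repackaging of the paper's direct substitution of $\tfrac{-1+(-1)^{k\pm1}}{2}$.
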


To show that $Y$ is positive semidefinite, we will also use properties of circulant matrices.
\begin{lm}[Gray \cite{Gray06}]\label{lm:circ}
The circulant matrix $M= \left(m_{(t-s) \text{ mod } n}\right)_{s, t=1}^n$ has eigenvalues
 $$\lambda_t(M) = \begin{cases} \sum_{s=0}^{n-1} m_s e^{-\f{2\pi st  \sqrt{-1}}{n}}, & \text{ if } t=1, ..., n-1 \\  \sum_{s=0}^{n-1} m_s, & \text{ if }t=n.\end{cases}$$
 The eigenvector corresponding to eigenvalue $\lambda_t$ is $v_t = (1, w_t, w_t^2, ..., w_t^{n-1})$ for $t=0, 1, ..., n-1$ with $w_t=e^{-\f{2\pi t \sqrt{-1}}{n}}.$  
\end{lm}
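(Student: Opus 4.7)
The plan is a direct verification: fix $t\in\{0,1,\dots,n-1\}$, define $\lambda_t:=\sum_{k=0}^{n-1} m_k w_t^k$ with $w_t=e^{-2\pi t\sqrt{-1}/n}$, and show by a one-line index manipulation that $Mv_t=\lambda_t v_t$. Since the $n$ roots of unity $w_0,w_1,\dots,w_{n-1}$ are distinct, the vectors $v_0,\dots,v_{n-1}$ form the columns of an $n\times n$ Vandermonde matrix and are therefore linearly independent, so this exhausts the spectrum.

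The key computation is the following. By the circulant form, the $(s,j)$ entry of $M$ is $m_{(j-s)\bmod n}$ and $(v_t)_j=w_t^{j-1}$, so
\begin{equation*}
(Mv_t)_s \;=\; \sum_{j=1}^{n} m_{(j-s)\bmod n}\, w_t^{j-1}.
\end{equation*}
I would substitute $k=(j-s)\bmod n$; as $j$ runs over $1,\dots,n$, $k$ runs over $0,\dots,n-1$. For each such $k$, the corresponding $j$ is either $s+k$ or $s+k-n$, and in both cases $w_t^{j-1}=w_t^{s-1}\cdot w_t^{k}$ because $w_t^n=e^{-2\pi t\sqrt{-1}}=1$. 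Hence
\begin{equation*}
(Mv_t)_s \;=\; w_t^{s-1}\sum_{k=0}^{n-1} m_k w_t^{k} \;=\; \lambda_t\,(v_t)_s.
\end{equation*}
This verifies both the eigenvalue and the eigenvector formulas for $t=0,1,\dots,n-1$. For the indexing in the statement, note that taking $t=n$ gives $w_n=1$ and $\lambda_n=\sum_{s=0}^{n-1} m_s$, which is exactly the second branch of the claimed formula; for $t\in\{1,\dots,n-1\}$, the expression $\sum_{s=0}^{n-1} m_s e^{-2\pi st\sqrt{-1}/n}$ coincides with $\lambda_t$ as defined above (the dummy variable name is irrelevant).

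There is essentially no hard step: the whole argument is a change of summation variable together with $w_t^n=1$, plus the Vandermonde observation for completeness. The only pitfall to be careful about is the bookkeeping of the two cases in the substitution $k=(j-s)\bmod n$ (whether $j=s+k$ or $j=s+k-n$) and the fact that the paper uses $1$-indexed rows/columns while the definition of $w_t$ and the sum defining $\lambda_t$ are most naturally written with $0$-indexed exponents. I would resolve this once at the outset so the display equation above is clean, and I would explicitly mention the Vandermonde argument so the lemma is seen to give a full diagonalization, which is what is actually needed when we later apply it to verify positive semidefiniteness in Theorem~\ref{thm:feas}.
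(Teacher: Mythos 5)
Your verification is correct and is the standard direct argument; the paper itself does not prove this lemma but cites it from Gray \cite{Gray06}, so there is no authorial proof to compare against. Your reindexing $k=(j-s)\bmod n$ with the wrap-around absorbed by $w_t^n=1$, together with the Vandermonde observation for linear independence and the remark that $t=n$ and $t=0$ yield the same eigenpair $v_0=e^{(n)}$, $\lambda_0=\sum_s m_s$, settles everything the lemma claims.
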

\noindent To avoid ambiguity with index variables and imaginary numbers, we explicitly write $\sqrt{-1}$ whenever working with imaginary numbers and reserve $i$ and $j$ as index variables.

We first show that $Y$ satisfies each of the constraints of SDP  (\ref{eq:QAPToRed}).

\begin{cm}\label{cm1}
$\tr\left(\left(I_{n}\otimes E_{jj}\right)Y\right)=1$ and $ \tr\left(\left(E_{jj}\otimes I_{n}\right)Y\right)=1$ for $j=1, ..., n.$
\end{cm}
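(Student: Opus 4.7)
My plan is to split $Y$ into its three Kronecker summands and apply the identity $\tr((P\otimes Q)(R\otimes S)) = \tr(PR)\tr(QS)$ to each. Writing
$$Y = \f{1}{2n}\left(P_1\otimes A + P_2 \otimes B + 2\, I_n\otimes I_n\right)$$
with $P_1 := I_2\otimes J_d - I_n$ and $P_2 := (J_2 - I_2)\otimes J_d$, each summand is already a Kronecker product of two $n\times n$ matrices, so the identity applies directly to each when it is multiplied by $I_n\otimes E_{jj}$ or $E_{jj}\otimes I_n$.

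Applying the identity summand by summand, the first constraint becomes
$$\tr\left((I_n\otimes E_{jj})Y\right) = \f{1}{2n}\Bigl(\tr(P_1)\, A_{jj} + \tr(P_2)\, B_{jj} + 2\tr(I_n)(I_n)_{jj}\Bigr),$$
and the second becomes
$$\tr\left((E_{jj}\otimes I_n)Y\right) = \f{1}{2n}\Bigl((P_1)_{jj}\tr(A) + (P_2)_{jj}\tr(B) + 2(I_n)_{jj}\tr(I_n)\Bigr).$$

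So it remains to check that $\tr(P_1)$, $\tr(P_2)$, $(P_1)_{jj}$, and $(P_2)_{jj}$ all vanish. Using $n = 2d$, one gets $\tr(P_1) = \tr(I_2)\tr(J_d) - \tr(I_n) = 2d - n = 0$ and $\tr(P_2) = \tr(J_2 - I_2)\tr(J_d) = 0\cdot d = 0$. For the diagonal entries, both $I_2\otimes J_d$ and $I_n$ have constantly-$1$ diagonals, so $(P_1)_{jj} = 1-1 = 0$; and every diagonal entry of $(J_2 - I_2)\otimes J_d$ is a product of a zero entry of $J_2 - I_2$ with a diagonal entry of $J_d$, so $(P_2)_{jj} = 0$. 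Only the third summand survives, giving $\f{1}{2n}\cdot 2\cdot n\cdot 1 = 1$ in both cases.

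There is really no hard step here: $Y$ is engineered precisely so that the ``block-pattern'' factors $P_1$ and $P_2$ have both zero trace and zero diagonal, which forces these constraint computations to reduce entirely to the contribution of the identity component $\f{1}{n}I_n$ sitting on the diagonal blocks of $Y$. The only bookkeeping item to keep in mind is that one does not need to further Kronecker-factor $E_{jj}^{(n)}$: the elementary identities $\tr(E_{jj}M)=M_{jj} = \tr(ME_{jj})$ applied on each $n\times n$ factor are all that is required.
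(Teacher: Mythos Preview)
Your argument is correct. The paper's own proof is considerably more direct: it simply observes that both $I_n\otimes E_{jj}$ and $E_{jj}\otimes I_n$ are diagonal $0$--$1$ matrices with exactly $n$ ones, and that every diagonal entry of $Y$ equals $\tfrac{1}{n}$, so each trace is $n\cdot\tfrac{1}{n}=1$. In other words, the paper uses that $A$ and $B$ have zero diagonal (so the diagonal of $Y$ is entirely carried by the $\tfrac{1}{n}I_n$ blocks), whereas you instead use that the pattern factors $P_1,P_2$ have zero trace and zero diagonal. Both facts isolate the same surviving $\tfrac{1}{n}I_n\otimes I_n$ term, just from opposite sides of the Kronecker product. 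Your Kronecker-trace decomposition is a bit more machinery than the claim needs, but it has the virtue of being completely mechanical and would transfer verbatim to other $Y$'s built from the same template; the paper's two-line version is quicker here because the constraint matrices are themselves diagonal.
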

\begin{proof}
Each of the $n^2$ diagonal entries of $Y$ is $\f{1}{n}.$  Both $I_{n}\otimes E_{jj}$ and $E_{jj}\otimes I_{n}$ are diagonal matrices with exactly $n$ nonzero entries, all of which are equal to 1. \hfill
\end{proof}

\begin{cm}
$\tr((I_{n}\otimes(J_{n}-I_{n})+(J_{n}- I_{n})\otimes I_{n})Y)=0.$
\end{cm}
\begin{proof}The $n\times n$ blocks of $Y$ have  sparsity patterns that imply this constraint:  $I$ is a diagonal matrix, while $A$ and $B$ have zero diagonal (there is no coefficient of $C_0$ in the sums defining $A$ and $B$).   \hfill
\end{proof}

\begin{cm}\label{cm:easysum}
$ \tr(J_{n^2}Y)=n^2.$
\end{cm}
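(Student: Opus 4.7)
The plan is to observe that $\tr(J_{n^2}Y)$ is simply the sum of all entries of $Y$, which can be expressed as $e^T Y e$ where $e=e^{(n^2)}=e^{(n)}\otimes e^{(n)}$. Using the mixed-product property of the Kronecker product, for any matrices $U,V\in\R^{n\times n}$ we have
$$e^T (U\otimes V) e = \left(e^{(n)}\otimes e^{(n)}\right)^T (U\otimes V)\left(e^{(n)}\otimes e^{(n)}\right) = \left(\left(e^{(n)}\right)^T U e^{(n)}\right)\left(\left(e^{(n)}\right)^T V e^{(n)}\right),$$
so the total entry sum of a Kronecker product factors as the product of entry sums. This reduces the claim to a short arithmetic check on each of the three summands in the definition of $Y$.

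First I would compute the entry sum of $A$ (and similarly $B$). Each standard circulant $C_i^{(n)}$ was defined in Section \ref{SDPTSP} to have every row summing to $2$, so since $A=\sum_{i=1}^d a_i C_i$, the row sums of $A$ are $2\sum_{i=1}^d a_i$. By item 1 of Proposition \ref{prop:recall}, $\sum_{i=1}^d a_i = 1$, giving row sums $2$ and total entry sum $2n$. The identical argument (using $\sum b_i = 1$) gives entry sum $2n$ for $B$.

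Then I would apply the Kronecker factorization term by term. The matrix $I_2\otimes J_d - I_n$ has entry sum $2d^2 - n$, so the first summand $(I_2\otimes J_d - I_n)\otimes A$ contributes $(2d^2-n)(2n)$ to the total; the second summand $(J_2-I_2)\otimes J_d \otimes B$ contributes $(2)(d^2)(2n)$; and $2I_n\otimes I_n$ contributes $2n^2$. Adding these up and then multiplying by the overall scaling factor $\f{1}{2n}$, substituting $n=2d$, the three contributions combine to exactly $n^2$. This is a routine computation and there is no substantive obstacle here — the real content was loaded into Proposition \ref{prop:recall}(1), which ensures the row sums of $A$ and $B$ are both $2$; without that identity the three summands would not telescope cleanly.
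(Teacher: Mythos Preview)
Your proof is correct and uses essentially the same content as the paper's: both reduce to the fact (from Proposition~\ref{prop:recall}(1)) that the entry sums of $A$ and $B$ are each $2n$. The only difference is bookkeeping --- the paper observes that each of the $n^2$ minor $n\times n$ blocks ($\f{1}{2n}A$, $\f{1}{2n}B$, or $\f{1}{n}I_n$) has entry sum exactly $1$, whereas you compute the entry sum term-by-term in the Kronecker decomposition via the mixed-product identity; these are two equivalent ways of organizing the same arithmetic.
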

\begin{proof}
To show this constraint holds, we note that $Y$ is expressed in terms of $n^2$ blocks, each of size $n\times n$ and each of which is either  $\f{1}{2n}A, \f{1}{2n}B,$ or $\f{1}{n}I.$  In the first row of $A$, we have that $A_{1, i}=a_i=a_{n-i}$ for $i=1, ..., d-1,$ while $A_{1, d}=2a_d$.  Since $A$ is circulant, each of the $n$ rows of $A$ then sums to $2\sum_{i=1}^d a_i.$  Using the first result of Proposition \ref{prop:recall}, the entries in $A$ sum to $2n$ so that $\tr(J_n\f{1}{2n}A)=1.$  Analogously, $\tr(J_n \f{1}{2n}B)=\tr(J_n\f{1}{n}I_n)=1$. That is, each of the $n^2$ blocks defining $Y$ sums to 1 so that, when we sum all the entries in $Y$, $$\tr(J_{n^2}Y)=n^2.$$
\hfill
\end{proof}

\begin{cm}\label{cm:nn1}
$Y\geq 0.$
\end{cm}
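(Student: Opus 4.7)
The plan is to exploit the block structure of $Y$ together with the fact that the symmetric circulant basis matrices $C_i^{(n)}$ are all $0$/$1$/$2$ matrices, hence nonnegative. Viewing $Y$ as an $n\times n$ grid of $n\times n$ blocks, I would first read off from the formula $Y=\frac{1}{2n}\left((I_2\otimes J_d - I_n)\otimes A + (J_2-I_2)\otimes J_d\otimes B + 2I_n\otimes I_n\right)$ that each block is exactly one of three things: the diagonal contribution $\frac{1}{n}I_n$ (coming from the $2I_n\otimes I_n$ summand, since $(I_2\otimes J_d - I_n)$ has zero diagonal), a ``within-group'' off-diagonal block equal to $\frac{1}{2n}A$ (coming from the indicator $I_2\otimes J_d - I_n$), or an ``across-group'' off-diagonal block equal to $\frac{1}{2n}B$ (coming from the indicator $(J_2-I_2)\otimes J_d$). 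The $n=6$ display preceding the claim makes this pattern transparent.

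Thus the claim reduces to verifying that $A\geq 0$ and $B\geq 0$ entrywise, since $\frac{1}{n}I_n\geq 0$ trivially. Because each $C_i^{(n)}$ is an entrywise nonnegative matrix, it suffices to check that every coefficient $a_i$ and $b_i$ is nonnegative. For $A$, the coefficient $a_i=\frac{2}{n-2}\bigl(\cos(\pi i/d)+1\bigr)$ is nonnegative because $\cos(\pi i/d)\geq -1$ for all $i$. For $B$, when $1\leq i\leq d-1$ we have $b_i=\frac{2}{n}\bigl(1-\cos(\pi i/d)\bigr)\geq 0$ since $\cos(\pi i/d)\leq 1$, and the separately defined $b_d=\frac{2}{n}>0$.

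There is no real obstacle here; the claim is essentially a bookkeeping exercise once one notes that the circulant basis was chosen to consist of nonnegative matrices and that the trigonometric coefficients defining $A$ and $B$ were specifically chosen so as to be nonnegative. I would therefore present the proof as: (i) a one-line block-by-block identification of the entries of $Y$ using the Kronecker expansion above; (ii) the trivial nonnegativity check for the coefficients $a_i$ and $b_i$; (iii) the conclusion that every entry of $Y$ is a nonnegative multiple of a nonnegative quantity, so $Y\geq 0$.
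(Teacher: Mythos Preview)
Your proposal is correct and follows exactly the same approach as the paper, which simply notes that the claim holds because $a_i, b_i \geq 0$; you have merely spelled out the block identification and the cosine bounds that the paper leaves implicit.
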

\begin{proof}
The penultimate constraint follows because $a_i, b_i\geq 0.$ \hfill
\end{proof}

To show feasibility, we thus must finally show
\begin{cm}\label{cm2}
$Y\succeq 0.$
\end{cm}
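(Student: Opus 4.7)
The plan is to diagonalize $Y$ simultaneously in its two tensor factors and then reduce positive semidefiniteness to a finite list of explicit eigenvalue computations.

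First, I would rewrite $Y$ in the form
\[ Y=\frac{1}{2n}\Bigl( \bigl(I_2\otimes(J_d-I_d)\bigr)\otimes A \;+\; \bigl((J_2-I_2)\otimes J_d\bigr)\otimes B \;+\; 2 I_n \otimes I_n\Bigr), \]
so that the first tensor factor (outer $n\times n$ structure) is built from $I_2, J_d, I_d, J_2$ while the inner factor is built from the symmetric circulants $A$, $B$, and $I_n$. Since $A$ and $B$ are real symmetric circulants, Lemma \ref{lm:circ} gives a common eigenbasis $v_0,v_1,\ldots,v_{n-1}$, and a direct computation shows $\lambda_t(A)=2a^{(t)}$ and $\lambda_t(B)=2b^{(t)}$ (using that $\lambda_t(C_i)=2\cos(2\pi i t/n)$ for $i=0,\ldots,d$). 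The vectors $v_t$ are also trivially eigenvectors of $I_n$. Consequently, for every $t$ the subspace $\mathbb{C}^n\otimes v_t$ is $Y$-invariant, and $Y$ acts on it as the $n\times n$ matrix
\[ P_t \;=\; \frac{a^{(t)}}{n}\, I_2\otimes(J_d-I_d) \;+\; \frac{b^{(t)}}{n}\,(J_2-I_2)\otimes J_d \;+\; \frac{1}{n}\,I_2\otimes I_d. \]
Showing $Y\succeq 0$ reduces to showing $P_t\succeq 0$ for each $t\in\{0,1,\ldots,n-1\}$.

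Next I would diagonalize each $P_t$ using the common eigenstructure of $J_d$, $J_d-I_d$, $J_2$, and $J_2-I_2$. Take the basis of $\mathbb{R}^d$ consisting of the all-ones vector $e^{(d)}$ and any orthonormal basis of its orthogonal complement, and pair it with the basis $\{(1,1)^T,(1,-1)^T\}$ of $\mathbb{R}^2$. On any vector of the form $\xi\otimes w$ with $w\perp e^{(d)}$, both $J_d$ and the off-diagonal coupling vanish, and $P_t$ acts as the scalar $\tfrac{1-a^{(t)}}{n}$ (multiplicity $2(d-1)$). On $(1,1)^T\otimes e^{(d)}$ and $(1,-1)^T\otimes e^{(d)}$, using $J_d e^{(d)}=d\,e^{(d)}$ and $(J_d-I_d)e^{(d)}=(d-1)e^{(d)}$, one obtains eigenvalues $\tfrac{1+(d-1)a^{(t)}+db^{(t)}}{n}$ and $\tfrac{1+(d-1)a^{(t)}-db^{(t)}}{n}$ respectively. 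Thus $P_t$ is diagonalized with three explicit eigenvalues, and it suffices to verify each is $\ge 0$ for all $t$.

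Finally, I would verify nonnegativity using Proposition \ref{prop:recall}. For $\tfrac{1-a^{(t)}}{n}$, plug in $a^{(0)}=1$ (gives $0$), $a^{(1)}=a^{(n-1)}=\tfrac{d-2}{n-2}$ (gives $\tfrac{d}{n(n-2)}>0$), and $a^{(t)}=-\tfrac{2}{n-2}$ otherwise (gives $\tfrac{1}{n-2}>0$). For the two ``coupled'' eigenvalues, the key move is to substitute $b^{(t)}=-(1-\tfrac{2}{n})a^{(t)}-\tfrac{2}{n}$ from Proposition \ref{prop:recall}(2) and use $d=n/2$. After cancellation one gets $\tfrac{1+(d-1)a^{(t)}+db^{(t)}}{n}\equiv 0$ identically in $t$, and $\tfrac{1+(d-1)a^{(t)}-db^{(t)}}{n}=\tfrac{2+(n-2)a^{(t)}}{n}$, which evaluates on the three cases above to $1$, $\tfrac{1}{2}$, and $0$ respectively. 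All eigenvalues being nonnegative, $P_t\succeq 0$ for every $t$, and hence $Y\succeq 0$.

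The main obstacle is the bookkeeping of the simultaneous diagonalization across the two Kronecker factors and keeping the complex Fourier basis of the circulants compatible with the real PSD question (handled by noting that $Y$ is real symmetric, so nonnegativity of all eigenvalues over $\mathbb{C}$ is equivalent to real positive semidefiniteness). The striking algebraic cancellation that makes one family of eigenvalues identically zero is exactly what forces the particular choice of $b^{(t)}$ in the statement, so the identity $b^{(t)}=-(1-2/n)a^{(t)}-2/n$ is the heart of the argument.
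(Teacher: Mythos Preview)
Your proposal is correct and follows essentially the same approach as the paper: simultaneously diagonalize the outer and inner Kronecker factors to obtain the three eigenvalue families $\tfrac{1-a^{(t)}}{n}$, $\tfrac{1+(d-1)a^{(t)}+db^{(t)}}{n}$, $\tfrac{1+(d-1)a^{(t)}-db^{(t)}}{n}$, then use Proposition~\ref{prop:recall} to check nonnegativity. The only difference is organizational: you first restrict to the $v_t$-eigenspace and then diagonalize the outer $n\times n$ block $P_t$, whereas the paper constructs the full tensor eigenbasis $u_i\otimes v_j$ at once; the resulting eigenvalues are identical.

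One small slip to fix: the identity $b^{(t)}=-(1-\tfrac{2}{n})a^{(t)}-\tfrac{2}{n}$ from Proposition~\ref{prop:recall}(2) holds only for $t\neq 0$ (at $t=0$ one has $a^{(0)}=b^{(0)}=1$, and the right-hand side equals $-1$). So your claim that $\tfrac{1+(d-1)a^{(t)}+db^{(t)}}{n}\equiv 0$ ``identically in $t$'' fails at $t=0$, where this eigenvalue is actually $\tfrac{2d}{n}=1$ and the other coupled eigenvalue is $0$; the values are swapped relative to what your substituted formulas give. Since both are nonnegative, the conclusion $Y\succeq 0$ is unaffected, but you should treat the $t=0$ case separately, as the paper does.
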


\begin{proof} 
From Lemma \ref{lm:circ}, we have that the eigenvectors of a circulant matrix with first row $(m_0, m_1, ..., m_{n-1})$ are of the form $v_j = (1, w_j, w_j^2, ..., w_j^{n-1})$ for $j=0, 1, ..., n-1$ with $w_j=e^{-\f{2\pi j \sqrt{-1}}{n}}.$  The  eigenvalue corresponding to $v_j$ is $$\lambda_j=m_0+m_{1}w_j+m_{2}w_j^2+\cdots m_{n-1}w_j^{n-1}.$$ Hence, $v_j$ is a simultaneous eigenvector of $A, B,$ and $I_n.$  Let $\lambda_j^A$ and $\lambda_j^B$ respectively indicate the eigenvalues of $A$ and $B$ corresponding to $v_j$.  Note that
\begin{align*}
w_j^i + w_j^{n-i} & = e^{-\f{2\pi j i \sqrt{-1}}{n}}+e^{-\f{2\pi j (n-i)\sqrt{-1}}{n}}\\
& =\cos\left(-\f{2\pi i j}{n}\right)+\sqrt{-1} \sin\left(-\f{2\pi i j}{n}\right) + \cos\left(-\f{2\pi (n-i) j}{n}\right)+ \sqrt{-1} \sin\left(-\f{2\pi (n-i) j}{n}\right)\\
&= 2 \cos\left(\f{2\pi i j}{n}\right).
\end{align*}
 Then since $A_{1, i}=A_{1, n-i}=a_i$ for $i=1, ..., d-1$ and $A_{1, d}=2a_d$,  $$\lambda_j^A =\left( \sum_{i=1}^{d-1}a_i(w_j^i+w_j^{n-i})\right)+2a_dw_j^d=2\sum_{i=1}^{d} a_i \cos\left(\f{2\pi i j }{n}\right) = 2a^{(j)}.$$ 
 Similarly, $\lambda_j^B=2b^{(j)}.$

Recall that $$Y=\f{1}{2n}\left(\left(\left(I_2\otimes J_d\right) - I_n\right) \otimes A + (J_2-I_2)\otimes J_d\otimes B + 2I_n\otimes I_n\right).$$  By finding a shared set of eigenvectors of $(I_2\otimes J_d)-I_n,$ $(J_2-I_2)\otimes J_d$ and $2I_n,$ we can use properties of the Kronecker product to explicitly compute the eigenvalues of $Y$ as a function of the $a^{(j)}$ and $b^{(j)}$; the remaining results from Proposition \ref{prop:recall} will suffice to show that they are all nonnegative. We will use the following as our shared set of eigenvectors\footnote{
To find this shared set of eigenvectors, note that $J_m=e^{(m)}(e^{(m)})^T$ is a rank-1 matrix and that $$J_me^{(m)}=e^{(m)}(e^{(m)})^Te^{(m)}=me^{(m)}.$$ The only nonzero eigenvector of $J_m$ is thus $e^{(m)}$ with corresponding eigenvalue $m$. All other eigenvectors have corresponding eigenvalue zero, and a convenient basis for them is $e^{(m)}_1-e^{(m)}_i$ for $i=2, ..., m.$  Then $$J_m\left(e^{(m)}_1-e^{(m)}_i\right)=e^{(m)}-e^{(m)}=0\left(e^{(m)}_1-e^{(m)}_i\right).$$  The vectors $e^{(m)}, e^{(m)}_1-e^{(m)}_2, ..., e^{(m)}_1-e^{(m)}_m$ are linearly independent and so form an eigenbasis for $J_m$.  To extend these to find eigenvectors of $(I_2\otimes J_d)-I_n,$ $(J_2-I_2)\otimes J_d$ and $2I_n,$ we use 1) the spectral properties of Kronecker products noted in the introduction, and 2) the fact that if $v$ is an eigenvector of a matrix $M$ with corresponding eigenvalue $\lambda$ then $v$ is also an eigenvector of $M-I$ with corresponding eigenvalue $\lambda-1:$ $$(M-I)v=\lambda v-v=(\lambda-1)v.$$
 }.  We first have $u_1=e^{(n)}$ and $u_2 = (e_1^{(2)}-e_2^{(2)})\otimes e^{(d)}.$
The remaining  $u_3, ..., u_n$ are the $n-2$ vectors of the form $e^{(2)}\otimes (e_1^{(d)}-e_i^{(d)})$ and $(e^{(2)}_1-e^{(2)}_2)\otimes (e_1^{(d)}-e_i^{(d)})$  for $i=2, ..., d$ (in any order).    
Denote by $\mu_j^A$ and $\mu_j^B$ the respective eigenvalues of $(I_2\otimes J_d)-I_n$ and $(J_2-I_2)\otimes J_d$ associated 
with $u_j$.  Then $$\mu_1^A = d-1, \hspace{5mm} \mu_2^A= d-1,  \hspace{5mm} \mu_j^A= -1 \text{ otherwise}$$ and $$\mu_1^B = d, \hspace{5mm} \mu_2^B= -d,  \hspace{5mm} \mu_j^B= 0 \text{ otherwise}.$$

Now note that 
\begin{align*}&
\left(\left(\left(I_2\otimes J_d\right)- I_n\right) \otimes A + (J_2-I_2)\otimes J_d\otimes B + 2I_n\otimes I_n\right)(u_i\otimes v_j)=(\mu_i^A\lambda_j^A+\mu_i^B\lambda_j^B+2)(u_i\otimes v_j),
\end{align*}
so that the eigenvalues of $2nY$ must be  the values of $(\mu_i^A\lambda_j^A+\mu_i^B\lambda_j^B+2)$ over $i=1, ..., n$ and $j=0, ..., n-1$.  That is, 
 $$2(\mu_i^Aa^{(j)}+\mu_i^Bb^{(j)}+1), \hspace{5mm} i=1, ..., n; j=0, ..., n-1.$$  To show $Y\succeq 0,$ it suffices to show that these are all nonnegative. For $j=0,$ we have that $a^{(0)}=b^{(0)}=1$ and thus that 
$$\mu_i^Aa^{(0)}+\mu_i^Bb^{(0)}+1 = \begin{cases} d-1+d+1=2d\geq 0, & i=1 \\ d-1-d+1 = 0 \geq 0, & i=2 \\ -1 + 0 + 1 = 0 \geq 0, & i\geq 3. \end{cases}$$

Otherwise, for $j\neq 0,$ we have 
\begin{align*}
\mu_i^Aa^{(j)}+\mu_i^Bb^{(j)}+1&=  \begin{cases} (d-1)a^{(j)}+db^{(j)} + 1, & i=1 \\(d-1)a^{(j)}-db^{(j)}+1, & i=2 \\-a^{(j)}+1, & i\geq3. \end{cases}
\intertext{Using  $b^{(k)} =-\left(1-\f{2}{n}\right)a^{(k)} - \f{2}{n}$ from  Proposition \ref{prop:recall}:}
&= \begin{cases} 0, & i=1 \\ (n-2)a^{(j)}+2 & i=2 \\-a^{(j)}+1, & i\geq 3. \end{cases}
\end{align*}
 By the final case of Proposition \ref{prop:recall}, for $j=1, ..., d,$ $$a^{(j)} = \begin{cases} \f{d-2}{n-2}, & \text{ if } j=1 \\ -\f{2}{n-2}, & \text{ otherwise}. \end{cases}$$ Hence, the eigenvalues are all nonnegative and $Y\succeq 0.$
\hfill
\end{proof}

\begin{proof}[Proof (of Theorem \ref{thm:feas})]
Claims \ref{cm1} to \ref{cm2} imply $Y$ is feasible for SDP (\ref{eq:QAPToRed}). \hfill
\end{proof}
  
\begin{cor}\label{cor:UB1}
The integrality gap of SDP  (\ref{eq:QAPToRed}) is unbounded.
\end{cor}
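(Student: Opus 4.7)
The plan is to exhibit, for every even $n$, a metric symmetric cost matrix $D$ such that $\TSPOPT(D) \ge 2$ while the feasible $Y$ of Theorem \ref{thm:feas} achieves objective value $O(1/n)$ in SDP (\ref{eq:QAPToRed}); the ratio then tends to infinity. Since $Y$ is organized around the partition of $[n]$ into two halves $\{1,\dots,d\}$ and $\{d+1,\dots,n\}$, I would choose the matching bipartite instance from Gutekunst and Williamson \cite{Gut17}: $D_{ij}=0$ when $i,j$ lie in the same half and $D_{ij}=1$ otherwise. This $D$ is clearly symmetric and metric, and any Hamiltonian cycle on $[n]$ must cross between the two halves a positive and even number of times, hence at least twice, so $\TSPOPT(D) \ge 2$.

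For the SDP upper bound I would expand $\f{1}{2}\tr((D \otimes C_1^{(n)})Y)$ as $\f{1}{2}\sum_{j,j'} D_{jj'}\,\langle C_1^{(n)}, Y^{(jj')}\rangle$ using the block structure of $Y$. All within-half blocks are killed by $D_{jj'}=0$, leaving only the $2d^2$ ordered across-half block positions, for which $Y^{(jj')}=\f{1}{2n}B$. Since distinct matrices in the circulant basis $\{C_0^{(n)},\dots,C_d^{(n)}\}$ have disjoint off-diagonal supports, the expansion $B=\sum_{i=1}^d b_i C_i^{(n)}$ collapses to $\langle C_1^{(n)}, B\rangle = 2nb_1$. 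Summing then gives $\SDPOPT(D) \le d^2 b_1$, and plugging in $b_1 \le 4\pi^2/n^3$ from Proposition \ref{prop:recall}(4) with $d=n/2$ yields $\SDPOPT(D) \le \pi^2/n$. Thus $\TSPOPT(D)/\SDPOPT(D) \ge 2n/\pi^2$, which is unbounded in $n$.

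The genuine work has already been done in Theorem \ref{thm:feas}; what remains is bookkeeping. The only steps deserving any care are matching the block indexing of $Y^{(jj')}$ with the vertex labeling of $D$ in the Kronecker expansion, and confirming that only the $i=1$ coefficient in the circulant expansion of $B$ survives the inner product with $C_1^{(n)}$. Both are short calculations once the notational conventions are pinned down, and no further structural ideas beyond those in Theorem \ref{thm:feas} and Proposition \ref{prop:recall} are needed.
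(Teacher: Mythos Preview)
Your proposal is correct and follows essentially the same approach as the paper: the same bipartite cut-semimetric instance $D=(J_2-I_2)\otimes J_d$, the same feasible $Y$ from Theorem~\ref{thm:feas}, and the same block-by-block computation yielding $\SDPOPT(D)\le d^2 b_1\le \pi^2/n$ while $\TSPOPT(D)\ge 2$. Your justification that $\langle C_1^{(n)},B\rangle=2nb_1$ via disjoint supports of the $C_i^{(n)}$ is exactly what the paper uses implicitly when writing $\tr(C_1 B)=2nb_1$.
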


 To show that the integrality gap is unbounded, we consider the cost matrix $$D=\begin{pmatrix} 0 & 1 \\ 1 & 0 \end{pmatrix} \otimes J_d$$ used in Gutekunst and Williamson \cite{Gut17}.  This cost matrix is that of a \emph{cut semi-metric}: there are two equally sized groups of vertices $\{1, ..., d\}$ and $\{d+1, ..., n\};$ the cost of traveling between two vertices in the same group is zero, while the cost of traveling between two vertices in different groups (i.e. crossing the cut defined by $\{1, ..., d\}$) is 1.  
 
 \begin{proof}
 
 The integrality gap of  SDP  (\ref{eq:QAPToRed}) is at least $$\f{\TSPOPT(D)}{\SDPOPT(D)}.$$  Note that $\TSPOPT(D)\geq 2$, as a minimum cost Hamiltonian cycle must cross the cut defined by $\{1, ..., d\}$ twice; the tour $1, 2, ..., d, d+1, ..., n, 1$ realizes this cost so that $\TSPOPT(D)= 2.$  We then bound $\SDPOPT(D)$ using $Y$ as a feasible solution to SDP  (\ref{eq:QAPToRed}).  Note that, when computing the cost, we evaluate $ \tr\left(\left(D\otimes C_1 \right)Y\right).$  The $n^2\times n^2$ matrix $D\otimes C_1$ consists of $n\times n$ blocks, either of which is an $n\times n$ block of zeros (exactly where $Y$ has an $\f{1}{2n}A$ block or a $\f{1}{n}I$ block) or a $C_1$ (exactly in the $2d^2$ places where $Y$ has a $\f{1}{2n}B$ block).  Hence:
 \begin{align*} 
\SDPOPT(D) & \leq   \f{1}{2} \tr\left((D\otimes C_1 )Y\right) \\
& = \f{1}{2} 2d^2 \f{1}{2n} \tr(C_1 B) \\
& = \f{d^2}{2n} 2nb_1 \\
&= d^2 b_1 \\
&\leq 4\pi^2 \f{d^2}{n^3},
\end{align*}
using the final result of Proposition \ref{prop:recall}.   Thus $\SDPOPT(D)\leq c\f{1}{n}$ for some constant $c$.  Hence the integrality gap is at least
$$\f{\TSPOPT(D)}{\SDPOPT(D)}\geq \f{2}{c\f{1}{n}} = \f{2}{c}n,$$ which grows without bound.
\hfill
 \end{proof}
 
This instance and feasible solution $Y$ do not, however, show that the integrality gap of  SDP  (\ref{eq:RedQAP}) is unbounded.  Instead, they imply the following.
 
\begin{cor}\label{cor:IG2}
The integrality gap of SDP  (\ref{eq:RedQAP}) is at least 2.
\end{cor}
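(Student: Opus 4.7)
The plan is to lift the feasible $Y$ of Theorem~\ref{thm:feas} from SDP~(\ref{eq:QAPToRed}) on $n$ vertices to SDP~(\ref{eq:RedQAP}) applied to an instance on $n+1$ vertices, exploiting the observation made just before this section that the constraint sets of the two SDPs are identical up to dimension. First I would extend the cut semi-metric $D$ of Corollary~\ref{cor:UB1} to a symmetric metric $D'$ on $n+1$ vertices by placing the new vertex at distance $\f{1}{2}$ from every existing vertex, i.e., $d'_{n+1,j}=\f{1}{2}$ for all $j\in[n]$. A short triangle-inequality check, splitting on whether $n+1$ is one of the three vertices and whether the other two lie in the same half of the cut, confirms that $D'$ is metric.

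Next I would evaluate the objective of SDP~(\ref{eq:RedQAP}) at $Y$ with $s=r=n+1$, so that $\alpha=\beta=[n]$, $D'[\beta]=D$, and $C_1^{(n+1)}[\alpha]$ is the adjacency matrix of the path $P_n = C_1^{(n)} - E_{1,n}^{(n)} - E_{n,1}^{(n)}$. The linear term is $\tr(Diag(\ol{c})Y) = \f{1}{n}\sum_k \ol{c}_k$, because every diagonal entry of $Y$ equals $\f{1}{n}$. With $s=r=n+1$, the column $C_1^{(n+1)}[[n],\{n+1\}]$ is just $e_1^{(n)}+e_n^{(n)}$, so
\[
\ol{c} = vec\!\left((e_1^{(n)}+e_n^{(n)})\,(d'_{n+1,1},\ldots,d'_{n+1,n})\right),
\]
whose entries sum to $2\sum_{j=1}^n d'_{n+1,j}=n$, and the linear term contributes exactly $1$. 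The quadratic term is $\f{1}{2}\tr((D\otimes P_n)Y)$; since $D\otimes P_n$ and $Y$ are entrywise nonnegative and $P_n\leq C_1^{(n)}$ entrywise, this is at most $\f{1}{2}\tr((D\otimes C_1^{(n)})Y) \leq 4\pi^2 d^2/n^3$, by the estimate already carried out in the proof of Corollary~\ref{cor:UB1}. Hence $\SDPOPT(D') \leq 1 + O(1/n)$.

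Finally I would pin down $\TSPOPT(D')=2$: the tour $1,2,\ldots,d,n+1,d+1,\ldots,n,1$ has cost $\f{1}{2}+\f{1}{2}+1=2$, and any Hamilton cycle on $D'$ pays $2\cdot\f{1}{2}=1$ for the two edges incident to $n+1$ plus at least one additional unit because the remaining Hamilton path on $[n]$ must cross the cut $\{1,\ldots,d\}\mid\{d+1,\ldots,n\}$ at least once. Combining, $\TSPOPT(D')/\SDPOPT(D')\geq 2/(1+O(1/n))$, which tends to $2$ as $n\to\infty$, giving an integrality gap of at least $2$ for SDP~(\ref{eq:RedQAP}). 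The main bookkeeping obstacle is identifying the precise form of $\ol{c}$ in the chosen indexing and recognizing that the $Diag(\ol{c})$ trace only sees the diagonal of $Y$; once that is in hand, the remaining estimates reduce to Corollary~\ref{cor:UB1} and Proposition~\ref{prop:recall}.
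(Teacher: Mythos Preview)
Your proposal is correct and follows essentially the same route as the paper: lift the feasible $Y$ of Theorem~\ref{thm:feas} to SDP~(\ref{eq:RedQAP}) on $n+1$ vertices, bound the quadratic trace term by the $O(1/n)$ estimate from Corollary~\ref{cor:UB1}, and show the linear $Diag(\ol c)$ term contributes exactly $1$ because every diagonal entry of $Y$ is $\tfrac{1}{n}$, while $\TSPOPT=2$.

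The only noteworthy difference is cosmetic: the paper appends the $(n{+}1)$st vertex by placing it \emph{inside} the first group (so the new instance is still a pure two-group cut semi-metric) and then deletes $r=s=1$; you instead place the new vertex at distance $\tfrac{1}{2}$ from every old vertex and delete $r=s=n{+}1$. Your choice makes the bookkeeping slightly cleaner, since then $D'[\beta]=D$ on the nose and no relabeling of groups is needed; the paper's choice avoids introducing any new distances and makes the metric check trivial. In both cases $C_1^{(n+1)}[\alpha,\{r\}]=e_1^{(n)}+e_n^{(n)}$ and the row $D'[\{s\},\beta]$ has entries summing to $d$ (in the paper) or $n/2$ (in your version), yielding $\sum_k \ol c_k=n$ and hence $\tr(Diag(\ol c)Y)=1$ either way.
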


We consider an instance of  SDP  (\ref{eq:RedQAP}) on $n+1$ vertices.  This change of bookkeeping implies that the feasible solution $Y$ from Theorem \ref{thm:feas} is feasible for  SDP  (\ref{eq:RedQAP1up}).

\begin{proof}
We consider an instance on $n+1$ vertices with two groups of vertices, $\{1, ..., d, d+1\}$ and $\{d+2, ..., n+1\}.$  As before, we define the cost of traveling between vertices in the same group to be zero and the cost of traveling between vertices in distinct groups to be 1.  By taking $r=s=1$ we have that $$D[\beta] = \begin{pmatrix} 0 & 1 \\ 1 & 0 \end{pmatrix} \otimes J_d$$ and $C_1^{(n+1)}[\alpha] \leq C_1^{(n)}$ entrywise.  As in  Corollary \ref{cor:UB1}, the integrality gap is at least
 $$\f{\TSPOPT(D)}{\SDPOPT(D)},$$
 and again $\TSPOPT(D)=2.$  To upper bound the denominator, we note that feasibility of $Y$ implies
\begin{align*}
\SDPOPT(D) &\leq \tr\left(\left(D[\beta]\otimes \f{1}{2}C_1^{(n+1)}[\alpha]+Diag(\ol{c})\right)Y\right) \\ 
&= \tr\left(\left(D[\beta]\otimes \f{1}{2}C_1^{(n+1)}[\alpha]\right)Y\right)+\tr(Diag(\ol{c})Y).
\end{align*}
We can bound the first term by Corollary \ref{cor:UB1}, since
$$\tr\left(\left(D[\beta]\otimes \f{1}{2}C_1^{(n+1)}[\alpha]\right)Y\right) \leq \tr\left(\left(\left( \begin{pmatrix} 0 & 1 \\ 1 & 0 \end{pmatrix} \otimes J_d\right)\otimes \f{1}{2}C_1^{(n)}\right)Y\right) \leq c\f{1}{n}.$$
We can  compute the second term.  Note that $C_1[\al, \{r\}]=e^{(n)}_1+e^{(n)}_n$ and $D[\{s\}, \beta]=(e_1^{(n)}+e_2^{(n)}+...+e_d^{(n)})^T.$  Thus  $C_1[\al, \{r\}] D[\{s\}, \beta]$ is an $n\times n$ matrix with exactly $n$ ones and all other entries zero.  Hence $$Diag(\ol{c})=Diag(vec(C_1[\al, \{r\}] D[\{s\}, \beta]))$$ is a diagonal matrix with exactly $n$ ones on the diagonal.  Since each diagonal entry of $Y$ is $\f{1}{n},$ we have $$\tr(Diag(\ol{c})Y)=1.$$  Putting everything together, we get that
$$\SDPOPT(D)\leq 1+\f{c}{n},$$ so that the integrality gap is at least
 $$\f{\TSPOPT(D)}{\SDPOPT(D)}\geq \f{2}{1+\f{c}{n}}=\f{2n}{n+c}$$ for some constant c, which gets arbitrarily close to $2$ as $n$ grows.
\hfill
\end{proof}
Note also that the solution $Y$ is not necessarily optimal for  SDP (\ref{eq:RedQAP1up}), and hence this family of instances may in fact imply an integrality gap larger then 2.  Numerical experiments on this family indicate that the optimal solutions to  SDP (\ref{eq:RedQAP1up}) have value strictly less than $1$ as $n$ grows sufficiently large, but are far less structured then $Y.$  To show that the integrality gap of SDP (\ref{eq:RedQAP1up}) is unbounded, we instead modify the family of instances considered. 

We will specifically look for instances $D$ where $\TSPOPT(D)$ grows arbitrarily large while $\SDPOPT(D) \leq a+\f{b}{n},$ for constants $a$ and $b$; as in the previous proof, we will bound $\tr(Diag(\ol{c})Y)$ by an absolute constant ($a$) and show that $ \tr(D[\beta]\otimes \f{1}{2}C_1^{(n+1)}[\alpha])Y$ decays with $\f{1}{n}$ (with a constant $b$ that does not depend on $n$).    We will also want to find feasible (but not necessarily optimal) solutions that retain the structure of $Y$: a matrix with a simple block structure that respects that of the cost matrix; can be decomposed into terms, each of which is the Kronecker product of a matrix constructed using $J$ and $I$ and a circulant matrix; and that we will thus be able to explicitly write down their eigenvalues.

\section{The Unbounded Integrality Gap} \label{UBGap}
In this section we prove our main theorem:
\begin{thm}\label{thm:main}
Let $z\in\N$.  Then the integrality gap of SDP  (\ref{eq:RedQAP1up}) is at least $z.$
\end{thm}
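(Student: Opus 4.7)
The plan is to generalize Section \ref{Idea}'s two-group cut-semimetric to a \emph{simplicial} TSP instance with $2z$ groups. I would partition the $n+1$ vertices into $2z$ groups $G_1,\ldots,G_{2z}$ of roughly equal size and set $d_{ij}=0$ when $i,j$ lie in the same group and $d_{ij}=1$ otherwise; this is a metric, realizable by placing each group at a vertex of a regular $(2z)$-simplex in $\R^{2z-1}$. The cost of any Hamiltonian cycle equals the number of its inter-group transitions, and such a cycle must depart from and return to each of the $2z$ groups, so $\TSPOPT(D)=2z$, with equality achieved by any tour that visits each group contiguously. In SDP (\ref{eq:RedQAP1up}) I would fix $r$ and $s$ with vertex $r$ in the largest group, to minimize the linear correction $\tr(Diag(\ol{c})Y)$ computed below.

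The main step is to exhibit a feasible $Y$ with small objective. I would take
\[
Y \;=\; \frac{1}{2n}\left( (I_{2z}\otimes J_{d'} - I_n)\otimes A + (J_{2z}-I_{2z})\otimes J_{d'}\otimes B + 2\,I_n\otimes I_n \right),
\]
where $d'=n/(2z)$ and $A,B$ are nonnegative symmetric circulant $n\times n$ matrices of zero diagonal and row sums $2$. The linear and entrywise constraints of SDP (\ref{eq:RedQAP1up}) follow from the proofs of Claims \ref{cm1} through \ref{cm:nn1} nearly verbatim, since those proofs depend only on row sums, the zero-diagonal property, and the block-sparsity of the structural factors, all unchanged up to replacing $I_2,J_2$ by $I_{2z},J_{2z}$.

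The main obstacle is $Y\succeq 0$. Following Claim \ref{cm2}, I would combine the eigenbasis of the structural matrices $I_{2z}\otimes J_{d'}-I_n$ and $(J_{2z}-I_{2z})\otimes J_{d'}$ with the shared circulant eigenbasis of $A,B,I_n$. The structural eigenvalue pairs $(\mu^A,\mu^B)$ now come in three types---$(d'-1,(2z-1)d')$ with multiplicity $1$, $(d'-1,-d')$ with multiplicity $2z-1$, and $(-1,0)$ with the remaining multiplicity---and the eigenvalues of $2nY$ take the form $2(\mu^A a^{(j)}+\mu^B b^{(j)}+1)$. The $a_i,b_i$ defined in Theorem \ref{thm:feas} do \emph{not} suffice once $z\geq 2$: the first pair forces $(n-1)a^{(j)}+1\leq 0$ for $j\neq 0$, which the original $a^{(1)}>0$ violates. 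I would therefore construct new $a_i,b_i\geq 0$ with $a^{(0)}=b^{(0)}=1$, $a^{(j)}\leq -1/(n-1)$ and $b^{(j)}\in[-1/(n-1),(2z-1)/(n-1)]$ for $j\neq 0$, while retaining $b_1=O(1/n^3)$, together with the analog of Proposition \ref{prop:recall} needed to verify each of the three eigenvalue families is nonnegative. Producing such $a_i,b_i$ is the core technical step of the proof.

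Once feasibility is established, the objective bound is routine. The quadratic term $\tr\bigl((D[\beta]\otimes\tfrac12 C_1^{(n+1)}[\alpha])Y\bigr)$ is supported on the inter-group $B$-blocks of $Y$, of which there are $O(n^2)$, each contributing a multiple of $b_1$; the total is $O(n^2 b_1)=O(1/n)$. The linear term $\tr(Diag(\ol{c})Y)$ equals $1/n$ times the number of ones in $C_1[\alpha,\{r\}]D[\{s\},\beta]$, which is $2(n+1-g_r)$ where $g_r$ is the size of the group containing $r$; with $g_r\approx(n+1)/(2z)$ this is $2-1/z+O(1/n)$. Thus $\SDPOPT(D)\leq 2-1/z+O(1/n)$, and
\[
\frac{\TSPOPT(D)}{\SDPOPT(D)} \;\geq\; \frac{2z}{2-1/z+O(1/n)} \;\longrightarrow\; \frac{2z^2}{2z-1} \;>\; z
\]
as $n\to\infty$, proving the integrality gap of SDP (\ref{eq:RedQAP1up}) is at least $z$.
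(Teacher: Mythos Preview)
Your overall plan---simplicial instances with $g=2z$ groups, the same block-structured $Y$, the quadratic term bounded by $O(1/n)$ and the linear term by a constant---matches the paper exactly, and your diagnosis that the Section~\ref{Idea} choice of $a_i,b_i$ fails once $z\geq 2$ is correct. Your bound on $\tr(Diag(\ol c)Y)$ is even slightly sharper than the paper's (you get $2-1/z$ where the paper settles for $2$).

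The genuine gap is in the constraints you write down for the new $a_i,b_i$. The inequality $(n-1)a^{(j)}+1\leq 0$ was obtained by substituting the \emph{old} relation $b^{(k)}=-(1-\tfrac{2}{n})a^{(k)}-\tfrac{2}{n}$ into the first eigenvalue family; once you change $b_i$ that relation no longer holds and this constraint on $a^{(j)}$ is not forced. Carrying it over anyway is fatal: since $A$ has zero diagonal, $\sum_{j=0}^{n-1}a^{(j)}=\tfrac12\tr(A)=0$, so $\sum_{j\neq 0}a^{(j)}=-1$; combined with $a^{(j)}\leq -1/(n-1)$ for every $j\neq 0$ this forces $a^{(j)}=-1/(n-1)$ identically, i.e.\ $A=\tfrac{2}{n-1}(J_n-I_n)$. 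Your lower bound $b^{(j)}\geq -1/(n-1)$ then does the same thing to $B$, and you are left with $b_1=2/(n-1)$, not $O(1/n^3)$. So the constraint set you propose is incompatible with the decay of $b_1$ that the argument needs.

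The paper's move is to change the $b$--$a$ coupling itself so that it depends on $g$: it takes $b_i$ affine in $a_i$ in a way that yields $b^{(k)}=-\tfrac{g}{n(g-1)}-\tfrac{n-g}{n(g-1)}a^{(k)}$, which makes the first eigenvalue family \emph{identically zero} (this is the $g$-analogue of part~2 of Proposition~\ref{prop:recall}). The remaining two families then reduce to the single window $-g/(n-g)\leq a^{(k)}\leq 1$, which is loose enough to allow $a_i$ to be a concentrated sum of $g-1$ cosines and still achieve $b_1=O_g(1/n^3)$. Tuning the $b$--$a$ relation to kill the most restrictive eigenvalue family, rather than trying to satisfy it directly, is the idea your plan is missing.
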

\noindent An immediate corollary is:
\begin{cor}
The integrality gap of SDP  (\ref{eq:RedQAP1up}) is unbounded.
\end{cor}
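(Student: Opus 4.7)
The plan is to generalize the cut-semimetric argument of Corollary~\ref{cor:IG2} to a \emph{simplicial semimetric} with $2z$ equivalence classes. Fix $n$ even and divisible by $2z$, set $d' = n/(2z)$, and consider the $(n+1)$-vertex instance obtained by partitioning the vertices into $2z$ groups---one group of size $d' + 1$ containing vertex $1$ and the remaining $2z - 1$ groups of size $d'$---with $D_{ij} = 0$ when $i$ and $j$ lie in the same group and $D_{ij} = 1$ otherwise. This is metric and arises from placing each group at a vertex of a regular simplex in $\R^{2z-1}$. Any Hamiltonian cycle must contain at least $2z$ between-group edges of cost $1$ (one per transition between the $2z$ groups visited cyclically), and this is attained by visiting each group as a contiguous arc, so $\TSPOPT(D) = 2z$.

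For the SDP relaxation I propose the feasible solution
\[
Y = \frac{1}{n}\, I_n \otimes I_n + \frac{1}{2n}\bigl[(I_{2z} \otimes J_{d'}) - I_n\bigr] \otimes A + \frac{1}{2n}(J_{2z} - I_{2z}) \otimes J_{d'} \otimes B,
\]
the direct generalization of the $Y$ from Theorem~\ref{thm:feas} (the $z = 1$ case), for symmetric circulants $A, B \in \R^{n \times n}$ with zero diagonal and $\sum a_i = \sum b_i = 1$. The trace-equality constraints of SDP~(\ref{eq:RedQAP1up}) are verified exactly as in Claims~\ref{cm1}--\ref{cm:easysum}: the diagonal blocks $\frac{1}{n} I_n$ have trace $1$, all off-diagonal blocks have zero diagonal, and each of the $n^2$ blocks sums to $1$. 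Entrywise positivity $Y \geq 0$ reduces to $a_i, b_i \geq 0$. For $Y \succeq 0$ I will follow the approach of Claim~\ref{cm2}: combine the eigenvectors of $I_{2z} \otimes J_{d'} - I_n$ and $(J_{2z} - I_{2z}) \otimes J_{d'}$ (jointly diagonalized via the $J_m / I_m$ spectral structure used there) with the circulant eigenvectors $v_j$, yielding for each $j \neq 0$ the three eigenvalue families proportional to
\[
1 + (d' - 1)\, a^{(j)} + (2z - 1) d'\, b^{(j)}, \qquad 1 + (d' - 1)\, a^{(j)} - d'\, b^{(j)}, \qquad 1 - a^{(j)},
\]
all of which must be nonnegative.

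The main obstacle is the first eigenvalue expression: for $z \geq 2$ it is genuinely more restrictive than in Theorem~\ref{thm:feas} and rules out reusing the $a_i, b_i$ from that theorem (where $a^{(1)} > 0$). The core technical step is therefore to exhibit $a_i, b_i \geq 0$ with $\sum a_i = \sum b_i = 1$ satisfying all three eigenvalue inequalities while simultaneously forcing $b_1 = O(1/n^3)$. I expect to do this by concentrating $b^{(j)}$ on the Fourier modes with $|j|$ smallest---where $\cos(2\pi j/n) \approx 1$, enabling the same second-order cancellation that produced $b_1 = O(1/n^3)$ in Theorem~\ref{thm:feas}---saturating the lower bound $-1/(2z - 1)$ on $b^{(j)}$ that the new case~$1$ inequality forces (attained with $a^{(j)} = 1$ at those modes), and then choosing $a^{(j)}$ at the remaining Fourier modes so that $a_i, b_i$ remain nonnegative after inverse DFT. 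This is in the spirit of the proof of Proposition~\ref{prop:recall} but requires more delicate bookkeeping because the admissible range of $a^{(j)}$ narrows with $z$.

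Granting such a construction, the cost bound follows as in Corollary~\ref{cor:IG2}. A direct count shows $\bar c = \mathrm{vec}(C_1^{(n+1)}[\alpha, \{1\}]\, D[\{1\}, \beta])$ has exactly $2(2z - 1) d'$ ones (two nonzero rows from $C_1^{(n+1)}[\alpha, \{1\}]$, each contributing the $(2z-1)d'$ between-group entries of $D[\{1\}, \beta]$), giving $\tr(\mathrm{Diag}(\bar c) Y) = 2(2z - 1) d' / n = (2z - 1)/z$, and the computation of $\tr\bigl((D[\beta] \otimes \tfrac{1}{2} C_1^{(n+1)}[\alpha]) Y\bigr)$ mirrors Corollary~\ref{cor:UB1} to yield a contribution proportional to $d'^2 b_1 = O(1/n)$. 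Hence $\SDPOPT(D) \leq (2z - 1)/z + O(1/n)$, and combined with $\TSPOPT(D) = 2z$ this gives
\[
\frac{\TSPOPT(D)}{\SDPOPT(D)} \;\geq\; \frac{2z}{(2z - 1)/z + O(1/n)} \;\xrightarrow[n \to \infty]{}\; \frac{2z^2}{2z - 1} \;\geq\; z,
\]
so taking $n$ sufficiently large gives integrality gap at least $z$, as required.
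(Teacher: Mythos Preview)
Your overall architecture is exactly the paper's: same simplicial instance with $g=2z$ groups, same block ansatz for $Y$, same eigenvalue trichotomy (your three families match Claim~\ref{cm:evalsa} after dividing by $2$), and the same endgame of bounding $\tr(Diag(\bar c)Y)$ by a constant while the $D[\beta]\otimes C_1$ term is $O(1/n)$. Your sharper count $\tr(Diag(\bar c)Y)=(2z-1)/z$ rather than the paper's cruder bound $\leq 2$ is correct and harmless.

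The gap you flag is real, and your proposed heuristic for filling it is on the wrong track. You suggest setting $a^{(j)}=1$ at small Fourier modes so as to saturate $b^{(j)}=-1/(2z-1)$ there. But $a^{(j)}=\sum_i a_i\cos(2\pi ij/n)$ with $a_i\geq 0$ and $\sum a_i=1$ can only equal $1$ if the support of $a$ lies in $\{i: ij\equiv 0\pmod d\}$, which for small $j$ forces almost all $a_i=0$; you cannot do this simultaneously for several small $j$ and still control nonnegativity after inverse DFT. The paper's key maneuver is different and cleaner: it saturates the first eigenvalue inequality at \emph{every} nonzero mode, not just small ones. Concretely, impose
\[
\left(\tfrac{n}{g}-1\right)a_i+\tfrac{(g-1)n}{g}\,b_i=\begin{cases}2,&i<d\\1,&i=d,\end{cases}
\]
which forces $b^{(k)}=-\tfrac{g}{n(g-1)}-\tfrac{n-g}{n(g-1)}a^{(k)}$ for all $k\neq 0$ (Claim~\ref{cm:akbkrel}) and collapses the first eigenvalue family to identically zero. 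The remaining two eigenvalue conditions then depend only on $a^{(k)}$, namely $-\tfrac{g}{n-g}\leq a^{(k)}\leq 1$. These are satisfied by the explicit choice
\[
a_i=\frac{1}{n-g}\Bigl[2+\frac{4}{g}\sum_{j=1}^{g-1}(g-j)\cos\!\left(\frac{\pi ij}{d}\right)\Bigr]\quad(i<d),
\]
whose nonnegativity follows from the identity in Lemma~\ref{cm:nonnegid} and whose Fourier coefficients are bounded below via Lemma~\ref{cm:akLB}. The $b_1=O(1/n^3)$ bound then drops out of a second-order Taylor expansion exactly as you anticipate. So the missing idea is not a delicate mode-by-mode allocation but a single linear relation tying all $b_i$ to $a_i$, after which an explicit trigonometric formula for $a_i$ does the work.
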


As before, we first start by finding feasible solutions to SDP (\ref{eq:QAPToRed}).  We then modify these solutions so that they are feasible to SDP (\ref{eq:RedQAP1up}).

\subsection{Feasible Solutions to SDP (\ref{eq:QAPToRed})}
To generalize the above example, we consider an instance with $g$ equally sized groups of $\f{n}{g}$ vertices. If $u, v$ are two vertices in the same group, then the cost of traveling between $u$ and $v$ is zero; otherwise the cost is 1.   Labeling the vertices so that the $i$th group consists of vertices $\{(i-1)\f{n}{g}+1, ..., i\f{n}{g}\},$ the cost matrix is $$D=(J_g-I_g)\otimes J_{n/g}.$$ Note that the instances in Section \ref{Idea} are the special case when $g=2.$  Note also that these instances are metric and can be viewed as Euclidean TSP in $\R^{g-1};$ we refer to this family of instances as \emph{simplicial TSP instances}: In a regular $g-1$ simplex, there are $g$ extreme points, each pair of which is a distance 1 apart.  One way to interpret an instance with $g$ groups is as embedded into a regular $g-1$ simplex in $\R^{g-1}$ where each group of $\f{n}{g}$ vertices is placed at an extreme point of the simplex.

To prove Theorem \ref{thm:main}, we will take $g=2z.$  To simplify the our proofs, we thus assume that $g$ is even throughout.

We use solutions of the form \begin{equation}\label{eq:Y}Y=\f{1}{2n}\left[(J_g-I_g)\otimes J_{n/g}\otimes B + I_g\otimes J_{n/g} \otimes A+I_g\otimes I_{n/g}\otimes (2I_n-A)\right],\end{equation} where $$A=\sum_{i=1}^d a_i C_i, \hspace{5mm} B=\sum_{i=1}^d b_i C_i$$ are symmetric circulant matrices defined in terms of parameters $a_1, ..., a_d$ and $b_1, ..., b_d.$  We set
$$a_i = \begin{cases} \f{1}{n-g} \left[2 + \f{4}{g} \sum_{j=1}^{g-1} (g-j) \cos\left(\f{\pi i j}{d}\right)\right], & i<d \\ 
\f{1}{n-g} \left[1 + \f{2}{g} \sum_{j=1}^{g-1} (g-j) \cos\left(\f{\pi i j}{d}\right)\right], & i=d. \end{cases}$$
We also set\footnote{These values come from assuming $$\left(\f{n}{g}-1\right)a_i + \f{g-1}{g}n b_i = \begin{cases} 2, & i<d \\ 1, & i=d.\end{cases}$$ The intuition for choosing these values of $b_i$ is to impose an analogue of the degree constraints from Gutekunst and Williamson \cite{Gut17}.
}
$$b_i = \begin{cases}  \f{2g-(n-g)a_i}{n(g-1)}, & i<d \\  \f{g-(n-g)a_i}{n(g-1)}, & i=d. \end{cases}$$
We will often take sums of the $a_i$ or $b_i$.  It will be helpful to note that
$$a_d = \f{1}{n-g} \left[2 + \f{4}{g} \sum_{j=1}^{g-1} (g-j) \cos\left(\f{\pi d j}{d}\right)\right]-\f{1}{n-g} \left[1 + \f{2}{g} \sum_{j=1}^{g-1} (g-j) \cos\left(\pi j\right)\right]$$ and $$b_d = \f{2g-(n-g)a_d}{n(g-1)}-\f{g}{n(g-1)}.$$

Figure \ref{fig:ai} provides intuition for how the $a_i$ depend on $g$.  The $a_i$ can be viewed as uniform samples from a sum of cosines that places a larger weight on smaller values of $i$.  As $g$ increases, the proportion of the $a_i$ that are close to zero grows.  As in Section \ref{Idea}, the only parameter that $\SDPOPT$ depends on will be $b_1$, and large $a_1$ implies small $b_1.$

\begin{figure}[t]
\centering
\includegraphics[scale=0.6]{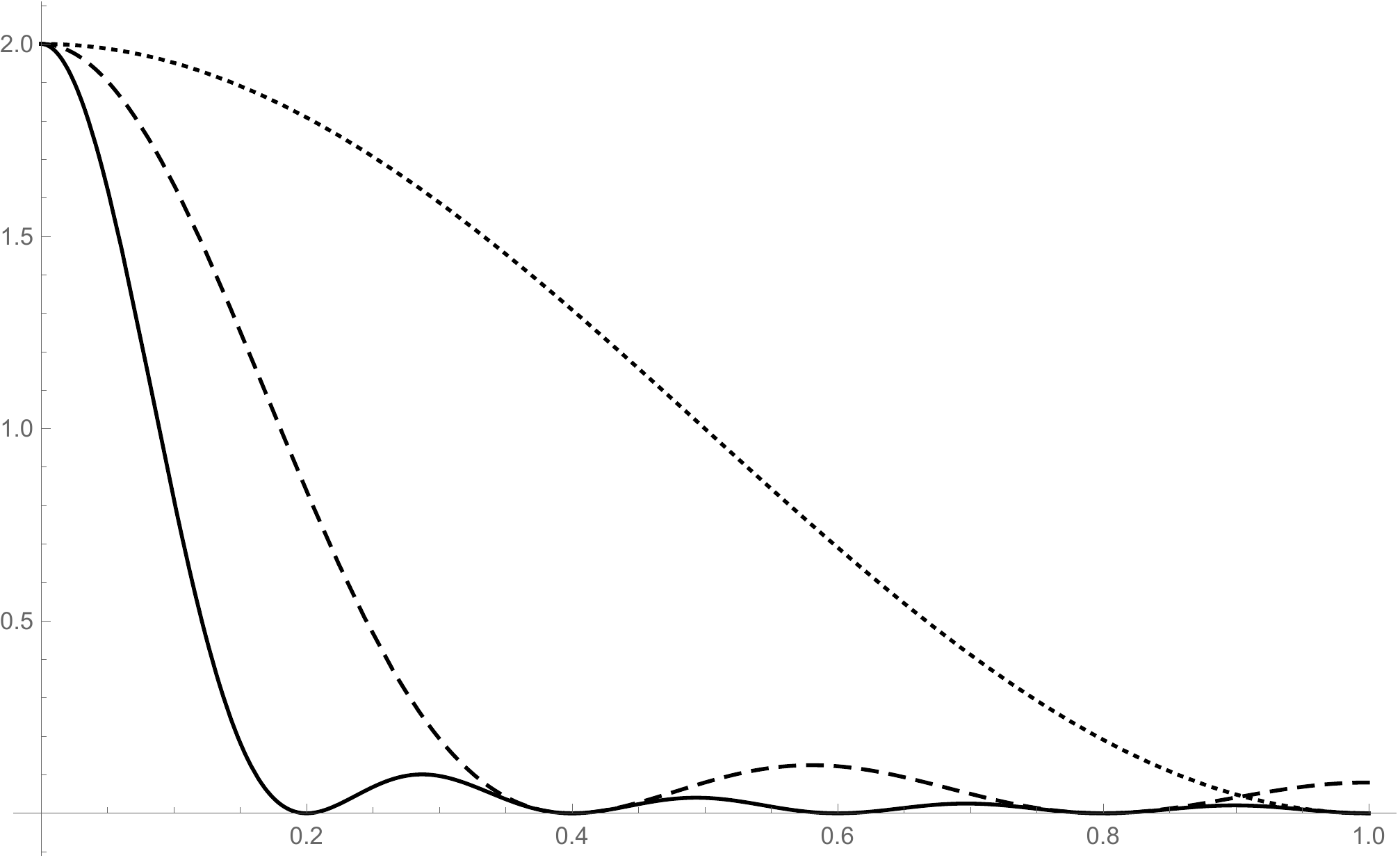} \caption{$\f{n-g}{g}a_i$ for $g=2, 5, $ and $10$. For each curve (and any value of $n$), the values $a_1, ..., a_{d-1}$ are taken by sampling the curve at $x=\f{1}{d}, \f{2}{d}, ..., \f{d-1}{d}$; the value of $a_d$ is half the value at $x=1$.  The dotted curve shows $g=2,$ the dashed curve shows $g=5$, and the remaining curve shows $g=10$. }\label{fig:ai}
\end{figure}

Note that $Y$ is a large block matrix that respects symmetry of our cost matrix $D$ in the exact same way as in Section \ref{Idea}: each diagonal block is $\f{1}{n}I_n$; everywhere else that $D$ has a 0, $Y$ places a block $\f{1}{2n}A$; everywhere $D$ has a 1, $Y$ has a block $\f{1}{2n}B.$  In the proofs below, it will help to refer to multiple types of blocks of $Y$.   $Y$ can  be partitioned into larger blocks of size $\f{n^2}{g}\times \f{n^2}{g}$, each of which  is either  $J_{n/g}\otimes \f{1}{2n} B$ or $\f{1}{2n}\left((J_{n/g}\otimes A)+I_{n/g}\otimes (2I_n-A)\right)$; we will refer to these blocks as \emph{major blocks}.  The former are off-diagonal, so we will refer to them as \emph{major off-diagonal blocks} while the latter are on the diagonal of $Y$, so we will refer to them as \emph{major diagonal blocks}.   Each of these major blocks consists of $(n/g)^2$ smaller, $n\times n$ blocks, each of which is a $\f{1}{2n}A, \f{1}{2n}B,$ or  $\f{1}{n}I_n.$  We will refer to each as a \emph{minor block}.  We refer to each of the $n$ blocks of $\f{1}{n}I_n$ as a \emph{minor diagonal block}, and the remaining $n\times n$ blocks (each of which is a single $n\times n$ block equal to $\f{1}{2n}A$ or $\f{1}{2n}B$) as a \emph{minor off-diagonal block}. 
\begin{exam} Suppose $g=3$ and $n=12.$  Pictorially,  the minor blocks are those blocks proportional to $I_n, A,$ and $B$; the major blocks are those delineated below that each consist of 16 minor blocks.
$$ Y=\f{1}{2n}\left(
\begin{array}{cccc | cccc | cccc}
 2 I_n & A & A & A & B & B & B & B & B & B & B & B \\
 A & 2 I_n & A & A & B & B & B & B & B & B & B & B \\
 A & A & 2 I_n & A & B & B & B & B & B & B & B & B \\
 A & A & A & 2 I_n & B & B & B & B & B & B & B & B \\ \hline
 B & B & B & B & 2 I_n & A & A & A & B & B & B & B \\
 B & B & B & B & A & 2 I_n & A & A & B & B & B & B \\
 B & B & B & B & A & A & 2 I_n & A & B & B & B & B \\ 
 B & B & B & B & A & A & A & 2 I_n & B & B & B & B \\ \hline
 B & B & B & B & B & B & B & B & 2 I_n & A & A & A \\
 B & B & B & B & B & B & B & B & A & 2 I_n & A & A \\
 B & B & B & B & B & B & B & B & A & A & 2 I_n & A \\
 B & B & B & B & B & B & B & B & A & A & A & 2 I_n \\
\end{array}
\right)$$
\end{exam}
We now show that this solution meets each SDP constraint.  

\begin{prop}\label{prop:f1}
$Y,$ as given by Equation (\ref{eq:Y}), is feasible for  SDP (\ref{eq:QAPToRed}).
\end{prop}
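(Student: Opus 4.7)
My plan is to mirror the proof of Theorem \ref{thm:feas} (the $g=2$ case), verifying the constraints of SDP (\ref{eq:QAPToRed}) in turn and relying throughout on the block/Kronecker structure of $Y$. The diagonal constraints hold because every minor diagonal block of $Y$ equals $\frac{1}{n}I_n$: the $A$- and $B$-contributions vanish on the diagonal (neither contains a $C_0$ term), and $\frac{1}{2n}(2I_n - A)$ supplies $\frac{1}{n}$ there. The sparsity constraint $\tr\bigl((I_n \otimes (J_n - I_n) + (J_n - I_n) \otimes I_n)Y\bigr) = 0$ is dual: minor diagonal blocks are diagonal, while all minor off-diagonal blocks are scaled copies of $A$ or $B$ and hence have zero diagonal.

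For the sum constraint $\tr(J_{n^2}Y) = n^2$, I would count the three types of minor blocks: $n$ minor diagonal blocks each summing to $1$; $\frac{n^2}{g}-n$ minor blocks of type $\frac{1}{2n}A$ each summing to $\sum_i a_i$; and $n^2 - \frac{n^2}{g}$ minor blocks of type $\frac{1}{2n}B$ each summing to $\sum_i b_i$. Rearranging, it suffices to verify
\begin{equation*}
(n-g)\sum_{i=1}^d a_i + n(g-1)\sum_{i=1}^d b_i = g(n-1),
\end{equation*}
which follows termwise from the identity $(n-g)a_i + n(g-1)b_i = 2g$ for $i<d$ and $=g$ for $i=d$, a direct consequence of the definition of $b_i$. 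For entrywise nonnegativity $Y\geq 0$ I would establish $a_i,b_i \geq 0$ from the Fej\'er-kernel identity $g + 2\sum_{k=1}^{g-1}(g-k)\cos(k\theta) = \bigl(\sin(g\theta/2)/\sin(\theta/2)\bigr)^2 \geq 0$: this rewrites $(n-g)a_i = \frac{2}{g}\bigl(\sin(g\pi i/(2d))/\sin(\pi i/(2d))\bigr)^2 \geq 0$ for $i<d$, while the telescoping identity $\sum_{k=1}^{g-1}(g-k)(-1)^k = -g/2$ (valid for even $g$) gives $a_d = 0$; the companion inequality $|\sin(g\alpha)|\leq g|\sin\alpha|$ then supplies $(n-g)a_i \leq 2g$ and hence $b_i \geq 0$ from its definition.

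The main obstacle is $Y\succeq 0$. The Kronecker factors of $2nY$ share eigenbases: $J_g - I_g$ has eigenvalues $g-1$ on $e^{(g)}$ and $-1$ on the $(g-1)$-dimensional orthogonal complement; $J_{n/g}$ has eigenvalues $n/g$ on $e^{(n/g)}$ and $0$ on its complement; and the circulant matrices $A$, $B$, $2I_n - A$, $I_n$ all share the circulant eigenbasis $\{v_j\}_{j=0}^{n-1}$ with respective eigenvalues $2a^{(j)}$, $2b^{(j)}$, $2-2a^{(j)}$, $1$. Taking tensor products $u\otimes v\otimes v_j$ thus diagonalizes $Y$ and yields, up to the $\frac{1}{2n}$ scaling, three families of eigenvalues indexed by $j$: a (top, top) family, a (perp, top) family, and an (any, perp) family equal to $\frac{1-a^{(j)}}{n}$. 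To show each family is nonnegative I would first derive a generalization of Proposition \ref{prop:recall}, namely $a^{(0)}=b^{(0)}=1$ together with the linear relation $n(g-1)b^{(j)} = -(n-g)a^{(j)} - g$ for $j\neq 0$; both follow from the definition of $b_i$ combined with the Dirichlet-type evaluation $\sum_{i=0}^{d-1}\cos(\pi i m/d) \in \{d,1,0\}$ (depending on whether $m\equiv 0\pmod{2d}$, $m$ is odd, or $m$ is even and nonzero). This relation collapses the (top, top) family to $0$ when $j\neq 0$ and the (perp, top) family to $\frac{(n-g)a^{(j)}+g}{n(g-1)}$, reducing the problem to verifying two bounds on $a^{(j)}$: the upper bound $a^{(j)}\leq 1$ (which follows immediately from $|a^{(j)}|\leq \sum_i a_i = a^{(0)} = 1$), and the more delicate lower bound $(n-g)a^{(j)}\geq -g$ for $j\neq 0$. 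The latter is where I expect the real work: it should follow from an explicit evaluation of $a^{(j)}$ via the product-to-sum identity together with the same Dirichlet sum, paralleling case (3) of Proposition \ref{prop:recall} but now with a richer case analysis on the parity of $j$ and on which values of $k\in\{1,\ldots,g-1\}$ in the Fej\'er sum resonate with $j$.
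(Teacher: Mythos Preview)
Your proposal is correct and follows the same overall architecture as the paper: verify the linear constraints from the block structure, then diagonalize $Y$ via the shared Kronecker/circulant eigenbasis, derive the relation between $b^{(k)}$ and $a^{(k)}$, and reduce positive semidefiniteness to the two bounds $a^{(k)}\le 1$ and $(n-g)a^{(k)}\ge -g$. Two of your sub-arguments are genuinely cleaner than the paper's. For $Y\ge 0$, the paper proves $a_i\ge 0$ via the ad hoc identity $\bigl(2\cos\theta-2\bigr)\sum_{j=1}^{g-1}(g-j)\cos(j\theta)=\cos(g\theta)-g\cos\theta+(g-1)$ and then bounds $(n-g)a_i\le 2g$ by the crude estimate $\cos\le 1$; your Fej\'er-kernel rewriting $(n-g)a_i=\tfrac{2}{g}\bigl(\sin(g\pi i/2d)/\sin(\pi i/2d)\bigr)^2$ together with $|\sin g\alpha|\le g|\sin\alpha|$ handles both signs at once and also yields $a_d=0$ directly. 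For the sum constraint, the paper separately proves $\sum a_i=1$ and $\sum b_i=1$ via Lagrange's identity; your termwise relation $(n-g)a_i+n(g-1)b_i\in\{2g,g\}$ from the definition of $b_i$ gives the needed total immediately. The one place you still need $\sum a_i=1$ is your bound $|a^{(k)}|\le\sum a_i=1$, but this follows from your own Dirichlet evaluation applied to the definition of $a_i$, so there is no gap. The delicate lower bound $(n-g)a^{(k)}\ge -g$ is handled the same way in both: product-to-sum on $\cos(\pi ij/d)\cos(\pi ik/d)$, the Dirichlet evaluation, and a parity case split on $j-k$.
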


\begin{cm}\label{cm:41}
For each $j=1, ..., n,$ we have 
$$ \tr((I_{n}\otimes E_{jj})Y)=1 \hspace{5mm} \text{and} \hspace{5mm}  \tr((E_{jj}\otimes I_{n})Y)=1.$$
\end{cm}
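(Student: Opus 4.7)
The plan is to mirror Claim \ref{cm1} from the earlier example: since both $I_{n}\otimes E_{jj}$ and $E_{jj}\otimes I_{n}$ are diagonal $0/1$ matrices with exactly $n$ ones on the diagonal, each trace simply sums $n$ diagonal entries of $Y$. So the entire claim reduces to showing that every diagonal entry of $Y$ equals $\tfrac{1}{n}$.

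To verify this, I would examine the diagonal contribution of each of the three terms in the decomposition
\[
2nY = (J_g-I_g)\otimes J_{n/g}\otimes B \;+\; I_g\otimes J_{n/g}\otimes A \;+\; I_g\otimes I_{n/g}\otimes (2I_n - A).
\]
The first term vanishes on the diagonal because $J_g - I_g$ has zero diagonal, so the entire Kronecker product does. The second term vanishes on the diagonal because $A = \sum_{i=1}^d a_i C_i$ is built out of the basis elements $C_1,\ldots,C_d$, none of which has a $C_0 = 2I$ component, so $A$ has zero diagonal; thus $I_g\otimes J_{n/g}\otimes A$ also has zero diagonal. The third term contributes $I_g\otimes I_{n/g}\otimes 2I_n = 2I_{n^2}$ on the diagonal (the $-A$ part again contributes nothing). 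Dividing by $2n$ gives diagonal entries all equal to $\tfrac{1}{n}$.

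Finally, I would observe that $I_n\otimes E_{jj}$ has ones in positions $(kn+j, kn+j)$ for $k=0,\ldots,n-1$, and $E_{jj}\otimes I_n$ has ones in positions $((j-1)n+k,(j-1)n+k)$ for $k=1,\ldots,n$. In either case, we pick out $n$ diagonal entries of $Y$, each equal to $\tfrac{1}{n}$, so the trace equals $1$. The only real step requiring care is the bookkeeping showing $A$ has zero diagonal; everything else is immediate from the block structure of $Y$ and is essentially identical to the argument used in the $g=2$ case in Section \ref{Idea}.
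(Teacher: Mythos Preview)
Your proposal is correct and follows essentially the same approach as the paper's own proof: both reduce the claim to the observation that every diagonal entry of $Y$ equals $\tfrac{1}{n}$, and then note that each of $I_n\otimes E_{jj}$ and $E_{jj}\otimes I_n$ picks out exactly $n$ such entries. You give a bit more detail than the paper in justifying why the diagonal entries are $\tfrac{1}{n}$ (analyzing the three Kronecker summands separately), whereas the paper simply cites the block description of $Y$ given just before the claim; but the argument is the same.
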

\begin{proof}
Note that these constraints only impact the diagonal entries of $Y$, each of which is equal to $\f{1}{n}.$ The constraint $ \tr((I_{n}\otimes E_{jj})Y)=1$ expands as $$Y_{j,j} + Y_{n+j, n+j} +  Y_{2n+j, 2n+j}+\cdots +  Y_{(n-1)n+j, (n-1)n+j} = 1.$$  The constraint $  \tr((E_{jj}\otimes I_{n})Y)=1$ expands as $$Y_{(j-1)n+1, (j-1)n+1}+Y_{(j-1)n+2, (j-1)n+2}+\cdots+Y_{(j-1)n+n, (j-1)n+n}=1.$$  Both summands consist of $n$ terms, each of which is equal to $\f{1}{n},$ so both hold immediately.
\hfill
\hfill \end{proof}

\begin{cm}
$\tr((I_{n}\otimes(J_{n}-I_{n})+(J_{n}- I_{n})\otimes I_{n})Y)=0.$
\end{cm}
\begin{proof}
This constraint holds because of $Y$'s sparsity pattern:  First note that $$\tr((I_{n}\otimes(J_{n}-I_{n}))Y)=0,$$ as each $n\times n$ minor diagonal block of $Y$ is $\f{1}{n}I_n$, which is diagonal.  Second $$\tr(((J_{n}- I_{n})\otimes I_{n})Y)=0,$$  as every minor diagonal block is either $\f{1}{2n}A$ or $\f{1}{2n}B;$ the matrices $A$ and $B$ are a linear combination of $C_1, ..., C_d$ all of which have every diagonal entry zero.
\hfill
\hfill \end{proof}

\begin{cm}\label{prop:sum}
$\tr(J_{n^2}Y)=(n)^2.$
\end{cm}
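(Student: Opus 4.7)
The plan is to compute $\tr(J_{n^2} Y)$, which is simply the sum of all entries of $Y$, by decomposing $Y$ into its $n^2$ minor blocks and tallying the entry sums type by type, and then invoking the defining relation on the $a_i$ and $b_i$ recorded in the footnote.

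First I would count minor blocks of each type. Partitioning $Y$ into the $n^2$ blocks of size $n\times n$: there are $g\cdot\tfrac{n}{g} = n$ minor diagonal blocks (one for each diagonal position inside each of the $g$ major diagonal blocks), each equal to $\tfrac{1}{n}I_n$; there are $g\!\left[(\tfrac{n}{g})^2 - \tfrac{n}{g}\right] = \tfrac{n(n-g)}{g}$ off-diagonal minor blocks lying inside the $g$ major diagonal blocks, each equal to $\tfrac{1}{2n}A$; and there are $g(g-1)(\tfrac{n}{g})^2 = \tfrac{n^2(g-1)}{g}$ minor blocks lying inside the $g(g-1)$ major off-diagonal blocks, each equal to $\tfrac{1}{2n}B$. (As a sanity check these counts add to $n+ \tfrac{n(n-g)}{g} + \tfrac{n^2(g-1)}{g} = n^2$.)

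Second, I would compute the sum of entries in each type. The matrix $\tfrac{1}{n}I_n$ has entry sum $1$. Since $A = \sum_{i=1}^d a_i C_i$ and each $C_i^{(n)}$ has every row summing to $2$ (the normalization noted in the preliminaries), the entries of $A$ sum to $2n\sum_{i=1}^d a_i$, so $\tfrac{1}{2n}A$ sums to $\sum_{i=1}^d a_i$. The analogous fact gives $\sum_{i=1}^d b_i$ for $\tfrac{1}{2n}B$. Combining these counts,
\[
\tr(J_{n^2}Y) \;=\; n \;+\; \frac{n(n-g)}{g}\sum_{i=1}^d a_i \;+\; \frac{n^2(g-1)}{g}\sum_{i=1}^d b_i \;=\; n + n\!\left[\left(\tfrac{n}{g}-1\right)\sum_{i=1}^d a_i + \tfrac{g-1}{g}n\sum_{i=1}^d b_i\right].
\]

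The only nontrivial step is evaluating the bracketed expression, and this is exactly where the defining identity for the $b_i$ does all the work. The footnote following the definition of $b_i$ records that $\left(\tfrac{n}{g}-1\right)a_i + \tfrac{g-1}{g}n\, b_i$ equals $2$ for $i<d$ and $1$ for $i=d$. Summing this identity over $i=1,\dots,d$ yields $2(d-1)+1 = 2d-1 = n-1$, so the bracketed term is exactly $n-1$ and $\tr(J_{n^2}Y) = n + n(n-1) = n^2$. The main ``obstacle'' is thus no more than careful bookkeeping of the three classes of minor blocks and a single application of the footnoted identity; no spectral information about $A$ or $B$ is required here, only their row sums.
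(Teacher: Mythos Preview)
Your proof is correct, but it takes a genuinely different route from the paper's. The paper first establishes, via separate Lemmas, that $\sum_{i=1}^d a_i = 1$ (using Lagrange's trigonometric identity and the alternating-sum identities) and then that $\sum_{i=1}^d b_i = 1$ (from the first fact and the linear $a_i$--$b_i$ relation). With both sums equal to $1$, each of the $n^2$ minor blocks individually sums to $1$, and the claim is immediate.

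Your argument instead exploits the fact that the precise weighted combination $(\tfrac{n}{g}-1)\sum a_i + \tfrac{g-1}{g}n\sum b_i$ that arises from the block count is \emph{exactly} the sum over $i$ of the footnoted defining relation for the $b_i$. This lets you read off the value $n-1$ directly, with no trigonometric input and without ever knowing $\sum a_i$ or $\sum b_i$ individually. For this claim in isolation, your route is more economical. The paper's route, on the other hand, yields the stronger intermediate facts $\sum a_i = \sum b_i = 1$ (equivalently $a^{(0)}=b^{(0)}=1$), which it reuses later---e.g., in bounding $a^{(k)}\le 1$ for the positive-semidefiniteness argument and in the $k=0$ eigenvalue case---so those lemmas are not wasted effort in the paper's overall flow.
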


This proof involves some involved bookkeeping and uses a handful of lemmas.  We use $\bi_{\{\circ\}}$ to denote the indicator function that is 1 if event $\circ$ happens and zero otherwise.

\begin{lm}\label{lem:trig}
Let $n$ be even and $0<k<n$ be an integer.  Then $$\sum_{j=1}^d \cos\left(\f{\pi jk}{d}\right) = \f{-1+(-1)^k}{2}.$$  
\end{lm}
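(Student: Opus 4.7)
The plan is to derive the identity from the standard geometric-series cancellation of complex exponentials, together with the symmetry $\cos(2\pi(n-j)k/n)=\cos(2\pi jk/n)$.

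First, I would rewrite the summand in terms of $n$ rather than $d$: since $d=n/2$, we have $\frac{\pi jk}{d}=\frac{2\pi jk}{n}$, so the sum in question equals $\sum_{j=1}^{d}\cos\!\left(\tfrac{2\pi jk}{n}\right)$. The key ingredient is the classical identity
\begin{equation*}
\sum_{j=0}^{n-1} e^{2\pi i jk/n}=0 \qquad\text{for every integer $k$ with }0<k<n,
\end{equation*}
which follows from the geometric-series formula (the ratio $e^{2\pi ik/n}\neq 1$). Taking real parts yields $\sum_{j=0}^{n-1}\cos(2\pi jk/n)=0$.

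Next I would split this length-$n$ sum at $j=d$ and exploit the symmetry $\cos(2\pi(n-j)k/n)=\cos(-2\pi jk/n)=\cos(2\pi jk/n)$. Substituting $j\mapsto n-j$ in the tail gives
\begin{equation*}
0=\sum_{j=0}^{n-1}\cos\!\left(\tfrac{2\pi jk}{n}\right)=1+2\sum_{j=1}^{d-1}\cos\!\left(\tfrac{2\pi jk}{n}\right)+\cos(\pi k),
\end{equation*}
where the isolated terms are $j=0$ (contributing $1$) and $j=d$ (contributing $\cos(\pi k)=(-1)^k$). Solving for the inner sum yields $\sum_{j=1}^{d-1}\cos(\pi jk/d)=\frac{-1-(-1)^k}{2}$.

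Finally I would add back the $j=d$ term to get the desired sum from $j=1$ to $d$:
\begin{equation*}
\sum_{j=1}^{d}\cos\!\left(\tfrac{\pi jk}{d}\right)=\frac{-1-(-1)^k}{2}+(-1)^k=\frac{-1+(-1)^k}{2},
\end{equation*}
which is precisely the claim. There is no real obstacle here: the only subtlety is being careful that the hypothesis $0<k<n$ (equivalently $e^{2\pi ik/n}\neq 1$) is exactly what is needed to invoke the vanishing of the full geometric sum, and that the pairing $j\leftrightarrow n-j$ identifies the two halves of that sum.
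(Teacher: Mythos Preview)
Your argument is correct. The paper, however, proceeds differently: it quotes Lagrange's trigonometric identity
\[
\sum_{j=1}^{m}\cos(j\theta)=-\tfrac{1}{2}+\frac{\sin\!\left(\left(m+\tfrac{1}{2}\right)\theta\right)}{2\sin(\theta/2)}
\]
with $m=d$ and $\theta=\tfrac{2\pi k}{n}$, then simplifies using $\sin(\pi k+\tfrac{\pi k}{n})=(-1)^{k}\sin(\tfrac{\pi k}{n})$ to obtain $-\tfrac{1}{2}+\tfrac{(-1)^{k}}{2}$ in one step. Your route is instead the roots-of-unity cancellation $\sum_{j=0}^{n-1}e^{2\pi i jk/n}=0$ together with the fold $j\leftrightarrow n-j$. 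Both arguments hinge on exactly the same hypothesis ($0<k<n$, so the relevant denominator---$\sin(\theta/2)$ for the paper, $e^{2\pi ik/n}-1$ for you---is nonzero). Your version is a bit more self-contained, since it derives everything from the geometric series rather than citing a named identity; the paper's version is shorter once Lagrange's identity is taken off the shelf.
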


\noindent This identity is a consequence of Lagrange's trigonometric identity; see the Appendix for a more detailed proof.

\begin{lm}\label{lem:sums} \label{lem:altsums}
Let $g$ be even.  Then 
$$ \sum_{j=1}^{g-1} (g-j) \bi_{\{j \text{ odd}\}}=\f{g^2}{4} \hspace{5mm} \text{ and } \hspace{5mm}  \sum_{j=1}^{g-1} (g-j) (-1)^j = -\f{g}{2}.$$
 \end{lm}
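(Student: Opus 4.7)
The plan is to establish both identities by direct evaluation, exploiting the fact that when $g$ is even the index set $\{1,\dots,g-1\}$ splits cleanly into $g/2$ odd values and $g/2-1$ even values. Both sums reduce to elementary arithmetic progressions after a single reindexing, so no clever trick is needed; the only thing to be careful about is the count of terms in each parity class.

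For the first identity, I would reindex the odd values as $j=2k-1$ for $k=1,\dots,g/2$, which converts the left-hand side into $\sum_{k=1}^{g/2}(g+1-2k)$. This is the sum of an arithmetic progression; separating the constant term and applying $\sum_{k=1}^{g/2}k = \tfrac{(g/2)(g/2+1)}{2}$ collapses the expression to $(g/2)(g+1) - (g/2)(g/2+1) = (g/2)^2 = g^2/4$, as required.

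For the second identity, I would split the sum according to the parity of $j$. The contribution from the odd $j$'s is $-\sum_{j\text{ odd}}(g-j)$, which by the first identity equals $-g^2/4$. The contribution from the even $j$'s reindexes (with $j=2k$, $k=1,\dots,g/2-1$) to the elementary sum $\sum_{k=1}^{g/2-1}(g-2k) = (g/2-1)\cdot g/2 = g^2/4 - g/2$. Adding the two contributions yields $-g/2$.

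I do not anticipate any genuine obstacle: both identities are standard finite-sum computations, and the only risk is an off-by-one error in the counts $g/2$ and $g/2-1$ of odd and even indices inside $\{1,\dots,g-1\}$. Once those counts are fixed correctly, the two identities follow from the closed form for $\sum_{k=1}^{N}k$ with $N$ taken to be $g/2$ or $g/2-1$.
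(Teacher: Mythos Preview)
Your proof is correct and follows essentially the same direct-computation approach as the paper: for the first identity the paper simply notes that $(g-1)+(g-3)+\cdots+1$ is the sum of the first $g/2$ odd positive integers, and for the second it pairs consecutive terms as $[-(g-1)+(g-2)]+\cdots+[-3+2]-1$, but your reindexing and parity split amount to the same elementary arithmetic.
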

 
  \begin{proof}
The first claim of this  lemma readily follows from the fact that the sum of the first $m$ positive odd integers is $m^2$.
 \begin{align*}
  \sum_{j=1}^{g-1} (g-j) \bi_{\{j \text{ odd}\}} &=  (g-1) + (g-3) + ... + 1 = \left(\f{g}{2}\right)^2,
  \end{align*}
  where we note that we added $\f{g}{2}$ odd numbers.  The second claim follows since:
  \begin{align*}
 \sum_{j=1}^{g-1} (g-j) (-1)^j &=  \left[-(g-1)+(g-2)\right]+\left[-(g-3)+(g-4)\right]+\cdots+ \left[-3+ 2\right]-1,
\\
  &= -1\f{g-2}{2} - 1 \\&= -\f{g}{2}.
 \end{align*}
 \hfill \end{proof}

\begin{lm}\label{lem:aisum}
$\sum_{i=1}^d a_i = 1.$
\end{lm}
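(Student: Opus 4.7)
The plan is to split the sum at $i=d$, treat the $i=d$ term separately, and for $i < d$ swap the order of summation to push the cosine sum inside. Then the two supporting lemmas do all the work.

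First I would compute $a_d$. Since $\cos(\pi j) = (-1)^j$, the definition of $a_d$ becomes
\[
a_d = \f{1}{n-g}\left[1 + \f{2}{g}\sum_{j=1}^{g-1}(g-j)(-1)^j\right],
\]
and the second identity in Lemma \ref{lem:sums} evaluates the inner sum to $-g/2$. Thus the bracketed expression vanishes and $a_d = 0$. So it suffices to prove $\sum_{i=1}^{d-1} a_i = 1$.

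Next I would rewrite
\[
\sum_{i=1}^{d-1} a_i = \f{1}{n-g}\left[2(d-1) + \f{4}{g}\sum_{j=1}^{g-1}(g-j)\sum_{i=1}^{d-1}\cos\!\left(\f{\pi i j}{d}\right)\right]
\]
by pulling constants out and interchanging the order of summation in the double sum. The inner cosine sum is reduced to a single term using Lemma \ref{lem:trig}: for $0 < j < n$, the lemma gives $\sum_{i=1}^{d}\cos(\pi i j / d) = \tfrac{-1+(-1)^j}{2}$, so subtracting the $i=d$ contribution $\cos(\pi j) = (-1)^j$ yields
\[
\sum_{i=1}^{d-1}\cos\!\left(\f{\pi i j}{d}\right) = \f{-1-(-1)^j}{2}.
\]

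Substituting this back in, the remaining sum over $j$ can be expanded as $-\tfrac{1}{2}\sum_{j=1}^{g-1}(g-j) - \tfrac{1}{2}\sum_{j=1}^{g-1}(g-j)(-1)^j$; the first piece is $\tfrac{g(g-1)}{2}$ by the standard triangular-number identity, and the second piece is $-g/2$ by Lemma \ref{lem:sums}. Combining gives $-\tfrac{g(g-2)}{4}$, so
\[
\sum_{i=1}^{d-1} a_i = \f{1}{n-g}\left[2(d-1) + \f{4}{g}\cdot\left(-\f{g(g-2)}{4}\right)\right] = \f{2d-g}{n-g} = 1,
\]
using $n=2d$. Adding $a_d = 0$ completes the proof. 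The only real pitfall is keeping track of whether the cosine identity from Lemma \ref{lem:trig} is summed up to $d$ or $d-1$; the anomalous $i=d$ endpoint is exactly what produces the $-(-1)^j$ correction that leads to the correct cancellation.
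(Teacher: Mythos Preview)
Your proof is correct and follows essentially the same strategy as the paper's: interchange the order of summation and apply Lemmas~\ref{lem:trig} and~\ref{lem:sums} to evaluate the resulting $j$-sums. The only difference is bookkeeping at the boundary $i=d$: you first observe directly that $a_d=0$ (a clean fact the paper does not isolate) and then sum over $i=1,\dots,d-1$, which lets you get by with only the second identity of Lemma~\ref{lem:sums} together with the triangular-number formula; the paper instead sums all the way to $d$ using the extended formula for $a_i$, subtracts a correction for $a_d$, and ends up invoking both identities of Lemma~\ref{lem:sums}.
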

\begin{proof}
This lemma follows by direct computation using the preceding identities.
\begin{align*}
\sum_{i=1}^d a_i & =\left(\sum_{i=1}^d \f{1}{n-g} \left[2+\f{4}{g}\sum_{j=1}^{g-1}(g-j)\cos\left(\f{\pi i j}{d}\right) \right] \right) - \f{1}{n-g} \left[1+\f{2}{g} \sum_{j=1}^{g-1} (g-j) \cos\left( \pi j\right) \right] \\
&= \f{1}{n-g} \left( 2d + \f{4}{g}\left[\sum_{i=1}^d \sum_{j=1}^{g-1} (g-j)\cos\left(\f{\pi i j}{d}\right)\right] - 1 - \f{2}{g}\sum_{j=1}^{g-1} (g-j) (-1)^j \right) \\
&= \f{1}{n-g} \left( 2d + \f{4}{g}\left[ \sum_{j=1}^{g-1} (g-j) \sum_{i=1}^d \cos\left(\f{\pi i j}{d}\right)\right] - 1 - \f{2}{g}\sum_{j=1}^{g-1} (g-j) (-1)^j \right).
\intertext{By Lemma \ref{lem:trig}:}
&= \f{1}{n-g} \left( 2d + \f{4}{g}\left[ \sum_{j=1}^{g-1} (g-j) \f{(-1) + (-1)^j}{2} \right] - 1 - \f{2}{g}\sum_{j=1}^{g-1} (g-j) (-1)^j \right) \\ 
&= \f{1}{n-g} \left( 2d - \f{4}{g}\left[ \sum_{j=1}^{g-1} (g-j) \bi_{\{j \text{ odd}\}} \right] - 1 - \f{2}{g}\sum_{j=1}^{g-1} (g-j) (-1)^j \right). 
\intertext{By Lemma \ref{lem:sums}, and using that $g$ is even:}
&= \f{1}{n-g}\left(
 2d - \f{4}{g}\left[ \f{g^2}{4} \right] - 1 + \f{2}{g}\f{g}{2} \right) \\
  &= \f{1}{n-g} (2d-g) \\
  &= 1,
\end{align*}
since $n=2d.$
\hfill \end{proof}

\begin{lm}\label{lm:bsum2}
$\sum_{i=1}^d b_i = 1.$
\end{lm}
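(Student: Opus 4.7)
The plan is to verify $\sum_{i=1}^d b_i = 1$ by direct computation, leveraging Lemma \ref{lem:aisum} which already establishes that $\sum_{i=1}^d a_i = 1$. Since the formula for $b_i$ is defined piecewise (with a different constant in the numerator when $i=d$), the first step is simply to split the sum into the indices $i=1,\dots,d-1$ and the isolated index $i=d$, so that
\[
\sum_{i=1}^d b_i = \frac{1}{n(g-1)}\left[\sum_{i=1}^{d-1}\bigl(2g-(n-g)a_i\bigr) + \bigl(g-(n-g)a_d\bigr)\right].
\]

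Next I would collect the constant terms and the $a_i$ terms separately. The constant contribution from the bracketed expression is $2g(d-1) + g = 2gd - g$, and the $a_i$ contribution is $-(n-g)\sum_{i=1}^d a_i$, since the coefficient $-(n-g)$ appears uniformly in front of every $a_i$ regardless of whether $i < d$ or $i = d$. This is the clean observation that makes the computation work: the numerators of $b_i$ were designed precisely so that the coefficient of $a_i$ does not depend on the case split, even though the constant does.

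Then I would apply Lemma \ref{lem:aisum} to replace $\sum_{i=1}^d a_i$ by $1$, and substitute $n = 2d$ to rewrite $2gd = gn$. This yields
\[
\sum_{i=1}^d b_i = \frac{1}{n(g-1)}\bigl[gn - g - (n-g)\bigr] = \frac{1}{n(g-1)}\bigl[gn - n\bigr] = \frac{n(g-1)}{n(g-1)} = 1,
\]
which completes the proof. There is no substantive obstacle here: the lemma is an elementary algebraic consequence of Lemma \ref{lem:aisum} combined with the careful choice of the constants in the definition of $b_i$, and the only thing to watch is keeping the $i=d$ boundary term's smaller constant ($g$ instead of $2g$) correctly bookkept when separating constants from the $a_i$ contribution.
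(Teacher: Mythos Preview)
Your proof is correct and follows essentially the same approach as the paper: both compute $\sum_{i=1}^d b_i$ directly from the definition, invoke Lemma~\ref{lem:aisum} to replace $\sum_{i=1}^d a_i$ by $1$, and simplify using $n=2d$. The only cosmetic difference is that the paper handles the $i=d$ term by writing $b_d$ as the generic formula minus the correction $\tfrac{g}{n(g-1)}$, whereas you split the sum explicitly at $i=d$; the arithmetic is identical.
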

\begin{proof}  
This lemma readily follows from the definition of the $b_i$ in terms of the $a_i$. 
\begin{align*}
\sum_{i=1}^d b_i &= \left( \sum_{i=1}^d \f{2g-(n-g)a_i}{n(g-1)}\right) - \f{g}{n(g-1)}\\
&= \f{2gd}{n(g-1)} - \f{n-g}{n(g-1)}\sum_{i=1}^d a_i  - \f{g}{n(g-1)}.
\intertext{By Lemma \ref{lem:aisum}:}
&= \f{2gd}{n(g-1)} - \f{n-g}{n(g-1)} - \f{g}{n(g-1)}\\
&= \f{1}{n(g-1)} \left(ng-n+g-g\right)\\
&= \f{1}{n(g-1)} \left(n(g-1)\right)\\
&= 1.
\end{align*}
\hfill \end{proof}

\begin{proof}[Proof (of Claim \ref{prop:sum})]
To show that $\tr(J_{n^2} Y)=n^2,$ we want to sum the entries of $Y$.  We mirror the proof of Claim \ref{cm:easysum} and first compute the sum of the entries in each minor block, which is either a $\f{1}{n}I_n, \f{1}{2n}A,$ or $\f{1}{2n}B$.  As in Claim \ref{cm:easysum}, Lemma \ref{lem:aisum} implies that $\tr(J_n \f{1}{2n}A)=\f{1}{2n}2n\sum_{i=1}^d a_i = \f{1}{2n}2n=1,$ and analogously Lemma \ref{lm:bsum2} implies that $\tr(J_n \f{1}{2n} B)=1.$  Moreover, $\tr(J_n \f{1}{n}I_n)=1$.  Hence, each of the $n^2$  minor blocks of $Y$ sums to 1, so that total sum of entries in $Y$ is
$$\tr(J_{n^2}Y) = n^2.$$ \hfill \end{proof}

\begin{cm}\label{prop:nonneg}
$Y\geq 0.$
\end{cm}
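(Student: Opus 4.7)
The plan is to reduce entrywise nonnegativity of $Y$ to nonnegativity of the scalars $a_i$ and $b_i$, and then verify those bounds by recognizing a Fej\'er-type trigonometric sum in the definition of $a_i$.

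First I would observe that every $n\times n$ minor block of $Y$ is one of $\f{1}{n}I_n,$ $\f{1}{2n}A,$ or $\f{1}{2n}B$; the first is clearly nonnegative entrywise. Because each circulant $C_i$ has $0/1$ entries, the matrices $A=\sum_{i=1}^d a_iC_i$ and $B=\sum_{i=1}^d b_iC_i$ will be nonnegative entrywise as soon as $a_i,b_i\geq 0$ for all $i$. So the work reduces to two scalar inequalities.

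For $a_i\geq 0$, I would use the Fej\'er-kernel identity
\[
g+2\sum_{j=1}^{g-1}(g-j)\cos(jx)=\left|\sum_{k=0}^{g-1}e^{\sqrt{-1}\,kx}\right|^2=\left(\frac{\sin(gx/2)}{\sin(x/2)}\right)^2\geq 0,
\]
which gives $\sum_{j=1}^{g-1}(g-j)\cos(jx)\geq -g/2$ whenever $\sin(x/2)\neq 0$. Applying this with $x=\pi i/d$ for $1\leq i\leq d-1$ yields $2+\f{4}{g}\sum_{j=1}^{g-1}(g-j)\cos(\pi ij/d)\geq 0$, so $a_i\geq 0$. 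For $i=d$ the cosine sum reduces to $\sum_{j=1}^{g-1}(g-j)(-1)^j$, which Lemma \ref{lem:sums} evaluates to $-g/2$, making $a_d=\f{1}{n-g}[1+(-1)]=0$.

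For $b_i\geq 0$, given the definitions $b_i=\f{2g-(n-g)a_i}{n(g-1)}$ (resp.\ $b_d=\f{g-(n-g)a_d}{n(g-1)}$), I need upper bounds on the $a_i$. The crude bound $\cos\leq 1$ gives $\sum_{j=1}^{g-1}(g-j)\cos(\pi ij/d)\leq \sum_{j=1}^{g-1}(g-j)=g(g-1)/2$, hence $a_i\leq \f{1}{n-g}\left[2+\f{4}{g}\cdot\f{g(g-1)}{2}\right]=\f{2g}{n-g}$ for $i<d$, so the numerator $2g-(n-g)a_i\geq 0$. For $i=d$, $a_d=0$, so $b_d=\f{g}{n(g-1)}>0$. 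Combining these confirms $A,B\geq 0$ entrywise, and therefore $Y\geq 0$.

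The main obstacle, if any, is spotting the Fej\'er representation of the cosine sum that appears in $a_i$; everything else is routine bookkeeping once that nonnegativity is in hand. An alternative would be to prove the bound $\sum_{j=1}^{g-1}(g-j)\cos(jx)\geq -g/2$ directly from Lagrange's trigonometric identity, which the paper already invokes for Lemma \ref{lem:trig}, so no new machinery is required.
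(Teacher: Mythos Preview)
Your argument is correct and follows the same overall structure as the paper: reduce $Y\geq 0$ to $a_i,b_i\geq 0$, prove $a_i\geq 0$ via the lower bound $\sum_{j=1}^{g-1}(g-j)\cos(jx)\geq -g/2$, and prove $b_i\geq 0$ via the upper bound obtained from $\cos\leq 1$. The one genuine difference is how you obtain the lower bound on the cosine sum. The paper establishes an ad hoc identity (Lemma~\ref{cm:nonnegid}) by product-to-sum and telescoping, namely
\[
(2\cos\theta-2)\sum_{j=1}^{g-1}(g-j)\cos(j\theta)=\cos(g\theta)-g\cos\theta+(g-1),
\]
and then divides through; you instead recognize the sum as (a shift of) the Fej\'er kernel, which is manifestly nonnegative as a squared modulus. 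These two routes are in fact equivalent: rewriting the paper's right-hand side as $-\tfrac{g}{2}+\tfrac{1-\cos(g\theta)}{2(1-\cos\theta)}=-\tfrac{g}{2}+\tfrac{1}{2}\bigl(\sin(g\theta/2)/\sin(\theta/2)\bigr)^2$ recovers exactly your Fej\'er expression. Your version has the advantage of being a named, standard identity (no new lemma needed), while the paper's version is fully self-contained. Two minor remarks: your Fej\'er bound already covers $i=d$ (since $\sin(\pi/2)\neq 0$), so the separate computation $a_d=0$ is not strictly necessary, though it is correct; and for $b_d$ the paper also just bounds the cosine sum by $\cos\leq 1$, matching your argument.
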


To show that $Y\geq 0,$ we show that the $a_i$ and $b_i$ are nonnegative.  We will use the following trigonometric identity.

\begin{lm}\label{cm:nonnegid}
$$\left(2\cos(\theta)-2\right)\sum_{j=1}^{g-1}(g-j)\cos(j\theta)=  \cos(g\theta)-g\cos(\theta)+(g-1).$$
\end{lm}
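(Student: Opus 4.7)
The plan is to prove the identity by direct expansion using the product-to-sum formula $2\cos\theta\cos(j\theta) = \cos((j-1)\theta) + \cos((j+1)\theta)$ and then collecting coefficients of each $\cos(k\theta)$ that appears. This is purely a trigonometric computation; the only challenge is careful bookkeeping.

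First I would apply the product-to-sum formula termwise to rewrite the LHS as
\begin{align*}
(2\cos\theta - 2)\sum_{j=1}^{g-1}(g-j)\cos(j\theta) &= \sum_{j=1}^{g-1}(g-j)\cos((j-1)\theta) + \sum_{j=1}^{g-1}(g-j)\cos((j+1)\theta) \\
&\qquad - 2\sum_{j=1}^{g-1}(g-j)\cos(j\theta).
\end{align*}
Next I would reindex each sum by the new angle variable $k\theta$: substituting $k=j-1$ in the first sum gives $\sum_{k=0}^{g-2}(g-1-k)\cos(k\theta)$; substituting $k=j+1$ in the second gives $\sum_{k=2}^{g}(g-k+1)\cos(k\theta)$; and the third sum is already $\sum_{k=1}^{g-1}(g-k)\cos(k\theta)$ up to the factor of $-2$.

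Then I would collect the coefficient of $\cos(k\theta)$ for each $k \in \{0, 1, \ldots, g\}$. For interior indices $2 \le k \le g-2$, all three sums contribute and the coefficient is $(g-1-k) + (g-k+1) - 2(g-k) = 0$. At the boundary values only one or two of the sums contribute: the coefficient of $\cos(0) = 1$ is $g-1$ (first sum only), the coefficient of $\cos\theta$ is $(g-2) - 2(g-1) = -g$ (first and third sums), the coefficient of $\cos((g-1)\theta)$ is $2 - 2 = 0$ (second and third sums), and the coefficient of $\cos(g\theta)$ is $1$ (second sum only). Adding these contributions yields $\cos(g\theta) - g\cos\theta + (g-1)$, which is the RHS.

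The main (only) obstacle is making sure the reindexing bookkeeping is correct at the endpoints $k = 0, 1, g-1, g$, since those indices lie outside the common range $\{2,\ldots,g-2\}$ and must be handled separately. An alternative would be to write $z = e^{i\theta}$, factor $2\cos\theta - 2 = z^{-1}(z-1)^2$, and use the closed form for $\sum_{j=1}^{g-1}(g-j)z^j$ derived from differentiating a geometric series, but the direct termwise approach above is shorter and self-contained.
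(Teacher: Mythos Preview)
Your proposal is correct and is essentially the same as the paper's proof: the paper also applies the product-to-sum identity, reindexes the three resulting sums, and observes that the combined coefficient of $\cos(j\theta)$ vanishes for interior $j$ while the boundary contributions give exactly $\cos(g\theta)-g\cos(\theta)+(g-1)$. The only cosmetic difference is that the paper shifts all three sums to the common range $j=1,\ldots,g-1$ and writes the boundary corrections as $+\cos(g\theta)-g\cos(\theta)+(g-1)\cos(0)-0$, whereas you treat each endpoint $k\in\{0,1,g-1,g\}$ individually; the arithmetic is identical.
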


\begin{proof}
\begin{align*}
& \left(2\cos(\theta)-2\right)\sum_{j=1}^{g-1}(g-j)\cos(j\theta) \\
&\hspace{10mm} = 2\sum_{j=1}^{g-1}(g-j)\cos(j\theta)\cos(\theta)-2\sum_{j=1}^{g-1}(g-j)\cos(j\theta).\intertext{Applying the product-to-sum identity for cosine:}
&\hspace{10mm} =  \sum_{j=1}^{g-1}(g-j)\cos((j+1)\theta) + \sum_{j=1}^{g-1}(g-j)\cos((j-1)\theta) -2 \sum_{j=1}^{g-1}(g-j)\cos(j\theta). \intertext{Reindexing to combine terms:}
&\hspace{10mm} =  \sum_{j=2}^{g}(g-j+1)\cos(j\theta) + \sum_{j=0}^{g-2}(g-j-1)\cos(j\theta) - 2\sum_{j=1}^{g-1}(g-j)\cos(j\theta)\\
&\hspace{10mm} = \sum_{j=1}^{g-1}\left[(g-j+1)+(g-j-1)-2(g-j)\right]\cos(j\theta) +\cos(g\theta)-g\cos(\theta)+(g-1)\cos(0)-0\\
&\hspace{10mm} =   \cos(g\theta)-g\cos(\theta)+(g-1).
\end{align*}
\hfill \end{proof}

\begin{proof}[Proof (of Claim \ref{prop:nonneg})]
We first show that the $a_i$ are nonnegative. Recall that 
$$a_i  \propto 2 + \f{4}{g} \sum_{j=1}^{g-1} (g-j) \cos\left(\f{\pi i j}{d}\right)$$ (where the constant of proportionality is different for $a_1, ..., a_{d-1}$ and for $a_d$, but in both cases is positive).
To show that the $a_i$ are nonnegative, we thus want to show that, for $i=1, ..., d$,
$$ \f{4}{g} \sum_{j=1}^{g-1} (g-j) \cos\left(\f{\pi i j}{d}\right) \geq -2,$$ or equivalently
$$\sum_{j=1}^{g-1} (g-j) \cos\left(\f{\pi i j}{d}\right) \geq -\f{g}{2}.$$
We appeal to Lemma \ref{cm:nonnegid} with $\theta = \f{\pi i}{d}.$  For $i=1, ..., d,$ $\cos(\theta)\neq 1,$ so we have that:
\begin{align*}
\sum_{j=1}^{g-1} (g-j) \cos\left(\f{\pi i j}{d}\right) &=  \f{\cos(g\theta)-g\cos(\theta)+g-1}{2\cos(\theta)-2} \\
&= \f{g(1-\cos(\theta))}{2(\cos(\theta)-1)}+\f{\cos(g\theta)-1}{2\cos(\theta)-2}\\
&= -\f{g}{2} + \f{1-\cos(g\theta)}{2-2\cos(\theta)} \\
& \geq -\f{g}{2},
\end{align*}
since $1-\cos(g\theta)\geq 0$ and $2-2\cos(\theta)\geq 0$.

We now need only show that the $b_i\geq 0$. 
Recall that 
$$b_i = \begin{cases}  \f{2g-(n-g)a_i}{n(g-1)}, & i<d \\  \f{g-(n-g)a_i}{n(g-1)}, & i=d. \end{cases}$$  For $i=1, ..., d-1$ it suffices to show that $2g\geq (n-g)a_i.$  In these cases, we have
\begin{align*}
(n-g)a_i &= 2 +  \f{4}{g} \sum_{j=1}^{g-1} (g-j) \cos\left(\f{\pi i j}{d}\right).
\intertext{Using $\cos(\theta)\leq 1$:}
&\leq  2 +  \f{4}{g} \sum_{j=1}^{g-1} (g-j) \\
&= 2 + \f{4}{g} (1+2+...+(g-1)) \\
&= 2 + \f{4}{g} \f{(g-1)g}{2} \\
&= 2 + 2(g-1) \\
&= 2g,
\end{align*}
as desired.

For $i=d$ the situation is analogous.  We want $(n-g)a_d \leq g$ which follows by the exact computations as above.
\begin{align*}
(n-g)a_d &= \f{1}{2}\left[2 +  \f{4}{g} \sum_{j=1}^{g-1} (g-j) \cos\left(\f{\pi i j}{d}\right) \right] \\
&\leq \f{1}{2} 2g\\
&= g.
\end{align*}
\hfill
\end{proof}

\begin{prop}\label{prop:Ypsd}
$Y\succeq 0.$
\end{prop}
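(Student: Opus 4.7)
My approach mirrors Claim~\ref{cm2}: I would exploit the Kronecker decomposition
\[
2nY = \bigl[(J_g - I_g)\otimes J_{n/g}\bigr]\otimes B + \bigl[I_g\otimes J_{n/g}\bigr]\otimes A + \bigl[I_g\otimes I_{n/g}\bigr]\otimes (2I_n - A)
\]
and build a simultaneous eigenbasis. Since $A=\sum_i a_iC_i$ and $B=\sum_i b_iC_i$ are symmetric circulants, Lemma~\ref{lm:circ} gives them a shared Fourier eigenbasis $\{v_j\}_{j=0}^{n-1}$, with eigenvalues $\lambda_j^A=2a^{(j)}$ and $\lambda_j^B=2b^{(j)}$, where $a^{(j)}=\sum_{i=1}^d a_i\cos(2\pi ij/n)$ and analogously for $b^{(j)}$. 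For the outer factors I would use the basis of $\mathbb{R}^n$ consisting of tensors $y\otimes z$ with $y\in\{e^{(g)}\}\cup\{e_1^{(g)}-e_i^{(g)}:2\leq i\leq g\}$ and $z\in\{e^{(n/g)}\}\cup\{e_1^{(n/g)}-e_i^{(n/g)}:2\leq i\leq n/g\}$; these simultaneously diagonalize all of $J_g, I_g, J_{n/g}, I_{n/g}$. Tensoring produces $n^2$ simultaneous eigenvectors $(y\otimes z)\otimes v_j$ of $2nY$, and reading off eigenvalues case-by-case (depending on whether $y=e^{(g)}$ and whether $z=e^{(n/g)}$) reduces $Y\succeq 0$ to checking, for every $j\in\{0,\ldots,n-1\}$, the three inequalities
\begin{align*}
\text{(I)} &\quad (g-1)\tfrac{n}{g}b^{(j)} + (\tfrac{n}{g}-1)a^{(j)} + 1\;\geq\;0,\\
\text{(II)} &\quad 1 - a^{(j)}\;\geq\;0,\\
\text{(III)} &\quad -\tfrac{n}{g}b^{(j)} + (\tfrac{n}{g}-1)a^{(j)} + 1\;\geq\;0.
\end{align*}

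Inequality (I) should follow from an algebraic identity hidden in the definition of $b_i$. A direct rearrangement yields $(\tfrac{n}{g}-1)a_i+(g-1)\tfrac{n}{g}b_i=2$ for $i<d$ and $=1$ for $i=d$. Multiplying by $\cos(2\pi ij/n)$ and summing over $i$ gives
\[
(\tfrac{n}{g}-1)a^{(j)} + (g-1)\tfrac{n}{g}b^{(j)} \;=\; 2\sum_{i=1}^{d-1}\cos(2\pi ij/n) + \cos(\pi j),
\]
which equals $-1$ for every $j\neq 0$ (using $\sum_{i=0}^{n-1}\cos(2\pi ij/n)=0$) and $n$ at $j=0$; either way (I) holds. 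Solving this identity for $b^{(j)}$ and substituting into (III) collapses the latter to $\tfrac{g}{g-1}\bigl[(\tfrac{n}{g}-1)a^{(j)}+1\bigr]\geq 0$, i.e.\ $a^{(j)}\geq -\tfrac{g}{n-g}$. Together with (II), the entire argument reduces to establishing the two-sided bound $-\tfrac{g}{n-g}\leq a^{(j)}\leq 1$ for every $j$.

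The main technical step, and the one I expect to be the hardest, is producing an explicit formula for $a^{(j)}$. Applying Lemma~\ref{cm:nonnegid} with $\theta=2\pi i/n$ (equivalently, the Fej\'er-kernel identity $\sum_{k=1}^{g-1}(g-k)\cos(k\theta)=\tfrac{1}{2}\bigl[(\sin(g\theta/2)/\sin(\theta/2))^2-g\bigr]$) converts the definition of $a_i$ into the closed form
\[
a_i \;=\; \frac{2}{g(n-g)}\cdot\frac{\sin^2(g\pi i/n)}{\sin^2(\pi i/n)}\quad\text{for } 1\leq i\leq d-1,\qquad a_d=0.
\]
I would then extend this sequence $n$-periodically by setting $\bar{a}_0:=2g/(n-g)$ and expand the Fej\'er kernel as $\sum_{l,m=0}^{g-1}e^{2\pi\sqrt{-1}\,k(l-m)/n}$; the resulting discrete Fourier transform telescopes to the piecewise-linear ``tent''
\[
a^{(j)} \;=\; \begin{cases} 1-\tfrac{nj}{g(n-g)}, & 1\leq j\leq g-1,\\[2pt] -\tfrac{g}{n-g}, & g\leq j\leq n-g,\\[2pt] 1-\tfrac{n(n-j)}{g(n-g)}, & n-g+1\leq j\leq n-1,\end{cases}
\]
with $a^{(0)}=1$ by Lemma~\ref{lem:aisum}. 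From this formula the bounds $-\tfrac{g}{n-g}\leq a^{(j)}\leq 1$ are immediate, every eigenvalue listed above is nonnegative, and $Y\succeq 0$ follows.
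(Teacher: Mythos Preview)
Your proposal is correct, and the overall architecture---simultaneously diagonalizing the Kronecker factors, reducing to the three scalar inequalities (I)--(III), and then collapsing everything to the two-sided bound $-\tfrac{g}{n-g}\le a^{(j)}\le 1$---matches the paper's Claims~\ref{cm:evals}--\ref{cm:evalsa} exactly. The difference lies in how that two-sided bound is established. The paper never computes $a^{(j)}$ in closed form: for the upper bound it invokes $a_i\ge 0$ (Claim~\ref{prop:nonneg}, which in turn rests on Lemma~\ref{cm:nonnegid}), and for the lower bound it expands $a^{(j)}$ directly from the definition of $a_i$, applies the product-to-sum identity to each $\cos(\pi ij/d)\cos(\pi ik/d)$, and then bounds the resulting sums via Lemmas~\ref{cm:akLB} and~\ref{cm:bjsum}. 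Your route instead recognizes $a_i$ as a sampled Fej\'er kernel, $a_i=\tfrac{2}{g(n-g)}\sin^2(g\pi i/n)/\sin^2(\pi i/n)$, rewrites it as $\tfrac{2}{g(n-g)}\sum_{l,m=0}^{g-1}e^{2\pi\sqrt{-1}\,i(l-m)/n}$, and reads off the DFT by orthogonality of characters on $\mathbb{Z}_n$; the resulting piecewise-linear ``tent'' for $a^{(j)}$ makes both bounds immediate and in fact yields the \emph{exact} spectrum of $Y$, not just nonnegativity. This is more informative and sidesteps Lemmas~\ref{cm:akLB} and~\ref{cm:bjsum} entirely; the paper's approach, by contrast, stays within elementary trigonometric bookkeeping and never needs to identify the Fej\'er kernel. (One tiny slip: your variable ``$k$'' in the Fej\'er expansion should be the summation index $i$, and the DFT step is orthogonality rather than a telescope---but the argument is sound.)
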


As before, define $a^{(k)} = \sum_{i=1}^d a_i \cos\left(\f{2\pi i k}{n}\right)$ and $b^{(k)} = \sum_{i=1}^d b_i \cos\left(\f{2\pi i k}{n}\right).$   Recall, as in  Claim \ref{cm2}, that the eigenvectors of a general circulant matrix are of the form $v_j = (1, w_j, w_j^2, ..., w_j^{n-1})$ for $j=0, 1, ..., n-1$

\begin{cm}\label{cm:evals}
A and B are simultaneously diagonalizable.  The eigenvalues of $A$ are $$\lambda_k(A)=2a^{(k)}$$ for $k=0, ..., n-1.$ The eigenvalues of $B$ are $$\lambda_k(B)=2b^{(k)}$$ for $k=0, ..., n-1,$ where $\lambda_k(A)$ and $\lambda_k(B)$ correspond to the same eigenvector $v_k.$
\end{cm}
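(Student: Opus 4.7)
The plan is to reduce the claim entirely to Lemma \ref{lm:circ} together with the eigenvalue computation already carried out in the proof of Claim \ref{cm2}. The key observation is that both $A = \sum_{i=1}^d a_i C_i^{(n)}$ and $B = \sum_{i=1}^d b_i C_i^{(n)}$ are, by construction, linear combinations of symmetric circulant matrices, hence are themselves symmetric circulant. Lemma \ref{lm:circ} says that \emph{every} circulant matrix in $\R^{n\times n}$ has the same eigenvectors $v_j=(1,w_j,w_j^2,\ldots,w_j^{n-1})$ with $w_j=e^{-2\pi j\sqrt{-1}/n}$, independent of its entries. Therefore $A$ and $B$ are simultaneously diagonalized by $v_0,v_1,\ldots,v_{n-1}$, which both establishes simultaneous diagonalizability and tells us which eigenvector corresponds to which eigenvalue.

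For the eigenvalues themselves, I would read off the first row of $A$: by the definitions of $C_1^{(n)},\ldots,C_d^{(n)}$, it has a $0$ in position $0$, entries $a_i$ in positions $i$ and $n-i$ for $i=1,\ldots,d-1$, and entry $2a_d$ in position $d$. Then I would apply the formula $\lambda_j=\sum_{s=0}^{n-1} m_s w_j^s$ from Lemma \ref{lm:circ} and use the two trigonometric identities that already appeared in the proof of Claim \ref{cm2}, namely $w_j^i+w_j^{n-i}=2\cos(2\pi ij/n)$ for $1\le i\le d-1$, and $w_j^d=\cos(\pi j)=\cos(2\pi jd/n)$ (using $d=n/2$). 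These combine to give
\[
\lambda_j(A)=\sum_{i=1}^{d-1} a_i\bigl(w_j^i+w_j^{n-i}\bigr)+2a_d w_j^d = 2\sum_{i=1}^d a_i\cos\!\left(\frac{2\pi i j}{n}\right)=2a^{(j)}.
\]
The identical computation with $b_i$ in place of $a_i$ yields $\lambda_j(B)=2b^{(j)}$, and since both computations are evaluated on the same eigenvector $v_j$, the correspondence between $\lambda_k(A)$ and $\lambda_k(B)$ is automatic.

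I do not anticipate a genuine obstacle here: the derivation does not use any special property of the coefficients $a_i,b_i$ beyond the structural form $A=\sum a_i C_i$, $B=\sum b_i C_i$, so the argument is \emph{verbatim} the one already written in the proof of Claim \ref{cm2}. In fact, one could shorten the write-up by simply citing Claim \ref{cm2} and noting that its eigenvalue derivation goes through unchanged with the new definitions of $a_i$ and $b_i$, since the $C_i^{(n)}$'s and the identities used are the same.
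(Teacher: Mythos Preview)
Your proposal is correct and is essentially the same approach as the paper's: the paper's entire proof is the single sentence ``This is exactly as in Claim \ref{cm2}, since $A$ and $B$ are constructed using the same basis of symmetric circulant matrices,'' which is precisely the shortcut you identify in your final paragraph. The detailed computation you spell out is just the content of Claim \ref{cm2} written out again, so your more verbose version and the paper's one-line citation are equivalent.
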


\begin{proof}This is exactly as in Claim \ref{cm2}, since $A$ and $B$ are constructed using the same basis of symmetric circulant matrices.
\hfill \end{proof}

\begin{cm} The distinct eigenvalues of $2nY$ are
$$\begin{cases}
2(g-1)\f{n}{g}b^{(k)}+2\f{n}{g}a^{(k)}+(2-2a^{(k)})\\
-2\f{n}{g}b^{(k)}+2\f{n}{g}a^{(k)}+(2-2a^{(k)})\\
2-2a^{(k)},\\
\end{cases}$$ over $k=0, ..., n-1.$
\end{cm}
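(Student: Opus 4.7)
The plan is to explicitly construct a complete eigenbasis for $2nY$ out of simultaneous eigenvectors of the three summands, using the tensor structure and the spectral behavior of the Kronecker product (Theorem 4.2.12 of Horn and Johnson, as cited in Section~\ref{SDPTSP}). Each summand is a Kronecker product of three factors whose outer slots involve only $J_g, I_g, J_{n/g}, I_{n/g}$ and whose innermost slot is either $A$, $B$, or $2I_n - A$; since $A$ and $B$ are circulant with the common eigenbasis $v_k$ from Claim~\ref{cm:evals}, and $2I_n - A$ is also circulant with eigenvalue $2-2a^{(k)}$ on $v_k$, it suffices to find a simultaneous eigenbasis for the outer slots.

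For the leftmost slot I would use the eigenbasis of $J_g$ from the footnote in the proof of Claim~\ref{cm2}: the vector $e^{(g)}$ (eigenvalue $g$ for $J_g$, hence eigenvalue $g-1$ for $J_g - I_g$) together with the $g-1$ vectors $e_1^{(g)}-e_i^{(g)}$ (eigenvalue $0$ for $J_g$, hence $-1$ for $J_g - I_g$); all of these are also eigenvectors of $I_g$ with eigenvalue $1$. For the middle slot I would use the analogous eigenbasis of $J_{n/g}$, namely $e^{(n/g)}$ (eigenvalue $n/g$) and the $(n/g)-1$ vectors $e_1^{(n/g)}-e_j^{(n/g)}$ (eigenvalue $0$), all eigenvectors of $I_{n/g}$ with eigenvalue $1$. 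Taking threefold tensor products $u\otimes w\otimes v_k$ yields $g\cdot (n/g)\cdot n = n^2$ linearly independent eigenvectors of $2nY$, so this is a complete eigenbasis.

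I would then split into four cases according to whether $u$ is $e^{(g)}$ or of the form $e_1^{(g)}-e_i^{(g)}$, and whether $w$ is $e^{(n/g)}$ or of the form $e_1^{(n/g)}-e_j^{(n/g)}$, and add the three contributions. When $u=e^{(g)}, w=e^{(n/g)}$ the three summands contribute $2(g-1)(n/g)b^{(k)}$, $2(n/g)a^{(k)}$, and $2-2a^{(k)}$, giving the first listed eigenvalue. When $u=e_1^{(g)}-e_i^{(g)}, w=e^{(n/g)}$ the first summand's $J_g-I_g$-eigenvalue flips to $-1$, yielding $-2(n/g)b^{(k)}+2(n/g)a^{(k)}+(2-2a^{(k)})$, the second listed eigenvalue. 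Whenever $w = e_1^{(n/g)}-e_j^{(n/g)}$, both $J_{n/g}$-factors vanish regardless of $u$, leaving only $2-2a^{(k)}$ from the third summand; this accounts for the third listed eigenvalue (with the multiplicity doubled by the two choices of $u$).

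The calculations are entirely routine once the eigenbasis is in hand; the only real obstacle is being careful that the three claimed outputs exhaust all combinations (noting that the ``$w$-vanishing'' cases collapse the $u$-distinction into the single eigenvalue $2-2a^{(k)}$) and tracking that the $n^2$ resulting eigenvectors are indeed linearly independent, which follows from the Kronecker products of linearly independent bases being linearly independent. No further properties of the $a_i$ or $b_i$ are needed here beyond Claim~\ref{cm:evals}; positivity of the eigenvalues is deferred to Proposition~\ref{prop:Ypsd}.
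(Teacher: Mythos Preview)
Your proposal is correct and follows essentially the same approach as the paper: both construct simultaneous eigenvectors of the three summands by tensoring the standard eigenbasis of $J_g$ (namely $e^{(g)}$ and $e_1^{(g)}-e_i^{(g)}$), the analogous eigenbasis of $J_{n/g}$, and the circulant eigenvectors $v_k$ from Claim~\ref{cm:evals}, then read off the three eigenvalue formulas by casework. The only cosmetic difference is that the paper bundles the two outer slots into single matrices $(J_g-I_g)\otimes J_{n/g}$, $I_g\otimes J_{n/g}$, $I_g\otimes I_{n/g}$ and lists their three eigenvalue pairs $(\mu_i^B,\mu_i^A)$ directly, whereas you keep the slots separate and explicitly collapse the four $(u,w)$-cases to three; your version has the minor advantage of making the $n^2$-count and completeness of the eigenbasis explicit.
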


\begin{proof}
Note that
$$2nY=(J_g-I_g)\otimes J_{n/g}\otimes B + I_g\otimes J_{n/g} \otimes A+I_g\otimes I_{n/g}\otimes (2I_n-A).$$  Claim \ref{cm:evals} gives a set of simultaneous eigenvectors/eigenvalues for $B$ and $A$ (and thus also $2I_n-A$) which we denote by $v_k$, for $k=1, ..., n$. We can similarly obtain a simultaneous set of eigenvectors/eigenvalues of $(J_g-I_g)\otimes J_{n/g}, I_g\otimes J_{n/g}$ and $I_g\otimes I_{n/g},$ so that we will again  use properties of the Kronecker product to explicitly compute the eigenvalues of $Y$ as a function of the $a^{(k)}$ and $b^{(k)}$. Note $(J_g-I_g)\otimes J_{n/g} $ has three distinct eigenvalues:  $J_g-I_g$ has two distinct eigenvalues ($g-1$ with associated eigenvector $e^{(g)}$ and $-1$ with associated eigenvectors $e_1^{(g)}-e_i^{(g)}$, for $i=2, ..., g$) and $J_{n/g}$ has two distinct eigenvalues ($n/g$ with associated eigenvector $e^{(n/g)}$ and $0$ with associated eigenvectors $e^{(n/g)}_1-e^{(n/g)}_i$ for $i=2, ..., \f{n}{g}$). Hence spectral products of Kronecker products imply that the distinct eigenvalues of $(J_g-I_g)\otimes J_{n/g}$ are $$\mu_i^B:=\begin{cases} (g-1)\times\f{n}{g}, & i=1  \text{ (using } e^{(g)} \otimes e^{(n/g)}) 
\\ -1\times\f{n}{g}, &i=2  \text{ (using } \left(e_1^{(g)}-e_i^{(g)} \right) \otimes e^{(n/g)}) 
\\ (g-1)\times0=-1\times 0, & i=3 \text{ (using } e^{(g)} \otimes \left( e^{(n/g)}_1-e^{(n/g)}_i \right) \text{ or } \left(  e_1^{(g)}-e_i^{(g)}\right) \otimes \left( e^{(n/g)}_1-e^{(n/g)}_i\right))
.\end{cases}$$  In exactly the same way, the distinct eigenvalues of $I_g\otimes J_{n/g}$ are $$\mu_i^A:=\begin{cases} 1\times\f{n}{g}, & i=1  \text{ (using } e^{(g)} \otimes e^{(n/g)}) 
\\ 1\times\f{n}{g}, &i=2  \text{ (using } \left( e_1^{(g)}-e_i^{(g)} \right) \otimes e^{(n/g)}) 
\\ 1\times0, & i=3 \text{ (using } e^{(g)} \otimes \left( e^{(n/g)}_1-e^{(n/g)}_i \right)\text{ or }   \left(e_1^{(g)}-e_i^{(g)}\right) \otimes \left( e^{(n/g)}_1-e^{(n/g)}_i\right))
.\end{cases}$$  

For $1\leq i\leq 3,$ let $u_i$ be a shared eigenvector of $(J_g-I_g)\otimes J_{n/g}$ and  $I_g\otimes J_{n/g} \otimes A$ with respective associated eigenvalues $\mu_i^B$ and $\mu_i^A$.  Then:
$$\left((J_g-I_g)\otimes J_{n/g}\otimes B + I_g\otimes J_{n/g} \otimes A+I_g\otimes I_{n/g}\otimes (2I_n-A)\right)(u_i\otimes v_k) =( \mu_i^B\lambda_k^B + \mu_i^A\lambda_k^A+(2-\lambda_k^A))(u_i\otimes v_k).$$ 
Plugging in for the three cases of $\mu_i^A$ and $\mu_i^B$, we get that the distinct eigenvalues of $2nY$ are
\begin{equation} \label{eq:evals}
\begin{cases}
2(g-1)\f{n}{g}b^{(k)}+2\f{n}{g}a^{(k)}+(2-2a^{(k)})\\
-2\f{n}{g}b^{(k)}+2\f{n}{g}a^{(k)}+(2-2a^{(k)})\\
2-2a^{(k)},\\
\end{cases} \end{equation}
over $k=0, ...., n-1$ as claimed \hfill
\end{proof}

\begin{cm}\label{cm:akbkrel}  For $k=1, ..., n-1$,
$$b^{(k)}= -\f{g}{n(g-1)}- \f{n-g}{n(g-1)}a^{(k)}.$$
\end{cm}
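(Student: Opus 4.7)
The plan is to substitute the explicit formula for $b_i$ directly into the definition $b^{(k)}=\sum_{i=1}^d b_i \cos(2\pi ik/n)$ and apply Lemma \ref{lem:trig} to evaluate the resulting pure sum of cosines. To manage the awkward fact that $b_d$ differs from $b_1,\ldots,b_{d-1}$ by the extra subtracted term $\frac{g}{n(g-1)}$, I will first combine both cases into a single uniform expression
$$b_i \;=\; \frac{2g}{n(g-1)} \;-\; \frac{n-g}{n(g-1)}\, a_i \;-\; \frac{g}{n(g-1)}\, \bi_{\{i=d\}},$$
so that the boundary correction at $i=d$ is tracked separately from the main terms.

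Next I would substitute this expression into $b^{(k)}$ and use $n=2d$ to rewrite $\cos(2\pi ik/n)=\cos(\pi ik/d)$. The sum splits into three pieces: a linear-in-$a_i$ piece that immediately reassembles into $-\frac{n-g}{n(g-1)}a^{(k)}$; a constant multiple of $\sum_{i=1}^d \cos(\pi ik/d)$; and the single boundary term $-\frac{g}{n(g-1)}\cos(\pi k)=-\frac{g}{n(g-1)}(-1)^k$. Lemma \ref{lem:trig}, whose hypothesis $0<k<n$ matches the stated range $k=1,\ldots,n-1$ exactly, gives
$$\sum_{i=1}^d \cos\!\left(\tfrac{\pi ik}{d}\right) \;=\; \tfrac{-1+(-1)^k}{2},$$
so the constant piece evaluates to $\frac{g}{n(g-1)}\bigl(-1+(-1)^k\bigr)$.

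Adding the three contributions, the two $(-1)^k$ terms cancel exactly and what remains is
$$b^{(k)} \;=\; -\frac{g}{n(g-1)} \;-\; \frac{n-g}{n(g-1)}\, a^{(k)},$$
which is the claim. There is no real obstacle here: the proof is a direct computation and the only subtle point is bookkeeping the boundary term at $i=d$ so that the cancellation of the $(-1)^k$ contributions from Lemma \ref{lem:trig} and from $\cos(\pi k)$ is made explicit. This is precisely the ``coupling'' between $b^{(k)}$ and $a^{(k)}$ that will feed into the eigenvalue non-negativity argument for the three families in \eqref{eq:evals}, so keeping the constants $\frac{g}{n(g-1)}$ and $\frac{n-g}{n(g-1)}$ exact (rather than absorbing them into $O(1/n)$ terms) will be important for the downstream use.
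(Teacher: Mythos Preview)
Your proposal is correct and follows essentially the same approach as the paper's proof: both substitute the definition of $b_i$ into $b^{(k)}$, split off the $i=d$ boundary correction, apply Lemma~\ref{lem:trig} to the pure cosine sum, and observe that the $(-1)^k$ contributions cancel. Your use of the indicator $\bi_{\{i=d\}}$ to unify the two cases of $b_i$ is a cosmetic repackaging of the paper's handling, which instead writes the full sum with the $i<d$ formula and then subtracts the $\frac{g}{n(g-1)}\cos(\pi k)$ correction explicitly.
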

\begin{proof}
We have that
\begin{align*}
b^{(k)} &= \sum_{i=1}^d b_i \cos\left(\f{2\pi i k}{n}\right) \\
&= \left[\sum_{i=1}^d \f{2g-(n-g)a_i}{n(g-1)}\cos\left(\f{2\pi i k}{n}\right)\right] - \f{g}{n(g-1)} \cos(\pi k). \intertext{Using Lemma \ref{lem:trig} and the definition of $a^{(k)}$:}
&= \f{2g}{n(g-1)}\left[\sum_{i=1}^d \cos\left(\f{2\pi i k}{n}\right) \right] - \f{n-g}{n(g-1)}\left[\sum_{i=1}^d a_i \cos\left(\f{2\pi i k}{n}\right) \right] - \f{g}{n(g-1)} (-1)^k\\
&= \f{g}{n(g-1)}\left((-1)+(-1)^k\right) - \f{n-g}{n(g-1)}a^{(k)}- \f{g}{n(g-1)} (-1)^k\\
&= -\f{g}{n(g-1)}- \f{n-g}{n(g-1)}a^{(k)}.\\
\end{align*}
\hfill \end{proof}

Plugging in for the $b^{(k)}$ we can simplify the eigenvalues of $2nY$.  
\begin{cm}\label{cm:evalsa}
The eigenvalues of $2nY$ are 
$$\begin{cases}
0\\
 2\left(\f{g}{g-1}\right) + 2\left(\f{n-g}{g-1}\right)a^{(k)}\\
2-2a^{(k)},\\
\end{cases}$$
over $k=1, ..., n-1$ and
$$\begin{cases}
2n\\
0,\\
\end{cases}$$
corresponding to $k=0$.
\end{cm}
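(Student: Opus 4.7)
The plan is to simply substitute the relation from Claim \ref{cm:akbkrel} into the three eigenvalue expressions from the previous claim, handling the cases $k=0$ and $k \neq 0$ separately since the relation only applies to $k \neq 0$.

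First I would dispatch the $k=0$ case directly. By Lemma \ref{lem:aisum} and Lemma \ref{lm:bsum2}, $a^{(0)}=b^{(0)}=1$. Plugging these into the three eigenvalue expressions, the first becomes $2(g-1)\tfrac{n}{g}+2\tfrac{n}{g}+0 = 2n$, while both the second and third collapse to $0$. This gives the claimed $2n$ and $0$ for $k=0$.

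Next, for $k=1,\ldots,n-1$, I would substitute $b^{(k)}= -\tfrac{g}{n(g-1)} - \tfrac{n-g}{n(g-1)}a^{(k)}$ from Claim \ref{cm:akbkrel} into each of the three expressions. For the first expression, the coefficient in front of the constant term becomes $2(g-1)\tfrac{n}{g}\cdot\left(-\tfrac{g}{n(g-1)}\right) = -2$, which cancels with the constant $+2$; the coefficient of $a^{(k)}$ simplifies to $-\tfrac{2(n-g)}{g}+\tfrac{2n}{g}-2 = 0$, so the first eigenvalue vanishes. For the second expression, the constant piece becomes $\tfrac{2}{g-1}+2 = \tfrac{2g}{g-1}$, and combining the $a^{(k)}$ coefficients yields $\tfrac{2(n-g)}{g(g-1)}+\tfrac{2n}{g}-2 = \tfrac{2(n-g)}{g-1}$ after clearing denominators; this gives the claimed $\tfrac{2g}{g-1}+\tfrac{2(n-g)}{g-1}a^{(k)}$. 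The third expression, $2-2a^{(k)}$, contains no $b^{(k)}$ and thus remains unchanged.

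The computations are routine algebraic manipulations, so the main ``obstacle'' is bookkeeping: making sure the common-denominator arithmetic is correct and that the $k=0$ case (where Claim \ref{cm:akbkrel} does not apply) is handled separately using $a^{(0)}=b^{(0)}=1$. No new machinery is needed beyond what has already been established.
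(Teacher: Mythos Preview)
Your proposal is correct and follows essentially the same approach as the paper: handle $k=0$ separately using $a^{(0)}=b^{(0)}=1$ from Lemmas~\ref{lem:aisum} and~\ref{lm:bsum2}, and for $k\neq 0$ substitute the relation of Claim~\ref{cm:akbkrel} into the three eigenvalue expressions and simplify. The algebraic steps you outline match those in the paper's proof.
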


\begin{proof}
The $k=0$ follows by simplifying Equation (\ref{eq:evals}) using $a^{(0)}=b^{(0)}=1$ from Lemmas \ref{lem:aisum} and \ref{lm:bsum2}.  Otherwise, notice that
\begin{align*}
2(g-1)\f{n}{g}b^{(k)}+2\f{n}{g}a^{(k)}+(2-2a^{(k)}) 
&= -2(g-1)\f{n}{g}\left(\f{g}{n(g-1)}+ \f{n-g}{n(g-1)}a^{(k)}\right)+2\f{n}{g}a^{(k)}+(2-2a^{(k)}) \\
&= 2\left[-1-\f{n-g}{g}a^{(k)}+\f{n}{g}a^{(k)} + 1 - \f{g}{g} a^{(k)}\right]\\
&= 0.
\end{align*}
Similarly 

\begin{align*}
-2\f{n}{g}b^{(k)}+2\f{n}{g}a^{(k)}+(2-2a^{(k)})
&= 2\f{n}{g}\left(\f{g}{n(g-1)}+ \f{n-g}{n(g-1)}a^{(k)}\right)+2\f{n}{g}a^{(k)}+(2-2a^{(k)}) \\
&= 2\left(\f{1}{g-1}+\f{n-g}{g(g-1)}a^{(k)}\right)+2\f{n}{g}a^{(k)}+(2-2a^{(k)}) \\
&= 2\left(1+\f{1}{g-1}\right) + 2\left(\f{n-g}{g(g-1)}+\f{n(g-1)}{g(g-1)}-\f{g(g-1)}{g(g-1)}\right)a^{(k)} \\
&= 2\left(1+\f{1}{g-1}\right) + 2\left(\f{ng-g^2}{g(g-1)}\right)a^{(k)} \\
&= 2\left(\f{g}{g-1}\right) + 2\left(\f{n-g}{g-1}\right)a^{(k)}.\\
\end{align*}
\hfill
\end{proof}

To complete the proof of Proposition \ref{prop:Ypsd}, we will show that the eigenvalues for the cases $k=1, ..., n-1$ are nonnegative.  We will use two lemmas.

\begin{lm}\label{cm:akLB}
$$\sum_{i=1}^d \cos\left(\f{\pi i j}{d}\right)\cos\left(\f{\pi i k}{d}\right) \geq -\bi_{j-k \text{ odd}}.$$
\end{lm}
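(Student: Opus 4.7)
My plan is to reduce the product of two cosines to single cosines via the product-to-sum identity $\cos A \cos B = \tfrac{1}{2}(\cos(A-B) + \cos(A+B))$ and then invoke Lemma \ref{lem:trig} on the resulting single-cosine sums. Concretely, I would first rewrite
$$\sum_{i=1}^d \cos\left(\f{\pi i j}{d}\right)\cos\left(\f{\pi i k}{d}\right) = \f{1}{2}\,S(j-k) + \f{1}{2}\,S(j+k),$$
where $S(m) := \sum_{i=1}^d \cos(\pi i m / d)$.

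The next step is to evaluate $S(m)$ for arbitrary integer $m$. The summand is $2d$-periodic in $m$, so I can reduce $m$ modulo $2d$. If $m \equiv 0 \pmod{2d}$, every term is $\cos(2\pi\cdot\text{integer}) = 1$, so $S(m) = d$. Otherwise $0 < m \bmod 2d < 2d$, and Lemma \ref{lem:trig} yields $S(m) = \f{-1 + (-1)^m}{2}$, which equals $0$ when $m$ is even and $-1$ when $m$ is odd.

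Finally, I would conclude with a parity case analysis, using the key observation that $j-k$ and $j+k$ always have the same parity. If both are even, then each of $S(j-k)$ and $S(j+k)$ lies in $\{0, d\}$, so their sum is nonnegative, which matches the bound $-\bi_{j-k \text{ odd}} = 0$. If both are odd, then neither $j-k$ nor $j+k$ can be a multiple of $2d$ (a multiple of $2d$ is even), so $S(j-k) = S(j+k) = -1$, giving a total of $-1$ and matching $-\bi_{j-k \text{ odd}} = -1$. In both cases the asserted inequality holds, typically with equality.

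I do not anticipate any real obstacle: the only subtlety is recognizing the degenerate cases $j \pm k \equiv 0 \pmod{2d}$, where $S$ jumps up to $d$ rather than taking the value $\f{-1+(-1)^m}{2}$ predicted by the literal statement of Lemma \ref{lem:trig}; but such jumps can only increase the left-hand side, thereby strengthening the bound.
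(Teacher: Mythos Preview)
Your proposal is correct and follows essentially the same approach as the paper: both apply the product-to-sum identity to reduce to sums of the form $\sum_{i=1}^d \cos(\pi i m/d)$, invoke Lemma~\ref{lem:trig} on these, and handle separately the degenerate cases $m\equiv 0\pmod{2d}$ where the sum equals $d$ rather than $\tfrac{-1+(-1)^m}{2}$. Your packaging via $S(m)$ and the explicit parity observation that $j-k$ and $j+k$ share parity is slightly cleaner than the paper's four-case table, but the content is identical.
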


\begin{proof}
By the product-to-sum identity,
\begin{align*}
\sum_{i=1}^d \cos\left(\f{\pi i j}{d}\right)\cos\left(\f{\pi i k}{d}\right)
&= \f{1}{2} \left( \sum_{i=1}^d \cos\left(\f{\pi i (j+k)}{d}\right) + \cos\left(\f{\pi i (j-k)}{d}\right)\right). \intertext{Applying Lemma \ref{lem:trig} and considering separately the cases where we cannot apply it (those that devolve down to summing $\cos(\theta i)$ over $i$ when $\theta$ is an integer multiple of $2\pi$):}
&= \begin{cases} \f{1}{4} \left[-1+(-1)^{j+k} + -1 + (-1)^{j-k}\right], & j-k, j+k\notin \{0, n\} \\
\f{1}{4} \left[-1+(-1)^{j+k} + 2d \right], & j-k\in \{0, n\}, j+k\notin \{0, n\}\\
\f{1}{4} \left[2d -1+(-1)^{j-k} \right], & j-k\notin \{0, n\}, j+k\in \{0, n\}\\
\f{1}{4} \left[2d +2d \right], & j-k, j+k \in \{0, n\}.
\end{cases} 
\intertext{Noting that $(-1)^{j+k}=(-1)^{j-k}$:}
&\geq \f{1}{2} \left(-1 + (-1)^{j-k}\right) \\
&= -\bi_{j-k \text{ odd}}.
\end{align*}

Note that $j-k, j+k\in \{0, n\}$ requires $j-k=0$ and $j+k=n$, i.e. $j=k=d.$  Since $j$ ranges from $1$ to $g-1,$ our final case is irrelevant if each group contains at least 2 vertices.

\hfill \end{proof}

\begin{lm}\label{cm:bjsum} For $g$ even,
$$\sum_{j=1}^{g-1}(g-j)\bi_{j-k \text{ odd}}  = 
 \f{1}{4}\left( g(g-1)+g(-1)^k\right).$$
\end{lm}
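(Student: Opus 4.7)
The plan is to split into cases based on the parity of $k$ and reduce both cases to quantities already computed in Lemma \ref{lem:sums}. Since $j-k$ is odd iff $j$ and $k$ have opposite parities, the indicator $\bi_{j-k \text{ odd}}$ equals $\bi_{j \text{ odd}}$ when $k$ is even and equals $\bi_{j \text{ even}}$ when $k$ is odd.

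First I would handle the even-$k$ case directly: the sum becomes $\sum_{j=1}^{g-1}(g-j)\bi_{j \text{ odd}}$, which Lemma \ref{lem:sums} evaluates to $g^2/4$. Then I would check that this matches the target expression $\f{1}{4}(g(g-1)+g(-1)^k)$ when $(-1)^k=1$: indeed $\f{1}{4}(g(g-1)+g)=\f{g^2}{4}$.

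For the odd-$k$ case, I would obtain $\sum_{j=1}^{g-1}(g-j)\bi_{j \text{ even}}$ by subtraction, using the trivial identity
$$\sum_{j=1}^{g-1}(g-j)=\sum_{j=1}^{g-1}j=\f{g(g-1)}{2}.$$
Combining with Lemma \ref{lem:sums} gives $\sum_{j=1}^{g-1}(g-j)\bi_{j \text{ even}}=\f{g(g-1)}{2}-\f{g^2}{4}=\f{g(g-2)}{4}$, which matches $\f{1}{4}(g(g-1)-g)$ when $(-1)^k=-1$.

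There is no real obstacle here; the entire argument is a short parity case split combined with the odd/even decomposition of $\sum_{j=1}^{g-1}(g-j)$. The only thing to be a bit careful about is bookkeeping the sign of $(-1)^k$ against the two cases and confirming the formula agrees in both. I would present the proof in a compact two-case form and conclude with a single unified closed form.
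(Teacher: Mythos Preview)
Your argument is correct. The paper takes a slightly different (but equally short) route: instead of splitting on the parity of $k$, it writes the indicator as
\[
\bi_{j-k\text{ odd}}=\tfrac{1}{2}\bigl(1-(-1)^{j-k}\bigr),
\]
expands, factors out $(-1)^{-k}$, and then invokes the \emph{second} identity of Lemma~\ref{lem:altsums}, namely $\sum_{j=1}^{g-1}(g-j)(-1)^j=-g/2$, together with the triangular sum $\sum_{j=1}^{g-1}(g-j)=g(g-1)/2$. This yields the closed form $\tfrac{1}{4}\bigl(g(g-1)+g(-1)^k\bigr)$ in one line, without any case analysis. Your version instead uses only the \emph{first} identity of Lemma~\ref{lem:sums} (the value $g^2/4$ for the odd-$j$ sum) plus the triangular sum, recovering the even-$j$ sum by subtraction; the trade-off is that you handle the two parities of $k$ separately and then verify they match the target formula. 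Both approaches are entirely elementary and of the same length; the paper's avoids the case split at the cost of one extra algebraic identity, while yours is perhaps more transparent about where each term comes from.
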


\begin{proof}
\begin{align*}
\sum_{j=1}^{g-1}(g-j)\bi_{j-k \text{ odd}} &= -\f{1}{2} \sum_{j=1}^{g-1} \left(-1+(-1)^{j-k}\right)(g-j)\\
 &= \f{1}{2} \sum_{j=1}^{g-1}(g-j) -\f{1}{2}\sum_{j=1}^{g-1} (-1)^{j-k}(g-j)\\
 &= \f{1}{2} (1+2+...+(g-1)) - \f{1}{2}(-1)^{-k} \sum_{j=1}^{g-1} (-1)^j (g-j). 
 \intertext{Using Lemma \ref{lem:altsums}:}
 &= \f{1}{2} \f{g(g-1)}{2} + \f{1}{2}(-1)^k \f{g}{2}\\
  &=  \f{1}{4}\left( g(g-1)+g(-1)^k\right).
\end{align*}

\hfill \end{proof}

\begin{proof}[Proof (of Proposition \ref{prop:Ypsd})]
To complete the proof of Proposition \ref{prop:Ypsd}, we need only show that the eigenvalues listed in Claim \ref{cm:evalsa} are nonnegative.  We thus need to show that $$ 2\left(\f{g}{g-1}\right) + 2\left(\f{n-g}{g-1}\right)a^{(k)}\geq 0 \hspace{5mm} \text{ and } \hspace{5mm} 2-2a^{(k)}\geq 0$$ for $k=1, ..., n-1.$  The latter is a direct consequence of Claim \ref{prop:nonneg}, since
$a_i\geq 0$ implies 
$$a^{(k)}=\sum_{i=1}^d a_i \cos\left(\f{2\pi i k}{n}\right)  \leq \sum_{i=1}^d a_i =1.$$

Hence we need only show that  $ 2\left(\f{g}{g-1}\right) + 2\left(\f{n-g}{g-1}\right)a^{(k)}\geq 0.$  Equivalently, we need to show that $$a^{(k)} \geq-\f{g}{n-g}.$$ This result holds since: 

\begin{align*}
\hspace{-4mm} a^{(k)} &= \sum_{i=1}^d a_i \cos\left(\f{2\pi i k}{n}\right) \\
&= \f{1}{n-g} \left[\left[\sum_{i=1}^d\left(2+ \f{4}{g} \sum_{j=1}^{g-1} (g-j) \cos\left(\f{\pi i j}{d}\right)\right)\cos\left(\f{2\pi i k}{n}\right) \right] - \left[1+ \f{2}{g} \sum_{j=1}^{g-1} (g-j) \cos\left(\f{\pi d j}{d}\right)\right]\cos\left(\f{2\pi d k}{n}\right)\right] 
\intertext{By Lemma \ref{lem:trig}:}
&= \f{1}{n-g} \left[\left[(-1)+(-1)^k+\f{4}{g}\sum_{j=1}^{g-1}(g-j)\sum_{i=1}^d \cos\left(\f{\pi i j}{d}\right)\cos\left(\f{2\pi i k}{n}\right) \right]- \left[1+ \f{2}{g} \sum_{j=1}^{g-1} (g-j)(-1)^j\right](-1)^k\right].
\intertext{By Lemma \ref{cm:akLB}:}
&\geq \f{1}{n-g} \left[\left[(-1)+(-1)^k-\f{4}{g}\sum_{j=1}^{g-1}(g-j)\bi_{j-k \text{ odd}}\right]- \left[1+ \f{2}{g} \sum_{j=1}^{g-1} (g-j)(-1)^j\right](-1)^k\right]. 
\intertext{By Lemmas \ref{lem:altsums} and \ref{cm:bjsum}:}
&= 
\f{1}{n-g} \left[\left[(-1)+(-1)^k-\f{1}{g} \left( g(g-1)+ g(-1)^k\right)\right]- \left[1- \f{2}{g} \f{g}{2}\right](-1)^k\right]\\
&= \f{1}{n-g} \left[(-1)+(-1)^k- (g-1)-(-1)^k \right]\\
&=-\f{g}{n-g}.
\end{align*}
\hfill \end{proof}

 \begin{proof}[Proof (of Proposition \ref{prop:f1})]
 Feasibility of $Y$ follows directly from Claims \ref{prop:f1} to \ref{prop:sum}, Claim \ref{prop:nonneg}, and Proposition \ref{prop:Ypsd}. \hfill
 \end{proof}

We can also compute the objective function value of $Y$.

\begin{thm}\label{thm:SDP1gap}
For $Y$ as above, there exists a constant $\tilde{c}_g$ (depending on $g$ but not $n$) such that $$\f{1}{2} \tr((D\otimes C_1 )Y)\leq \f{\tilde{c}_g}{n}. $$
\end{thm}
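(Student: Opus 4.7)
The plan is to directly compute $\frac{1}{2}\tr((D\otimes C_1)Y)$ by exploiting the two-level block structure of $Y$. Since $D=(J_g-I_g)\otimes J_{n/g}$, the matrix $D\otimes C_1$ has a minor block equal to $C_1$ at exactly those index pairs $(i,j)$ where $D_{ij}=1$, and is zero elsewhere; and by construction of $Y$ (in Equation (\ref{eq:Y})), the minor block $Y^{(ji)}$ at each such position is precisely $\frac{1}{2n}B$. Counting the ones in $D$ gives $g(g-1)\cdot(n/g)^2 = (g-1)n^2/g$, so
$$\tr\!\left((D\otimes C_1)Y\right) \;=\; \sum_{i,j:\,D_{ij}=1}\tr\!\left(C_1\cdot \tfrac{1}{2n}B\right) \;=\; \frac{(g-1)n^2}{g}\cdot \frac{1}{2n}\tr(C_1B).$$

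Next I would compute $\tr(C_1B)$ using the symmetric circulant basis. Because distinct basis elements $C_i,C_j$ with $i\ne j$ have disjoint supports (the $C_i$ for $i<d$ have entries on the $\pm i$ off-diagonal cyclic bands, and $C_d$ lives on the single $d$-band), the matrix inner product $\tr(C_1C_j)$ vanishes for $j\ne 1$, while $\tr(C_1^2) = 2n$ (each row of $C_1$ has two ones). Hence $\tr(C_1B)=2nb_1$, and the expression above collapses to
$$\tfrac{1}{2}\tr\!\left((D\otimes C_1)Y\right) \;=\; \frac{(g-1)n^2}{2g}\,b_1.$$

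The remaining and main task is to bound $b_1$ by a constant multiple of $1/n^3$. Using the definition $b_1=\frac{2g-(n-g)a_1}{n(g-1)}$ and the formula for $a_1$, one computes
$$2g-(n-g)a_1 \;=\; \frac{4}{g}\sum_{j=1}^{g-1}(g-j)\bigl[1-\cos(\pi j/d)\bigr],$$
since the constant terms telescope to $2g$ when all cosines equal $1$. The elementary inequality $1-\cos(x)\le x^2/2$, together with $d=n/2$, yields
$$2g-(n-g)a_1 \;\le\; \frac{2\pi^2}{g d^2}\sum_{j=1}^{g-1}(g-j)j^2 \;=\; \frac{8\pi^2 S_g}{g\,n^2},\qquad S_g:=\sum_{j=1}^{g-1}(g-j)j^2,$$
so $b_1 \le \frac{8\pi^2 S_g}{g(g-1)n^3}$. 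Substituting back gives
$$\tfrac{1}{2}\tr\!\left((D\otimes C_1)Y\right) \;\le\; \frac{(g-1)n^2}{2g}\cdot \frac{8\pi^2 S_g}{g(g-1)n^3} \;=\; \frac{4\pi^2 S_g}{g^2\,n},$$
and we may set $\tilde c_g := 4\pi^2 S_g/g^2$, which depends only on $g$.

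The main obstacle is the cosine estimate for $b_1$: the leading constants must cancel exactly (they do, essentially because of the normalization $\sum a_i=1$ encoded in the definition of the $a_i$), leaving only the quadratic-order defect weighted by the combinatorial factors $(g-j)$. This step generalizes the $g=2$ estimate $b_1\le 4\pi^2/n^3$ from Proposition \ref{prop:recall}, and the $g$-dependence of the prefactor $S_g$ is exactly what prevents improving the conclusion beyond $O_g(1/n)$ for fixed $g$ —but $O_g(1/n)$ is precisely the bound the theorem asks for.
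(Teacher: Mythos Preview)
Your proof is correct and follows essentially the same approach as the paper: both compute $\tfrac{1}{2}\tr((D\otimes C_1)Y)=\tfrac{(g-1)n^2}{2g}\,b_1$ from the block structure and $\tr(C_1B)=2nb_1$, then bound $b_1$ via $1-\cos x\le x^2/2$ to obtain $b_1\le \hat c_g/n^3$. Your intermediate identity $2g-(n-g)a_1=\tfrac{4}{g}\sum_{j=1}^{g-1}(g-j)[1-\cos(\pi j/d)]$ makes the cancellation of the leading terms a bit more transparent than the paper's computation, but the argument and even the final constant $\tilde c_g=4\pi^2 S_g/g^2$ coincide.
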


\begin{proof}
Recalling that $D=(J_g-I_g)\otimes J_{n/g},$ we see that $D\otimes C_1$ has block of zeros in each of the $g$ major $\f{n^2}{g} \times \f{n^2}{g}$ diagonal blocks of $Y$.  Hence the only places where $D\otimes C_1$  places a nonzero entry are exactly those where $Y$ has a $B$ block; on each such block, $D\otimes C_1$ has a block $C_1$.  There are $g(g-1)$ blocks of $B$ matrices, each containing $\f{n^2}{g^2}$ copies of $B$.  Accounting for the fact that $Y$ is scaled by $\f{1}{2n}$, the value of the objective function is thus

\begin{align*}
\f{1}{2} \tr((D\otimes C_1 )Y)
&= \f{1}{2}g(g-1)\f{n^2}{g^2} \f{1}{2n}\tr(C_1  B).
\intertext{Since $\tr(C_1 B)=2nb_1$: }
&=\f{1}{2}g(g-1)\f{n^2}{g^2}  b_1 \\
&= \f{1}{2} \f{g-1}{g} n^2 b_1.
\end{align*}
Recall that
 $$\cos\left(x\right) \geq 1-\f{1}{2}x^2.$$
 Hence
 \begin{align*}
 b_1 &=\f{2g-(n-g)a_1}{n(g-1)} \\
 &= \f{2g- \left[2 + \f{4}{g} \sum_{j=1}^{g-1} (g-j) \cos\left(\f{\pi  j}{d}\right)\right]}{n(g-1)} \\
  &\leq \f{2g- \left[2 + \f{4}{g} \sum_{j=1}^{g-1} (g-j) \left(1-\f{1}{2}\f{\pi^2j^2}{d^2} \right)\right]}{n(g-1)} \\
    &= \f{2(g-1)-\f{4}{g}\sum_{j=1}^{g-1}(g-j) +\f{2}{g}\f{\pi^2}{d^2}\sum_{j=1}^{g-1}(g-j)j^2}{n(g-1)}. 
    \intertext{Define $c_g=\f{2}{g}\pi^2\sum_{j=1}^{g-1}(g-j)j^2,$ a constant depending on $g$ but not $n$.}
    &= \f{2(g-1)-\f{4}{g}\f{(g-1)g}{2} + \f{c_g}{d^2}}{n(g-1)} \\
        &= \f{c_g}{d^2n(g-1).} 
        \intertext{Setting $\hat{c}_g=\f{4}{g-1}c_g,$ a constant depending on $g$ but not $n$:}
                &= \f{\hat{c}_g}{n^3}. \\
 \end{align*} 
 Putting everything together, 
 $$\f{1}{2} \tr((D\otimes C_1 )Y)= \f{1}{2}\f{g-1}{g} n^2 b_1 \leq \f{1}{2} \f{g-1}{g} n^2 \f{\hat{c}_g }{n^3},$$
from which the result follows. \hfill
\end{proof}

We can now prove our main theorem, which we restate below.
\begin{thm*}[Theorem \ref{thm:main}]
Let $z\in\N$.  Then the integrality gap of SDP  (\ref{eq:RedQAP1up}) is at least $z.$
\end{thm*}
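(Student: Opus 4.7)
The plan is to mirror the argument of Corollary \ref{cor:IG2}, using the simplicial instances with $g=2z$ groups instead of $g=2$. Specifically, take an instance on $n+1$ vertices partitioned into $g$ groups, where the first group has $\tfrac{n}{g}+1$ vertices and each of the other $g-1$ groups has $\tfrac{n}{g}$ vertices (intra-group distances are $0$, inter-group distances are $1$). With $r=s=1$, the submatrix $D[\beta]$ obtained by deleting vertex $1$ is exactly the simplicial cost matrix $(J_g-I_g)\otimes J_{n/g}$ on $n$ vertices, so the feasible solution $Y$ from Equation (\ref{eq:Y}) (of size $n^2\times n^2$) is a feasible solution for SDP (\ref{eq:RedQAP1up}) on this $(n+1)$-vertex instance, by the constraint-equivalence observation at the end of Section \ref{SDPTSP}.

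Next I would bound $\TSPOPT(D)$ and $\SDPOPT(D)$ separately. For the TSP cost, any Hamiltonian cycle must transition between groups at least $g$ times (it enters and leaves each of the $g$ groups), so $\TSPOPT(D)\ge g$; visiting each group contiguously achieves this, giving $\TSPOPT(D)=g=2z$. For the SDP cost, I would split the objective into the two pieces
\[
\tr\left((D[\beta]\otimes \tfrac{1}{2}C_1^{(n+1)}[\alpha])Y\right)\;+\;\tr\left(\mathrm{Diag}(\ol c)\,Y\right).
\]
The first piece is bounded by Theorem \ref{thm:SDP1gap}: since deleting row/column $1$ of $C_1^{(n+1)}$ yields a path graph's adjacency matrix, we have $C_1^{(n+1)}[\alpha]\le C_1^{(n)}$ entrywise, and $Y,D\ge 0$, so this term is at most $\tfrac{1}{2}\tr((D\otimes C_1^{(n)})Y)\le \tfrac{\tilde c_g}{n}$.

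The second piece is a direct computation. With $r=1$, the vector $C_1[\alpha,\{1\}]=e_1^{(n)}+e_n^{(n)}$ (the two neighbors of vertex $1$ in the cycle), and $D[\{1\},\beta]$ is the indicator vector of the $n(g-1)/g$ vertices outside group $1$. Hence $C_1[\alpha,\{1\}]D[\{1\},\beta]$ is a rank-one $n\times n$ matrix with exactly $2n(g-1)/g$ ones, so $\mathrm{Diag}(\ol c)$ is a $0/1$ diagonal matrix with $2n(g-1)/g$ ones on its diagonal. Since every diagonal entry of $Y$ is $\tfrac{1}{n}$, this contribution is exactly $\tfrac{2(g-1)}{g}$. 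Combining these two estimates yields
\[
\SDPOPT(D)\;\le\;\frac{2(g-1)}{g}+\frac{\tilde c_g}{n}.
\]

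Finally, the integrality gap is at least
\[
\frac{\TSPOPT(D)}{\SDPOPT(D)}\;\ge\;\frac{g}{\tfrac{2(g-1)}{g}+\tfrac{\tilde c_g}{n}}\;=\;\frac{g^2}{2(g-1)+\tfrac{g\tilde c_g}{n}},
\]
which tends to $\tfrac{g^2}{2(g-1)}=\tfrac{2z^2}{2z-1}>z$ as $n\to\infty$ (with $g=2z$). So for sufficiently large $n$ the ratio exceeds $z$. Essentially everything needed has been set up: the feasibility of $Y$ is Proposition \ref{prop:f1}, the $\tfrac{1}{n}$-decay of the main objective term is Theorem \ref{thm:SDP1gap}, and the only residual work is the bookkeeping calculation of $\tr(\mathrm{Diag}(\ol c)Y)$ and the asymptotic inequality. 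The mild nuisance is in checking the sizes of the groups so that $D[\beta]$ matches the simplicial form on $n$ vertices for which $Y$ was constructed; this forces the asymmetric choice where the $(n+1)$-st vertex is absorbed into one group.
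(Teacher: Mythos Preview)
Your proof is correct and follows essentially the same approach as the paper's. The only difference is that you compute $\tr(\mathrm{Diag}(\ol c)\,Y)$ exactly as $\tfrac{2(g-1)}{g}$, whereas the paper simply bounds $D[\{1\},\beta]\le (e^{(n)})^T$ to get $\tr(\mathrm{Diag}(\ol c)\,Y)\le 2$, yielding the slightly weaker asymptotic lower bound $z$ rather than your $\tfrac{2z^2}{2z-1}$.
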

\begin{proof}
We again consider the SDP  (\ref{eq:RedQAP}) corresponding to an instance on $n+1$ vertices. Let $s=r=1$ and consider an instance of the TSP on $n+1$ vertices with $g=2z$ groups of vertices.  Specifically, let groups $2, ...., g$ be equally sized, each of size $\f{n}{g}\in\N$, and let group 1 have one extra vertex, so that group one is of size $\f{n}{g}+1$.   Note also that $$\TSPOPT=g=2z$$ since each group of vertices must be visited at least once. 
Set $$Y=\f{1}{2n}\left[(J_g-I_g)\otimes J_{n/g}\otimes B + I_g\otimes J_{n/g} \otimes A+I_g\otimes I_{n/g}\otimes (2I_n-A)\right],$$ which is feasible for the SDP by our earlier computations.  Then the integrality gap is bounded below by
$$\f{\TSPOPT}{\SDPOPT} \geq \f{2z}{\tr((D[\beta]\otimes \f{1}{2}C_1^{(n+1)}[\alpha]+Diag(\ol{c}))Y) }.$$
To bound the right-hand side, we note that linearity of the trace operator implies
\begin{equation}\label{eq:MT1}\tr(D[\beta]\otimes \f{1}{2}C_1^{(n+1)}[\alpha]+Diag(\ol{c}))Y = \tr((D[\beta]\otimes \f{1}{2}C_1^{(n+1)}[\alpha] )Y)+\tr(Diag(\ol{c})Y).\end{equation}
We upper bound each term. 
First note that $$D[\beta]=D[\{2, ..., n+1\}]=(J_g-I_g)\otimes J_{(n/g)}.$$  Similarly, $$C_1^{(n+1)}[\alpha]=C_1^{(n)}-e_1e_n^T-e_ne_1^T \leq C_1^{(n)},$$ where $\leq$ is taken entry-wise.  By non-negativity, 
\begin{equation}\label{eq:MT2} \tr((D[\beta]\otimes \f{1}{2}C_1^{(n+1)}[\alpha] )Y) \leq \f{1}{2}\tr((((J_g-I_g)\otimes J_{(n/g)})\otimes(C_1^{(n)}))Y)\leq \f{\tilde{c}_g}{n},\end{equation} by Theorem \ref{thm:SDP1gap}.  As in  Theorem \ref{thm:SDP1gap} $\tilde{c}_g$ remains independent of $n$.

Second, consider $\tr(Diag(\ol{c})Y).$  We compute that
$$\ol{c}=vec(C_1^{(n)}[\al, \{1\}] D[\{1\}, \beta]) = vec((e^{(n)}_1+e^{(n)}_n) D[\{1\}, \beta]) \leq vec(e^{(n)}_1+e^{(n)}_n) (e^{(n)})^T.$$  We note that $(e^{(n)}_1+e^{(n)}_n) (e^{(n)})^T$ is an $n\times n$ matrix with $2n$ ones and the rest of the entries zero.  The $vec$ operator stacks the columns of this matrix, creating a vector in $\R^{n^2}$ with $2n$ ones and the remaining entries zero.  Finally, $Diag(\ol{c})$ creates a diagonal matrix with $2n$ ones on the diagonal and the remaining entries zero.  Since all diagonal entries of $Y$ are equal to $\f{1}{n},$ we have that
\begin{equation}\label{eq:MT3}
\tr(Diag(\ol{c})Y) \leq 2n*\f{1}{n}=2.
\end{equation}
Plugging Equations (\ref{eq:MT2}) and (\ref{eq:MT3}) into Equation (\ref{eq:MT1}) we obtain 
$$\tr\left(\left(D[\beta]\otimes \f{1}{2}C_1^{(n+1)}[\alpha]+Diag(\ol{c})\right)Y \right)\leq \f{\tilde{c}_g}{n}+2.$$  Hence the integrality gap is at least:
\begin{align*}
\f{\TSPOPT}{\SDPOPT} &\geq \f{2z}{\tr((D[\beta]\otimes \f{1}{2}C_1^{(n+1)}[\alpha]+Diag(\ol{c}))Y )}\\
&\geq \f{2z}{2+\f{\tilde{c}_g}{n}} \\
&= \f{2zn}{2n+\tilde{c}_g} \\
&\rar z,
\end{align*}
as $n\rar \infty.$
\hfill
\end{proof}

Gutekunst and Williamson \cite{Gut17}  show that the SDPs of Cvetkovi{\'c} et al.\ \cite{Cvet99} and de Klerk et al.\ \cite{Klerk08} have a counterintuitive non-monotonicity property: adding vertices (in a way that retains costs being metric) can arbitrarily decease the cost of some solutions to the corresponding SDPs.  This property contrasts with both TSP and subtour LP solutions:  monotonicity of the TSP can be seen through \emph{shortcutting} (see, e.g., Section 2.4 of Williamson and Shmoys \cite{DDBook}), while Shmoys and Williamson \cite{Shm90} show that the subtour LP is monotonic.   Corollary \ref{cor:IG2} shows that in the $g=2$ case the cost of the SDP, $$\SDPOPT(D)\leq 1+\f{c}{n},$$ decays arbitrarily close to 1 as the number of vertices in each group grew. Any $g=2$ instance with cost strictly greater than one thus shows that  SDP (\ref{eq:RedQAP}) is non-monotonic.  Moreover, such an instance implies the non-montonicity property in $\R^1$: the SDP can find a smaller optimal value by only adding more points to visit on the real line.

\begin{cor}
The SDP (\ref{eq:RedQAP}) is non-monotonic.
\end{cor}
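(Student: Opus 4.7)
The plan is to exploit exactly the observation laid out in the paragraph preceding the corollary: once we know from Corollary \ref{cor:IG2} that the SDP value on the $g=2$ cut semi-metric family satisfies $\SDPOPT(D) \le 1 + c/n$, it suffices to exhibit any single $g=2$ instance whose SDP value is strictly larger than $1$ in order to contradict monotonicity.

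First, I would fix a small baseline $g=2$ instance: take two groups of vertices on $\R^1$, with each group placed at a single point (say $0$ and $1$), with only a few vertices per group. The resulting cost matrix $D_{\text{small}}$ is a cut semi-metric (hence metric and symmetric), and the optimal TSP value on it is $2$, since any Hamiltonian cycle must cross the cut twice. To extract a numerical lower bound $\SDPOPT(D_{\text{small}}) > 1$, I would either appeal to the direct computational evaluations already reported in de Klerk and Sotirov \cite{Klerk12b} for small $n$, or argue structurally: the two ``diagonal'' constraints and the cut structure force a nontrivial fraction of the objective to be paid even when $n$ is small, so the asymptotic $1 + c/n$ upper bound cannot be tight for small $n$.

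Next, with such an $\varepsilon > 0$ with $\SDPOPT(D_{\text{small}}) \ge 1 + \varepsilon$ in hand, I would take $n$ large enough that $c/n < \varepsilon$ and let $D_n$ be the $g=2$ cut semi-metric on $n$ vertices, again embedded into $\R^1$ by placing each enlarged group at the same two points $\{0,1\}$. This is obtained from $D_{\text{small}}$ by \emph{adding} vertices while preserving the metric and symmetric cost structure. Applying Corollary \ref{cor:IG2} to $D_n$ gives
\[
\SDPOPT(D_n) \le 1 + \tfrac{c}{n} < 1 + \varepsilon \le \SDPOPT(D_{\text{small}}),
\]
which is the definition of non-monotonicity. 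As an added bonus, because the entire construction lives on the real line, non-monotonicity already manifests in $\R^1$.

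The main obstacle will be establishing the strict inequality $\SDPOPT(D_{\text{small}}) > 1$ rigorously. The upper-bound side is handled by the feasible matrix $Y$ from Equation (\ref{eq:Y}) specialized to $g=2$, but a matching \emph{lower} bound requires either an explicit dual feasible solution for the reduced QAP SDP or a small-scale numerical evaluation; unlike the asymptotic analysis driving Theorem \ref{thm:main}, this genuinely needs the SDP to be slack at small $n$. Once that baseline gap is confirmed, everything else follows from the Corollary \ref{cor:IG2} bound by simply choosing $n$ sufficiently large.
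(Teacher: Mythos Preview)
Your approach is essentially the paper's: exhibit a small $g=2$ instance with SDP value strictly above $1$, then invoke the $1+c/n$ upper bound from Corollary~\ref{cor:IG2} for a larger instance obtained by duplicating points. The paper resolves the ``main obstacle'' you flag by taking the \emph{smallest} possible instance, $n+1=3$ (two vertices in one group, one in the other), and observing that writing out the constraints of SDP~(\ref{eq:RedQAP}) explicitly forces every feasible $Y$ to have cost exactly $2$; no dual certificate or numerical experiment is needed at that size. So your plan is correct, and the missing ingredient is simply to commit to the three-vertex instance and do the two-by-two-block computation directly rather than appealing to \cite{Klerk12b} or a vaguer structural argument.
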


\begin{proof}[Proof (sketch)]
It  suffices to show a single two group instance with cost strictly greater than 1.  Consider such an instance on $n+1=3$ vertices, where the first group has two vertices and the second has one.  Explicitly writing down the constraints shows that any feasible solution to the SDP has cost 2.
\hfill
\end{proof}

Anstreicher \cite{Ans00} gives  another SDP relaxation of the QAP, and our simplicial instances also show that its integrality gap is unbounded.  In the case where $C=0$ and $e$ is an egeinvector of either data matrix in the QAP objective function, their SDP is equivalent to the projected eigenvalue bound of Hadley,  Rendl, and Wolkowicz \cite{Had92}.  Since $C_1^{(n)}e^{(n)}=2e^{(n)},$ it is equivalent to the projected eigenvalue bound when specialized to the TSP.  In this case, the SDP is in terms of an $n^2\times n^2$ matrix which we give block structure
$$Y=\begin{pmatrix} Y^{(11)} & Y^{(12)} & \cdots & Y^{(1n)} \\Y^{(21)} & Y^{(22)} & \cdots & Y^{(2n)} \\
\vdots & \vdots & \ddots & \vdots \\ Y^{(n1)} & Y^{(n2)} & \cdots & Y^{(nn)}  \end{pmatrix}$$
with $Y^{(ij)}\in \R^{n\times n}.$  The SDP is:
 \begin{equation}\label{eq:AQAP}
 \begin{array}{l l l}
\min & \f{1}{2} \tr\left(\left(D\otimes C_1^{(n)} \right)Y\right) & \\
\text{subject to} &\sum_{i=1}^n Y^{(ii)} &= I_n\\
 &\left( \tr(Y^{(ij)})\right)_{i, j=1}^n& = I_n\\
& \tr\left(Y F^TF\right)&=2n \\
& Y - \f{1}{n^2}J_{n^2} &\succeq 0 \\
& Y \geq 0, Y \in \mathbb{S}^{n^2\times n^2},& 
\end{array} \end{equation}
where $$F=\begin{pmatrix}  (e^{(n)})^T \otimes I_n \\ I_n \otimes (e^{(n)})^T \end{pmatrix}.$$

Let $Y'=vec(X)vec(X)^T$ for any $X\in \Pi_n$; that this is a valid relaxation can be seen by showing that $$Y'-\f{1}{n^2}\left(Y'J_{n^2}+J_{n^2}Y'\right)+\f{2}{n^2}J_{n^2}$$ is feasible and has the same objective value as $Y'$.  See Theorem 3.6 of Anstreicher \cite{Ans00} for more details.  

\begin{cor}
 SDP (\ref{eq:AQAP}) has an unbounded integrality gap.
\end{cor}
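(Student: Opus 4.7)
The plan is to show that the feasible solution $Y$ from Equation (\ref{eq:Y}), originally constructed for SDP (\ref{eq:QAPToRed}), is already feasible for SDP (\ref{eq:AQAP}). Once this is established, the same objective bound from Theorem \ref{thm:SDP1gap} applies: on the $g$-group simplicial instance (with $n$ vertices), $\SDPOPT(D)\leq \tilde{c}_g/n$ while $\TSPOPT(D)=g$, so the integrality gap is at least $gn/\tilde{c}_g$, which is unbounded as $n\to\infty$.

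The translation of constraints is mostly routine bookkeeping. Viewing $Y$ in block form with $n\times n$ blocks $Y^{(ij)}$, the constraints of SDP (\ref{eq:QAPToRed}) together with $Y\geq 0$ imply that each diagonal block $Y^{(ii)}$ is diagonal, that the diagonals of the off-diagonal blocks vanish, and that $\tr(Y^{(ii)})=1$. These immediately yield $\sum_i Y^{(ii)}=I_n$ and $(\tr(Y^{(ij)}))_{i,j}=I_n$. Since $F^TF = J_n\otimes I_n + I_n\otimes J_n$, both resulting pieces of $\tr(YF^TF)$ evaluate to $n$ under these conditions, giving $\tr(YF^TF)=2n$ for free, and entrywise non-negativity $Y\geq 0$ is inherited verbatim.

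The only genuinely new Anstreicher constraint is the Loewner inequality $Y-\f{1}{n^2}J_{n^2}\succeq 0$. The key identity to establish is $Ye^{(n^2)}=e^{(n^2)}$: decomposing $e^{(n^2)} = e^{(g)}\otimes e^{(n/g)}\otimes e^{(n)}$ and using the Kronecker structure of $Y$, together with $Ae^{(n)}=Be^{(n)}=2e^{(n)}$ (from Lemmas \ref{lem:aisum} and \ref{lm:bsum2}), the $I_g\otimes I_{n/g}\otimes(2I_n-A)$ summand annihilates $e^{(n^2)}$ while the two remaining Kronecker summands combine to give $e^{(n^2)}$ back. Given this identity, the Loewner inequality follows by a short spectral argument: $e^{(n^2)}$ is a common eigenvector of $Y$ and $\f{1}{n^2}J_{n^2}$ with eigenvalue $1$, so their difference annihilates $e^{(n^2)}$, while on the orthogonal complement $(e^{(n^2)})^{\perp}$ the matrix $\f{1}{n^2}J_{n^2}$ acts as zero and $Y\succeq 0$ by Proposition \ref{prop:Ypsd}.

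The main (and only mildly technical) step is the identity $Ye^{(n^2)}=e^{(n^2)}$ and its use in deducing the Loewner constraint; no further modification of the simplicial construction is needed, and once feasibility is secured, Theorem \ref{thm:SDP1gap} immediately delivers the unbounded integrality gap.
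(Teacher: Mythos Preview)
Your proposal is correct and follows essentially the same approach as the paper: verify that the constructed $Y$ satisfies the Anstreicher constraints (the only genuinely new one being $Y-\tfrac{1}{n^2}J_{n^2}\succeq 0$, handled via the eigenvector identity $Ye^{(n^2)}=e^{(n^2)}$ and a spectral argument), then invoke the existing objective bound. The only cosmetic difference is that the paper carries this out with the $g=2$ solution from Theorem~\ref{thm:feas} and appeals to Corollary~\ref{cor:UB1}, whereas you use the general-$g$ solution from Equation~(\ref{eq:Y}) and Theorem~\ref{thm:SDP1gap}; either suffices, and your implication-style derivation of the linear constraints (from those of SDP~(\ref{eq:QAPToRed}) plus $Y\ge 0$) is a slightly cleaner way to organize the bookkeeping than the paper's direct verification.
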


\begin{proof}
We show that $Y$, as defined in Theorem \ref{thm:feas}, remains feasible.  The objective function remains unchanged from SDP  (\ref{eq:QAPToRed}), so the analysis in Corollary \ref{cor:UB1} then implies that SDP (\ref{eq:AQAP}) has an unbounded integrality gap.

By definition of $Y$, $Y^{(ii)}=\f{1}{n}I_n$ for $i=1, ..., n$ so that $\sum_{i=1}^n Y^{(ii)} = I_n.$  Moreover $\tr(Y^{(ii)})=\f{1}{n}\tr(I_n)=1$ for $i=1, ..., n,$ while $A$ and $B$ have zero diagonal so that the trace of any minor off-diagonal block is zero.  Hence $\left( \tr(Y^{(ij)})\right)_{i, j=1}^n = I_n.$  

Next, note that $F^TF=J_n\otimes I_n + I_n \otimes J_n.$  Thus $F^TF$ has a $J_n+I_n$ on each minor diagonal block and an $I_n$ on each minor off-diagonal block.  Since $A$ and $B$ have zero diagonal, $\tr(AI_n)=\tr(BI_n)=0$ and the minor off-diagonal blocks make no contribution to $\tr\left(Y F^TF\right).$  Hence 
$$\tr\left(Y F^TF\right) = n \tr\left(\f{1}{n}I_n (J_n+I_n)\right) = \tr\left(I_n(J_n+I_n)\right)=2n.$$

By Claim \ref{cm:nn1} and the definition of $Y$, $Y$ is nonnegative and symmetric.  Hence it remains to show that $Y - \f{1}{n^2} J_{n^2} \succeq 0.$  We note that $e^{(n^2)}$ is an eigenvector of $Y.$ In the notation of Claim \ref{cm2}, it is the eigenvector when $j=0$ and $i=1.$  In  Claim \ref{cm2}, we showed that the corresponding eigenvalue of $nY$ was $2d=n,$ so that the corresponding eigenvalue of $Y$ is $1.$  Then
$$\left(Y - \f{1}{n^2}J_{n^2}\right)e^{(n^2)}=Ye^{(n^2)}-\f{1}{n^2}e^{(n^2)}\left(e^{(n^2)}\right)^Te^{(n^2)} = e^{(n^2)}-\f{1}{n^2}e^{(n^2)}n^2 = 0e^{(n^2)}.$$ Any other eigenvector $v$ of $Y$ is orthogonal to $e^{(n^2)}.$  Letting $\lambda$ denote the corresponding eigenvalue, 
$$\left(Y - \f{1}{n^2}J_{n^2}\right)v=Yv-\f{1}{n^2}e^{(n^2)}\left(e^{(n^2)}\right)^Tv =\lambda v-0v=\lambda v.$$  Thus $Y - \f{1}{n^2}J_{n^2}$ has the same spectrum as $Y$ except that one eigenvalue (the eigenvalue $1$ corresponding to eigenvector $e^{(n^2)}$) is shifted down by 1 (to eigenvalue $0$).  Consequently all eigenvalues of $Y - \f{1}{n^2}J_{n^2}$ are nonnegative, and $Y - \f{1}{n^2}J_{n^2}\succeq 0.$  
\hfill 
\end{proof}

\section{Conclusions}\label{sec:conc}
In this paper, we introduced simplicial TSP instances to show that the integrality gap of an SDP from  de Klerk and Sotirov \cite{Klerk12b}  is   unbounded, and moreover, nonmonotonic.  The simplicial TSP instances imply the unbounded integrality gap of every SDP relaxation of the TSP mentioned in the survey in Section 2 of Sotirov \cite{Sot12}, as well as the unbounded integrality gap of an SDP from Anstreicher \cite{Ans00}.  The simplicial instances  thus form a litmus test for new SDP relaxations of the TSP and motivate two questions.

\begin{ques}
Find an SDP relaxation of the TSP with finite integrality gap (without directly adding in the subtour elimination constraints of the subtour LP).
\end{ques}
It would suffice, for example, to find SDP constraints that implied scaled solutions lie in the Minimum Spanning Tree polytope.

\begin{ques}\label{q2}
What are the integrality gaps of any of the TSP SDP relaxations when the subtour elimination constraints are added?  Can any be shown to beat $\f{3}{2}$?
\end{ques}
To our knowledge, the only SDP for which Question \ref{q2} has been answered is for the SDP of  Cvetkovi{\'c} et al.\ \cite{Cvet99}: Goemans and Rendl \cite{Goe00} show that any feasible solution to the subtour LP gives an equivalent feasible to their SDP of the same cost; adding the subtour elimination constraints to this SDP thus effectively is the same as just solving the subtour LP.

 \section*{Acknowledgments}
This material is also based upon work supported by the National Science Foundation Graduate 
Research Fellowship Program under Grant No. DGE-1650441. Any opinions,
findings, and conclusions or recommendations expressed in this material are those of the
authors and do not necessarily reflect the views of the National Science Foundation.

\bibliography{bibliog} 

\begin{thebibliography}{10}

\bibitem{Ans00}
K.~M. Anstreicher.
\newblock Eigenvalue bounds versus semidefinite relaxations for the quadratic
  assignment problem.
\newblock {\em SIAM Journal on Optimization}, 11(1):254--265, 2000.

\bibitem{Bur06}
S.~Burer and D.~Vandenbussche.
\newblock Solving lift-and-project relaxations of binary integer programs.
\newblock {\em SIAM Journal on Optimization}, 16(3):726--750, 2006.

\bibitem{Chr76}
N.~Christofides.
\newblock Worst-case analysis of a new heuristic for the travelling salesman
  problem.
\newblock Technical report, Report 388, Graduate School of Industrial
  Administration, Carnegie-Mellon University, Pittsburgh, PA, 1976.

\bibitem{Cun86}
W.~H. Cunningham.
\newblock On bounds for the metric {TSP}.
\newblock Manuscript, School of Mathematics and Statistics, Carleton
  University, Ottawa, Canada, 1986.

\bibitem{Cvet99}
D.~Cvetkovi{\'{c}}, M.~{\v{C}}angalovi{\'{c}}, and
  V.~Kova{\v{c}}evi{\'{c}}-Vuj{\v{c}}i{\'{c}}.
\newblock Semidefinite programming methods for the symmetric traveling salesman
  problem.
\newblock In G.~Cornu{\'e}jols, R.~E. Burkard, and G.~J. Woeginger, editors,
  {\em Integer Programming and Combinatorial Optimization, 7th International
  {IPCO} Conference, Graz, Austria, June 9-11, 1999, Proceedings}, volume 1610
  of {\em Lecture Notes in Computer Science}, pages 126--136. Springer, Berlin,
  1999.

\bibitem{Dan54}
G.~Dantzig, R.~Fulkerson, and S.~Johnson.
\newblock Solution of a large-scale traveling-salesman problem.
\newblock {\em Journal of the {O}perations {R}esearch {S}ociety of America},
  2(4):393--410, 1954.

\bibitem{Klerk12}
E.~de~Klerk, F.~De~Oliveira~Filho, and D.~Pasechnik.
\newblock Relaxations of combinatorial problems via association schemes.
\newblock In M.~F. Anjos and J.~B. Lasserre, editors, {\em Handbook on
  Semidefinite, Conic and Polynomial Optimization}, volume 166 of {\em
  International Series in Operations Research \& Management Science}, pages
  171--199. Springer, Boston, 2012.

\bibitem{Klerk08}
E.~de~Klerk, D.~V. Pasechnik, and R.~Sotirov.
\newblock On semidefinite programming relaxations of the traveling salesman
  problem.
\newblock {\em SIAM Journal on Optimization}, 19(4):1559--1573, 2008.

\bibitem{Klerk12b}
E.~de~Klerk and R.~Sotirov.
\newblock Improved semidefinite programming bounds for quadratic assignment
  problems with suitable symmetry.
\newblock {\em Mathematical Programming}, 133(1):75--91, Jun 2012.

\bibitem{Goe00}
M.~Goemans and F.~Rendl.
\newblock Combinatorial optimization.
\newblock In H.~Wolkowicz, R.~Saigal, and L.~Vandenberghe, editors, {\em
  Handbook of Semidefinite Programming: Theory, Algorithms, and Applications},
  pages 343--360. Springer, Boston, 2000.

\bibitem{Goe95b}
M.~X. Goemans.
\newblock Worst-case comparison of valid inequalities for the {TSP}.
\newblock {\em Mathematical Programming}, 69:335--349, 1995.

\bibitem{Gray06}
R.~M. Gray.
\newblock Toeplitz and circulant matrices: A review.
\newblock {\em Foundations and Trends{\textregistered} in Communications and
  Information Theory}, 2(3):155--239, 2006.

\bibitem{Gut17}
S.~C. Gutekunst and D.~P. Williamson.
\newblock The unbounded integrality gap of a semidefinite relaxation of the
  traveling salesman problem.
\newblock {\em SIAM Journal on Optimization}, 28(3):2073--2096, 2018.

\bibitem{Had92}
S.~W. Hadley, F.~Rendl, and H.~Wolkowicz.
\newblock A new lower bound via projection for the quadratic assignment
  problem.
\newblock {\em Mathematics of Operations Research}, 17(3):727--739, 1992.

\bibitem{Held70}
M.~Held and R.~M. Karp.
\newblock The traveling-salesman problem and minimum spanning trees.
\newblock {\em Operations Research}, 18(6):1138--1162, 1970.

\bibitem{Hor91}
R.~A. Horn and C.~R. Johnson.
\newblock {\em Topics in Matrix Analysis}.
\newblock Cambridge University Press, Cambridge U.K., 1991.

\bibitem{Jeff08}
A.~Jeffrey and H.-H. Dai.
\newblock {\em Handbook of {M}athematical {F}ormulas and {I}ntegrals}.
\newblock Academic Press, New York, fourth edition, 2008.

\bibitem{Karp15}
M.~Karpinski, M.~Lampis, and R.~Schmied.
\newblock New inapproximability bounds for {TSP}.
\newblock {\em Journal of Computer and System Sciences}, 81(8):1665--1677,
  2015.

\bibitem{Koop57}
T.~C. Koopmans and M.~Beckmann.
\newblock Assignment problems and the location of economic activities.
\newblock {\em Econometrica: Journal of the Econometric Society}, pages 53--76,
  1957.

\bibitem{Lov91}
L.~Lov{\'a}sz and A.~Schrijver.
\newblock Cones of matrices and set-functions and 0--1 optimization.
\newblock {\em SIAM Journal on Optimization}, 1(2):166--190, 1991.

\bibitem{Povh09}
J.~Povh and F.~Rendl.
\newblock Copositive and semidefinite relaxations of the quadratic assignment
  problem.
\newblock {\em Discrete Optimization}, 6(3):231--241, 2009.

\bibitem{Serd78}
A.~Serdyukov.
\newblock On some extremal walks in graphs.
\newblock {\em Upravlyaemye Sistemy}, 17:76--79, 1978.

\bibitem{Shm90}
D.~B. Shmoys and D.~P. Williamson.
\newblock Analyzing the {H}eld-{K}arp {TSP} bound: A monotonicity property with
  application.
\newblock {\em Information Processing Letters}, 35(6):281--285, 1990.

\bibitem{Sot12}
R.~Sotirov.
\newblock {SDP} relaxations for some combinatorial optimization problems.
\newblock In M.~F. Anjos and J.~B. Lasserre, editors, {\em Handbook on
  Semidefinite, Conic and Polynomial Optimization}, pages 795--819. Springer
  US, Boston, MA, 2012.

\bibitem{DDBook}
D.~P. Williamson and D.~B. Shmoys.
\newblock {\em The {D}esign of {A}pproximation {A}lgorithms}.
\newblock Cambridge University Press, New York, 2011.

\bibitem{Wol80}
L.~A. Wolsey.
\newblock Heuristic analysis, linear programming and branch and bound.
\newblock In V.~J. Rayward-Smith, editor, {\em Combinatorial Optimization II},
  pages 121--134. Springer Berlin Heidelberg, Berlin, Heidelberg, 1980.

\bibitem{Zhao98}
Q.~Zhao, S.~E. Karisch, F.~Rendl, and H.~Wolkowicz.
\newblock Semidefinite programming relaxations for the quadratic assignment
  problem.
\newblock {\em Journal of Combinatorial Optimization}, 2(1):71--109, Mar 1998.

\end{thebibliography}
\bibliographystyle{abbrv}

\clearpage
\appendix
\section{Proofs of Trigonometric and Algebraic Identities}

In the appendix, we sketch pertinent results from Gutekunst and Williamson \cite{Gut17}.

\begin{lm*}[Lemma \ref{lem:trig}]
Let $n$ be even and $0<k<n$ be an integer.  Then $$\sum_{j=1}^d \cos\left(\f{2\pi jk}{n}\right) = \f{-1+(-1)^k}{2}.$$  
\end{lm*}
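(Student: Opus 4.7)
The plan is to reduce the identity to the vanishing of a geometric sum over $n$th roots of unity. Since $0<k<n$, the sum $\sum_{j=0}^{n-1} e^{2\pi jk\sqrt{-1}/n}$ equals $(1-e^{2\pi k\sqrt{-1}})/(1-e^{2\pi k\sqrt{-1}/n})=0$. Taking real parts yields
$$\sum_{j=0}^{n-1} \cos\!\left(\f{2\pi jk}{n}\right)=0, \qquad \text{i.e.}\qquad \sum_{j=1}^{n-1}\cos\!\left(\f{2\pi jk}{n}\right) = -1.$$

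Next I would exploit the symmetry $\cos(2\pi(n-j)k/n)=\cos(2\pi k - 2\pi jk/n)=\cos(2\pi jk/n)$, valid because $k$ is an integer. Under the substitution $j\mapsto n-j$, the tail $j=d+1,\dots,n-1$ of the sum matches the head $j=1,\dots,d-1$ termwise, and the single middle term at $j=d$ equals $\cos(\pi k)=(-1)^k$. Splitting the length-$(n-1)$ sum along this symmetry gives
$$-1 \;=\; 2\sum_{j=1}^{d-1}\cos\!\left(\f{2\pi jk}{n}\right) + (-1)^k,$$
so $\sum_{j=1}^{d-1}\cos(2\pi jk/n) = \f{-1-(-1)^k}{2}$.

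Finally I would add back the $j=d$ term to obtain the claimed identity:
$$\sum_{j=1}^{d}\cos\!\left(\f{2\pi jk}{n}\right) \;=\; \f{-1-(-1)^k}{2} + (-1)^k \;=\; \f{-1+(-1)^k}{2}.$$
The hypothesis $0<k<n$ is used exactly once, to guarantee the vanishing of the geometric sum; otherwise every step is symbolic. There is no real obstacle here — the only minor subtlety is ensuring the symmetry substitution is applied correctly, which is why I prefer to evaluate the full sum $\sum_{j=1}^{n-1}$ and then fold it, rather than trying to write a closed form for $\sum_{j=1}^{d}$ directly via a Dirichlet-kernel formula.
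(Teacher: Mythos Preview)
Your proof is correct. The approach, however, differs from the paper's: the paper quotes Lagrange's trigonometric identity (the Dirichlet kernel),
\[
\sum_{j=1}^m \cos(j\theta)= -\f{1}{2}+\f{\sin\!\left(\left(m+\f{1}{2}\right)\theta\right)}{2\sin\!\left(\f{\theta}{2}\right)},
\]
plugs in $\theta=\f{2\pi k}{n}$ and $m=d$, and then simplifies $\sin\!\left(\pi k+\f{\pi k}{n}\right)=(-1)^k\sin\!\left(\f{\pi k}{n}\right)$ to obtain the result directly. Your argument instead exploits the vanishing of the full geometric sum of $n$th roots of unity and folds $\sum_{j=1}^{n-1}$ around $j=d$ using the symmetry $\cos(2\pi(n-j)k/n)=\cos(2\pi jk/n)$. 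Your route is entirely self-contained and avoids invoking an external closed form; the paper's route is a one-line citation once that closed form is in hand. Both use $0<k<n$ at exactly the same place (to avoid the degenerate case $\theta\in 2\pi\Z$, equivalently a nonvanishing denominator in the geometric/Dirichlet sum), and neither is materially shorter than the other.
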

\begin{proof}
Our identity is a consequence of Lagrange's trigonometric identity (see, e.g., Identity 14 in Section 2.4.1.6 of Jeffrey and Dai \cite{Jeff08}), which states, for $0<\th<2\pi,$ that $$\sum_{j=1}^m \cos(j \theta) = -\f{1}{2} + \f{\sin\left(\left(m+\f{1}{2}\right)\theta\right)}{2\sin\left(\f{\theta}{2}\right)}.$$  Taking $\theta = \f{2\pi k}{n}$ and using $n=2d$, we obtain:
\begin{align*}
\sum_{j=1}^d \cos\left(\f{2\pi k}{n} j\right) &= -\f{1}{2} + \f{\sin\left( \pi k+ \f{\pi k}{n}\right)}{2\sin \f{\pi k}{n}} \\ 
&=- \f{1}{2} + (-1)^k \f{1}{2},
\end{align*}
where we recall that $\sin(\pi+\theta)=-\sin(\theta).$ \hfill
\hfill \end{proof}
\noindent Notice that when $k=0$ or $k=n$, the sum is $d$.

\begin{prop*}[Proposition \ref{prop:recall}]
\ \begin{enumerate}
\item $\sum_{i=1}^d a_i  = \sum_{i=1}^d b_i = 1.$  Equivalently, $a^{(0)}=b^{(0)}=1.$
\item $b^{(k)} =-\left(1-\f{2}{n}\right)a^{(k)} - \f{2}{n}.$
\item For $k=1, ..., d,$ $$a^{(k)} = \begin{cases} \f{d-2}{n-2}, & \text{ if } k=1 \\ -\f{2}{n-2}, & \text{ otherwise}. \end{cases}$$ 
\item $b_1 \leq \f{4\pi^2}{n^3}$
\end{enumerate}
\end{prop*}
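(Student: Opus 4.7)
The plan is to prove all four parts by direct computation, leveraging Lemma \ref{lem:trig} (which evaluates $\sum_{j=1}^d \cos(2\pi jk/n)$) as the key workhorse. Throughout, I would use the identity $\pi i/d = 2\pi i/n$ freely, since $n=2d$.

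For part (1), I would write $\sum_{i=1}^d a_i = \f{2}{n-2}\bigl(\sum_{i=1}^d \cos(2\pi i/n) + d\bigr)$ and apply Lemma \ref{lem:trig} with $k=1$ to get the inner sum equal to $-1$, yielding $\f{2(d-1)}{n-2}=1$. For $\sum b_i$, I would split off the $i=d$ term ($b_d = 2/n$) and use Lemma \ref{lem:trig} on $\sum_{i=1}^{d-1}\cos(2\pi i/n)$ (which equals $-1-\cos(\pi)=0$) to get $\sum b_i = \f{2(d-1)}{n}+\f{2}{n} = 1$.

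For part (2), the key observation is that for $i=1,\ldots,d-1$ one can solve $a_i = \f{2}{n-2}(\cos(2\pi i/n)+1)$ for $\cos(2\pi i/n)$ and substitute into $b_i = \f{2}{n}(1-\cos(2\pi i/n))$ to obtain the clean relation $b_i = \f{4}{n} - (1-\f{2}{n})a_i$. Checking $i=d$: the right-hand side gives $\f{4}{n}$ (since $a_d = 0$), while the actual $b_d = \f{2}{n}$, so the identity picks up a correction $-\f{2}{n}\bi_{i=d}$. Multiplying by $\cos(2\pi ik/n)$ and summing over $i$, I get
\[
b^{(k)} = \f{4}{n}\sum_{i=1}^d \cos\!\bigl(\tfrac{2\pi ik}{n}\bigr) - \bigl(1-\tfrac{2}{n}\bigr)a^{(k)} - \f{2}{n}\cos(\pi k).
\]
For $0<k<n$, Lemma \ref{lem:trig} collapses the first sum to $\f{-1+(-1)^k}{2}$, and combining with $\cos(\pi k)=(-1)^k$ gives precisely $-\f{2}{n}-(1-\f{2}{n})a^{(k)}$.

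For part (3), I would expand
\[
a^{(k)} = \tfrac{2}{n-2}\sum_{i=1}^d \cos\!\bigl(\tfrac{2\pi ik}{n}\bigr)\cos\!\bigl(\tfrac{2\pi i}{n}\bigr) + \tfrac{2}{n-2}\sum_{i=1}^d \cos\!\bigl(\tfrac{2\pi ik}{n}\bigr),
\]
apply the product-to-sum identity $2\cos A\cos B = \cos(A-B)+\cos(A+B)$ to the first sum, and invoke Lemma \ref{lem:trig}. The case $k=1$ must be handled separately because then $k-1=0$ forces $\sum_{i=1}^d \cos(0) = d$ rather than $(-1+(-1)^{k-1})/2$; this boundary case is precisely what makes $a^{(1)}$ larger than the generic value. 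For $k=2,\ldots,d$, both shifted sums fall in the range where Lemma \ref{lem:trig} applies and the algebra collapses to $-\f{2}{n-2}$. This case split is the only subtle point in the whole proposition. Part (4) is immediate: $b_1 = \f{2}{n}(1-\cos(2\pi/n))$, and the Taylor bound $1-\cos\theta \leq \theta^2/2$ with $\theta = 2\pi/n$ yields $b_1 \leq \f{2}{n}\cdot \f{2\pi^2}{n^2} = \f{4\pi^2}{n^3}$. I don't expect any serious obstacle here; the main thing to watch is keeping the $i=d$ boundary correction in part (2) and the $k=1$ boundary case in part (3) straight.
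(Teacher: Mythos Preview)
Your proposal is correct and follows essentially the same approach as the paper's proof: Lemma~\ref{lem:trig} for the cosine sums, the affine relation $b_i = \tfrac{4}{n}-(1-\tfrac{2}{n})a_i$ (with the $i=d$ correction) for part~(2), product-to-sum plus the $k=1$ boundary case for part~(3), and the Taylor bound on $1-\cos\theta$ for part~(4). The only cosmetic difference is that for $\sum b_i$ in part~(1) you compute the cosine sum directly, whereas the paper routes it through the $b_i$--$a_i$ relation; both are one-line computations.
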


\begin{proof}
For the first result, we use the identity
 $$\sum_{j=1}^d \cos\left(\f{2\pi jk}{n}\right) = \f{-1+(-1)^k}{2}$$   with $k=1$. 
 Then  
 \begin{align*}
\sum_{i=1}^d a_i &= \f{2}{n-2} \sum_{i=1}^d  \left(\cos\left(\f{\pi i}{d}\right)+1\right)= \f{2}{n-2}\left(-1 + d\right)= 1.
\end{align*} 
Similarly

\begin{align*}
\sum_{i=1}^d b_i &= \sum_{i=1}^{d-1} \left(\f{4}{n}-\left(1-\f{2}{n}\right)a_i\right) + \left(\f{2}{n}-\left(1-\f{2}{n}\right)a_i\right) = (d-1)\f{4}{n} + \f{2}{n} -\left(1-\f{2}{n}\right)\sum_{i=1}^d a_i= 1.
\end{align*} \hfill

For the second result,
\begin{align*}
b^{(k)} &=\sum_{i=1}^d    \cos\left(\f{2\pi ik}{n}\right)  b_i  \\
&=\left(\sum_{i=1}^{d-1}    \cos\left(\f{2\pi ik}{n}\right)  \left(\f{4}{n}-\left(1-\f{2}{n}\right)a_i\right)\right)+\cos\left(\f{2\pi dk}{n}\right)  \left(\f{2}{n}-\left(1-\f{2}{n}\right)a_d\right) \\
&=\f{4}{n} \left(\sum_{i=1}^{d}    \cos\left(\f{2\pi ik}{n}\right) \right)-\left(1-\f{2}{n}\right) \left(\sum_{i=1}^{d}    \cos\left(\f{2\pi ik}{n}\right) a_i\right)-\cos(\pi k)   \left(\f{2}{n}\right).
\intertext{Using Lemma \ref{lem:trig}:}
&=\f{4}{n}  \left(\f{-1+(-1)^k}{2}\right)-\left(1-\f{2}{n}\right)a^{(k)}-\left(-1\right)^k  \left(\f{2}{n}\right)\\
&=-\left(1-\f{2}{n}\right)a^{(k)} - \f{2}{n}.
\end{align*}

For the third result, we use the product-to-sum identity for cosines and then do casework using Lemma \ref{lem:trig}. We have:
\begin{align*}
a^{(k)} &=\sum_{i=1}^d  \cos\left(\f{2\pi i k}{n}\right) a_i\\
 &=\f{2}{n-2}  \sum_{i=1}^d \left( \cos\left(\f{2\pi ik}{n}\right) +\cos\left(\f{2\pi ik}{n}\right) \cos\left(\f{\pi i}{d}\right) \right)\\
 &=\f{2}{n-2}  \sum_{i=1}^d \left( \cos\left(\f{2\pi ik}{n}\right) + \f{1}{2}\cos \left(\f{2\pi i(k+1)}{n} \right)  +  \f{1}{2}\cos \left( \f{2\pi i(k-1)}{n}\right) \right).
\intertext{We  cannot apply Lagrange's trigonometric identity only when $k=1$, so that}
&= \begin{cases} \f{2}{n-2}  \left( \f{-1+(-1)^k}{2} + \f{-1 + (-1)^{k+1}}{4}  + \f{-1 + (-1)^{k-1}}{4}\right), & \text{ if } k>1 \\
\f{2}{n-2}  \left( -1+ 0+ \f{1}{2} d\right), & \text{ if } k=1 \end{cases} \\
&= \begin{cases} -\f{2}{n-2}, & \text{ if } k>1 \\
\f{d-2}{n-2} , & \text{ if } k=1. \end{cases} 
\end{align*} \hfill

Finally, the fourth result follows from Taylor series with remainder, 
 $$\cos\left(\f{\pi}{d}\right) = 1-\f{\pi^2}{2d^2} + \f{1}{4!}\f{\pi^4}{d^4} \cos\left(\xi_{1/d}\right) \geq 1-\f{\pi^2}{2d^2}, 
$$ 
where $\xi_{1/d} \in [0, \f{1}{d}].$  Hence $$b_1 = \f{2}{n}\left(1-\cos\left(\f{\pi}{d}\right)\right) \leq \f{2}{n} \f{\pi^2}{2d^2}=\f{4 \pi^2}{n^3}.$$ \hfill
\end{proof}

\end{document}